\documentclass[11pt]{article}

\usepackage[margin=1in]{geometry}
\usepackage{amsmath,amssymb,amsthm,mathtools}
\usepackage{microtype}
\usepackage{booktabs}
\usepackage{xcolor}
\usepackage{enumitem}
\usepackage[breakable]{tcolorbox}

\newtcolorbox{parameterbox}{
  breakable,
  colback=white,
  colframe=black,
  colbacktitle=black!5,
  coltitle=black,
  fonttitle=\bfseries,
  title={Choice of parameters.},
  title after break={Choice of parameters (continued).},
  boxrule=0.5pt,
  arc=0pt,
  outer arc=0pt,
  boxsep=0pt,
  left=8pt,
  right=8pt,
  top=7pt,
  bottom=7pt,
  toptitle=5pt,
  bottomtitle=5pt,
  before skip=12pt,
  after skip=12pt
}

\usepackage[colorlinks,citecolor=blue,linkcolor=magenta,bookmarks=true]{hyperref}
\usepackage{doi}
\usepackage{cleveref}

\usepackage{physics2}
\usephysicsmodule{braket,ab}
\let\abs\vab
\let\norm\Vab

\hypersetup{pdftitle={Results on Fully Tolerant Product-State Testing}}
\usepackage{thm-restate}

\newtheorem{theorem}{Theorem}[section]
\newtheorem{claim}{Claim}[section]
\newtheorem{protocol}{Protocol}[section]
\newtheorem{fact}{Fact}[section]

\newtheorem{lemma}[theorem]{Lemma}
\newtheorem{corollary}[theorem]{Corollary}

\theoremstyle{remark}
\newtheorem{remark}[theorem]{Remark}

\newcommand{\C}{\mathbb C}
\newcommand{\Sym}{\operatorname{Sym}}
\newcommand{\Tr}{\operatorname{Tr}}
\newcommand{\Fprod}{F_{\mathrm{prod}}}

\newcommand{\1}{\mathbf 1}

\usepackage[toc,page]{appendix}

\usepackage{preamble}
\def\D{\mathsf{D}}
\def\wu{\widehat{u}}
\def\poly{\mathrm{poly}}
\def\Pihi{\Pi^{\mathrm{hi}}}
\def\ran{\operatorname{ran}}
\def\vepcolor{\vep_{\mathrm{color}}}
\def\E{\operatorname{\mathbb{E}}}
\def\xbar{\bar{x}}
\def\rhobar{\bar{\rho}}
\def\pproj{p_{\mathrm{proj}}}
\def\vepproj{\vep_{\mathrm{proj}}}
\def\pclone{p_{\mathrm{clone}}}
\def\vepclone{\vep_{\mathrm{clone}}}
\def\swap{\operatorname{SWAP}}
\newcommand{\symgroup}{S}

\title{Fully tolerant product state testing and closest product state
  learning}
\author{Zongbo Bao \thanks{Centrum Wiskunde \& Informatica (CWI) and QuSoft, Amsterdam, \texttt{zongbo.bao@cwi.nl}} \and Jonas Helsen \thanks{Centrum Wiskunde \& Informatica (CWI) and QuSoft, Amsterdam, \texttt{jonas@cwi.nl}} \and Tuyen Nguyen \thanks{University of Technology Sydney, Sydney, \texttt{Tuyen.Q.Nguyen@student.uts.edu.au }}  }
\date{October 1st 2026}
\begin{document}
\maketitle
\begin{abstract}
    We address the problem of testing whether an unknown $n$-qudit state $\rho$ is $a$-close to a product state or $b$-far away from any product state, as measured in terms of state overlap. We provide a time-efficient algorithm to solve this problem that requires an $n$-independent number of copies of the unknown state. Our random coloring argument shows that for any unknown state there always exists a partition of $[n]$ into $q$ parts such that the square of the overlap with the closest product state with respect to this partition is only an additive factor $O(1/q)$ larger. This reduces the problem to tolerant testing of $q$ parties, with potentially growing local dimensions. Combining this insight with blockwise spectral projection arguments we can show that the natural $k$-copy generalization of Harrow \& Montanaro's \cite{harrow2013testing} product state test provides an efficient tolerant tester. \\

    We use the same random coloring and blockwise spectral projection techniques to obtain a substantially improved algorithm for closest product state learning.
    In particular we give an algorithm that takes in $\widetilde{O}\big((nd)^2\big)\, 2^{\widetilde{O}(1/\varepsilon^8)}$ copies of the unknown state and produces an $\varepsilon$-approximately optimal product state. The key technical components of this learner are a qudit variant of the Bakshi et al. \cite{bakshi2024learningclosestproductstate}  high-fidelity product state learning algorithm and a sampling technique based on Werner's optimal cloning channel.
\end{abstract}

\section{Introduction}
A generic quantum state on $n$ qudits, $\rho \in \D\left((\mathbb{C}^d)^{\otimes n}\right)$, has a description whose size grows, in general, exponentially with $n$. An interesting problem is therefore to understand when additional structure permits one to infer useful information about an unknown state without per se reconstructing it completely. Perhaps the most basic such structure is \emph{product structure}. A pure product state has the form 
\begin{equation}
    \ket{\pi} = \ket{\pi_1}\otimes\cdots\otimes\ket{\pi_n},
\end{equation} 
for each $\ket{\pi_i} \in \C^d$. Product states are a fundamental class of states, which arise as the zero-entanglement limit of many structured state families, as mean-field approximations in many-body physics~\cite{brandao2013product}, and as unentangled witnesses in quantum complexity theory~\cite{harrow2013testing}. They therefore provide a natural setting in which to ask how global structure can be inferred from copies of an unknown quantum state.

There are two basic levels at which one may ask this question. At the decision level, one would like to \emph{test} whether an unknown state has the desired structure, and more generally to \emph{tolerantly test} whether its maximum fidelity with a product state is at least $a$ or at most $b$, for $a>b$. This is an instance of \emph{quantum property testing}~\cite{montanaro2013survey}, whose goal is to determine structural properties of an unknown quantum state using as few copies as possible. Harrow and Montanaro~\cite{harrow2013testing} gave an efficient algorithm for testing product structure in the non-tolerant setting, which was originally analyzed in the non-tolerant setting $a=1$. At the reconstruction level, one would instead like to \emph{learn} an explicit product state whose fidelity with the unknown state is close to the optimum. Bakshi et al.~\cite{bakshi2024learningclosestproductstate} gave an time-and-query efficient algorithm for this learning problem. Learning is therefore a strictly more informative task: a learner outputs a classical description of a product-state witness and can consequently be used to solve the corresponding testing problem, whereas a tester need only output a decision.
The distinction between testing and reconstruction is a recurring theme in quantum learning theory~\cite{montanaro2013survey, she2022unitarypropertytestinglower}. Analogous testing and learning questions arise for other structured families of quantum states, including stabilizer states~\cite{Gross_2021,grewal2024improvedstabilizerestimationbell,arunachalam2024polynomialtimetoleranttestingstabilizer, bao2025toleranttestingstabilizerstates, mehraban2024improvedboundstestinglow, arunachalam2026optimalstabilizertestinglearning} and matrix product states~\cite{soleimanifar2022testing, cramer2010efficient}, where testing and learning can exhibit substantially different complexities. Our goal is to understand the complexity of both tasks for product states, and in particular how it scales with the number of parties, the local dimension, and the desired accuracy.

We begin by formally introducing the testing and learning problems studied in this work. Throughout the paper, we use fidelity as our measure of closeness between quantum states. For a density operator $\rho\in\D\p{\p{\C^d}^{\otimes n}}$, define its maximum product fidelity as
\begin{equation}
    F_{\mathrm{prod}}(\rho)
    :=
    \max_{\substack{
        \ket{\pi_i}\in \mathbb{C}^{d}, \norm{\ket{\pi_i}}_2=1
     }}
   \bra{\pi_1}\otimes\cdots \otimes \bra{\pi_n}\rho\ket{\pi_1}\otimes\cdots\otimes\ket{\pi_n}.
    \label{eq:def-product-fidelity}
\end{equation}
Now, we are ready to state the testing and learning problems.
\begin{center}
\fbox{
\begin{minipage}{\linewidth}

Let $n\ge1$, $d\ge2$, and let the unknown state $\rho\in\D((\mathbb{C}^d)^{\otimes n})$.

\medskip

\noindent
\textbf{Problem 1.1 (Fully tolerant product-state testing).}
Fix thresholds $1\ge a>b\ge0$. Promised that $\rho$ satisfies either $ F_{\mathrm{prod}}(\rho)\ge a$ or $F_{\mathrm{prod}}(\rho)\le b$, decide which is the case using copies of $\rho$.
\medskip

\noindent
\textbf{Problem 1.2 (Agnostic product-state learning).}
Given an unknown state $\rho\in\D((\mathbb{C}^d)^{\otimes n})$ and an accuracy parameter $\varepsilon>0$, output a classical description of a pure product state $\ket{\widehat{\pi}}$ such that $\bra{\widehat{\pi}}\rho\ket{\widehat{\pi}}\ge F_{\mathrm{prod}}(\rho)-\varepsilon$,
using copies of $\rho$.
\end{minipage}}
\end{center}
For both problems, the algorithms are required to succeed with probability at least $1-\delta$, where $\delta\in(0,1/2)$ denotes the failure probability.

\subsection{Results}
We have two main results, which we state here. We give a technical overview of how these results are achieved in the next subsection.
\paragraph{Fully Tolerant Testing} The study of product-state testing was initiated by Harrow and Montanaro~\cite{harrow2013testing}, who introduced a remarkably simple two-copy test for the standard, non-tolerant setting. Their result established that productness can be tested with sample complexity independent of both the number of parties and their local dimensions using a permutation test\footnote{In the two-copy setting, it is equivalent to the standard SWAP test}. 
Subsequent works~\cite{soleimanifar2022testing, beckey2026optimalanalysisproducttest} progressively sharpened the analysis of this test, showing the optimal acceptance curve in every region of $\Fprod(\rho)$. 
However, the two-copy test yields an acceptance probability approaching $1/2$ as $\Fprod(\rho)\to 0$~\cite{beckey2026optimalanalysisproducttest} for some families of states. Thus, the two-copy test yields a nontrivial tolerant gap only for a restricted range of thresholds $a$ and $b$. This issue can be addressed by treating the tolerant tester as estimating $\Fprod(\rho)$ to additive accuracy of $O(a-b)$~\cite{bakshi2024learningclosestproductstate}, albeit with sample complexity that grows polynomially with the system size. It is therefore unclear whether the constant-sample complexity of product-state testing can be retained under tolerance. Such robustness is nontrivial: classically, there are properties that admit constant-query testers in the standard setting but require $n^{\Omega(1)}$ queries under tolerant testing~\cite{fischer2005tolerant}. A related phenomenon also appears for quantum states under structural relaxations: product states are precisely matrix product states of bond dimension one, whereas testing matrix product states of larger bond dimension requires a sample complexity of $\Omega(\sqrt{n})$~\cite{soleimanifar2022testing}. This raises the natural question of whether constant sample complexity nevertheless persists for product-state testing in the \emph{fully tolerant} setting, where even the YES instances may be entangled. Our first result answers this question affirmatively: for arbitrary thresholds $a>b$, we give a fully tolerant product-state tester whose sample complexity is independent of both the number of parties and the local dimension.
\begin{restatable}[Fully tolerant product-state tester]{theorem}{ProductTester}\label{thm:ProductTester}
    There is an algorithm that solves \textbf{Problem 1.1} with probability at least $1-\delta$ with sample complexity $2^{\widetilde{O}((a-b)^{-5})}
        \log\frac{2}{\delta}$ and computational complexity $2^{\widetilde{O}((a-b)^{-5})}\,n\,\poly\!\left(\log\frac{2nd}{\delta}\right).$
\end{restatable}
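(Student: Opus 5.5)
The proof has three stages. First, a random coloring reduces the problem to a constant number of parties. Second, a $k$-copy permutation test certifies closeness to product across that coarse partition. Third, a final estimation step separates the two promise cases.

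\emph{Random coloring.} Fix $q$ and assign each qudit $i\in[n]$ an independent uniform color $c(i)\in[q]$. This yields a $q$-partite state with parts $S_1,\dots,S_q$ and local dimensions $d^{|S_j|}$. The coarser partition can only raise the optimal overlap: for any partition, $F_{\mathrm{prod}}^{(c)}(\rho)\ge F_{\mathrm{prod}}(\rho)$. The lemma I will establish is the reverse inequality in expectation over colorings, $\E_c\bigl[F^{(c)}_{\mathrm{prod}}(\rho)^2\bigr]\le F_{\mathrm{prod}}(\rho)^2+O(1/q)$.

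To prove it, write $\rho$'s best $q$-product witness as $\ket{\phi_1}\otimes\cdots\otimes\ket{\phi_q}$. The idea is that entanglement straddling a cut between two blocks is lost, while entanglement inside a block is invisible to the coarse test. Since each pair of qudits is split by the coloring with probability $1-1/q$, a second-moment (purity) calculation should bound how much fidelity the coarse partition can gain. I plan to express $F^2$ through $\Tr[\rho^{\otimes 2}\cdot]$ and average the swap operators over the colorings.

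\emph{Blockwise spectral projection.} For a fixed coloring, I run the natural $k$-copy test. It projects $\rho^{\otimes k}$ onto $\bigotimes_j \Sym^k\bigl((\C^{d})^{\otimes S_j}\bigr)$, whose acceptance probability is $\Tr\bigl[\bigotimes_j \Pi_{\Sym}^{(j)}\rho^{\otimes k}\bigr]$. The upper bound uses a blockwise spectral decomposition. For each block, split the reduced operators into their top-eigenvalue part and a remainder. Summing over the $q$ blocks, the acceptance probability should be at most roughly $F^{(c)}_{\mathrm{prod}}(\rho)^{k}+q\,2^{-\Omega(k)}$ plus a term controlled by the largest eigenvalues, giving acceptance $\lesssim F^{(c)}_{\mathrm{prod}}(\rho)^{\Omega(k)}+\eta$. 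For the lower bound, monotonicity gives acceptance $\ge \langle \pi^{\otimes k}|\rho^{\otimes k}|\pi^{\otimes k}\rangle\ge a^{k}$.

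\emph{Parameters and the final step.} With $\epsilon=a-b$, choose $q=\Theta(\epsilon^{-1})$ (or $\epsilon^{-2}$), and choose $k$ polynomial in $1/\epsilon$ so that the YES value $a^{k}$ exceeds the NO bound. In the NO case, Markov's inequality on the coloring lemma gives $F^{(c)}\le b+\epsilon/2$ with constant probability. The two promise cases then differ in acceptance probability by about $a^k-(b+\epsilon/2)^{\Omega(k)}$, which can be as small as $2^{-\widetilde O(\epsilon^{-c})}$. Repeating the test $2^{\widetilde O(\epsilon^{-5})}\log(2/\delta)$ times resolves this gap, and averaging over random colorings handles the coloring randomness. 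For computation, the symmetric projection on $k$ copies of each block is implemented via a generalized swap or symmetrization circuit. This costs $\poly(k!)\cdot n\log d$ gates, which fits within $2^{\widetilde O(\epsilon^{-5})}\,n\,\poly(\log(nd/\delta))$.

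The main obstacle is the NO-case upper bound on acceptance. It must be independent of the block dimensions $d^{|S_j|}$, which can be huge. I would first attempt to prove it by induction over the blocks, projecting each block onto its high-eigenvalue subspace $\Pihi$ and bounding the low part by $\lambda_{\text{thresh}}^{k}\cdot\binom{k}{\cdot}$. I expect this bookkeeping to be what produces the exponent $5$.
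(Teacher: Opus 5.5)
Your overall architecture (random coloring, blockwise spectral projection, amplification) matches the paper's, but there is a genuine gap in how the coloring enters your algorithm. You run the blockwise test $Q_{k,C}$ on a randomly drawn coloring and average over $C$. The only YES guarantee is acceptance at least $a^k$, which is exponentially small in $k$. Markov's inequality bounds the probability of a bad coloring only by a constant that does not decay with $k$, and a bad coloring can make a NO instance pass with certainty. Concretely, take $b=1/2$, $a=0.6$, and $\rho=\ket{\Phi}\!\bra{\Phi}_{12}\otimes(\ket{0}\!\bra{0})^{\otimes(n-2)}$ with $\ket{\Phi}$ a Bell pair. Then $\Fprod(\rho)=1/2$. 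Yet with probability $1/q$ qubits $1,2$ share a color; $\rho$ is then exactly product across the blocks and $Z_{k,C}(\rho)=1$. Hence $\mathbb E_C Z_{k,C}(\rho)\ge 1/q$ for every $k$. Meanwhile, for large $n$ the YES state $a\ket{0^n}\!\bra{0^n}+(1-a)I/2^n$ has blockwise acceptance at most roughly $a^k+k^{-q}$ whenever all blocks are large. Its averaged acceptance is therefore far below $1/q$ once $k$ is as large as your soundness step requires. The missing observation is the monotonicity $Q_k\le Q_{k,C}$ for every $C$. Run the registerwise test, one symmetric-subspace measurement per qudit, so that $Z_k(\rho)\le\min_C Z_{k,C}(\rho)$. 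The coloring lemma is then needed only as an existence statement ($F_C(\rho)^2\le\Fprod(\rho)^2+3/q$ for some $C$), and no coloring is ever sampled. Taking the minimum of estimated acceptances over a few sampled colorings would also work; averaging does not.

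Your soundness bound also has the wrong form. An estimate like acceptance $\lesssim F_C^{\Omega(k)}+\eta$, or $F_C^k+q2^{-\Omega(k)}$, cannot be compared with $a^k$. An additive error that is not exponentially small in $k$ swamps the gap, and an exponent $\gamma k$ with $\gamma<1$ lets $(b+\epsilon/2)^{\gamma k}$ exceed $a^k$. What is needed is control of the $k$-th root, $Z_{k,C}(\rho)^{1/k}\le\dim\Sym^k(\C^R)^{q/k}F_C(\rho)+\frac qR\dim\Sym^k(\C^R)^{1/k}$. Your plan lacks the ingredients that give this.

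First, the high/low split cannot be done term by term: a bound like $\rho\le2(\Pihi_c\rho\Pihi_c+\Pi^{\mathrm{lo}}_c\rho\Pi^{\mathrm{lo}}_c)$ costs $2^k$, i.e.\ a factor $2$ in $Z^{1/k}$. The paper uses the Haar representation $\Tr(\Pi_{\Sym}A^{\otimes k})=\dim\Sym^k\int\langle u|A|u\rangle^k\,du$. This makes $A\mapsto\Tr(\Pi_{\Sym}A^{\otimes k})^{1/k}$ subadditive (by Minkowski) and dependent only on the spectrum. The positive decomposition $A=A^{1/2}\Pihi_cA^{1/2}+A^{1/2}\Pi^{\mathrm{lo}}_cA^{1/2}$ then peels off one color at a time with no cross terms.

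Second, the retained core is bounded by $\prod_c\dim\Sym^k(E_c)\,F_C^k$, not by $F_C^k$. This factor is tight, and driving $\dim\Sym^k(\C^R)^{q/k}$ close to $1$ (with $R\approx q/\epsilon$ dictated by the low part) is what pins down $k\approx\epsilon^{-5}$ up to logarithms. The discarded part needs the dimension-free bound $\Tr(\Pi_{\Sym}\sigma_c^{\otimes k})\le\dim\Sym^k(\C^R)/R^k$, obtained by weak majorization from $\sigma_c\le I/R$ and $\Tr\sigma_c\le1$, where $\sigma_c$ is the marginal of the discarded part. A count of terms would depend on the huge block dimension.

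Finally, in the coloring lemma $F_C^2$ is not itself a two-copy trace. Averaging swaps only yields $\mathbb E_C[Z_{2,C}-Z_2]\le 1/q$. The other half, $Z_{2,C}-Z_2\ge(F_C^2-\Fprod^2)/3$, requires the pure-state product-test inequality $1-Z_2(v)\ge(1-\Fprod(v)^2)/3$. It is applied to each block optimizer conditioned on the other blocks, refining one color at a time.
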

The key point is that full tolerance does not destroy the constant-sample phenomenon of product testing. A related constant-sample result can be obtained from the analysis of~\cite{she2022unitarypropertytestinglower}, which extends the inductive argument of~\cite{soleimanifar2022testing} to general $k$-copy permutation tests, but only for bipartite systems. In contrast, we propose a new analysis technique that applies to arbitrary $n$-partite systems and arbitrary local dimension, and yields a dimension-independent fully tolerant tester in this general setting.

\paragraph{Agnostic Learning} Our second result gives a direct fixed-parameter tractable algorithm for agnostic learning of product states.
\begin{restatable}[An FPT Agnostic Learner]{theorem}{ProductLearner}\label{thm:ProductLearner}
    There is a quantum algorithm that solves \textbf{Problem 1.2}, with probability at least $1-\delta$, with sample complexity of
    $2^{\widetilde O\p{\vep^{-8}}}(nd)^2
    \poly\log\f{nd}{\delta}$,
    and
    $2^{\widetilde O\p{\vep^{-8}}}n^3d^2
    \poly\log\f{nd}{\delta}$ computational complexity.
\end{restatable}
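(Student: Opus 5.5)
Our plan for \Cref{thm:ProductLearner} is to reduce agnostic learning to a high-fidelity learning problem on a small number of ``super-parties'' and then to unpack the result qudit by qudit. We carry out the steps in the following order.

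\textbf{Step 1 (random coloring).} We invoke the random coloring argument from the abstract. For $q=\Theta(\vep^{-2})$ there is a partition $[n]=S_1\sqcup\cdots\sqcup S_q$ such that
\begin{equation}
F_{S}(\rho)^2 \le \Fprod(\rho)^2+O(1/q),
\end{equation}
where $F_S$ denotes the best overlap with states that are product across the parts $S_j$ but arbitrary inside each part. A uniformly random coloring achieves this in expectation, so repeating over $2^{\widetilde O(\mathrm{poly}(1/\vep))}$ colorings and keeping the best candidate handles the choice of partition. The inflation $O(1/q)$ is of order $\vep^2$ in squared fidelity, which is compatible with an $\vep$ loss in fidelity.

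\textbf{Step 2 (blockwise spectral projection).} Next we locate the optimal product state $\ket{\pi}=\bigotimes_i\ket{\pi_i}$ inside a low-dimensional space on each part. We take $k=\poly(1/\vep)$ copies and apply, on each block $S_j$, a projection onto the top eigenspaces of the reduced density matrices $\rho_{S_j}$, keeping eigenvalues above a threshold of order $\poly(\vep)$. The goal is to show that the projected state still has overlap at least $\Fprod(\rho)-\vep/3$ with a product state, while each block now carries effective dimension $D\le\poly(1/\vep)$.

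\textbf{Step 3 (Werner cloning to reach high fidelity).} This step conditions on the optimal product state. We prepare many copies via Werner's optimal $k\to m$ cloning channel on the symmetric subspace, then postselect with a $q$-party product-type test. The conditioned state then has fidelity $1-\eta$ with a product state, and the postselection succeeds with probability $2^{-\widetilde O(\vep^{-8})}$; this is the source of the exponential factor. The exponent is determined by the number of parties $q\sim\vep^{-2}$ raised to a power coming from the dimension bound $D$ and the cloning fidelity loss.

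\textbf{Step 4 (qudit high-fidelity learning).} On the conditioned high-fidelity samples we run the qudit variant of the Bakshi et al.\ learner, which works one qudit at a time. For each $i$ it estimates the top eigenvector $\ket{\widehat\pi_i}$ of the single-qudit marginal by tomography to accuracy $\vep/n$ in trace distance. This costs $O(d^2 n^2/\vep^2\cdot\log(nd/\delta))$ samples in total, which accounts for the $(nd)^2$ factor. A union bound over the $n$ qudits gives the $\log(nd/\delta)$ term. Performing eigendecompositions and checks across the $n$ qudits gives the $n^3d^2$ runtime. Finally we estimate $\bra{\widehat\pi}\rho\ket{\widehat\pi}$ for every candidate with fresh copies and output the best one. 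Since $\ket{\widehat\pi}$ is a product state, this estimate can be obtained efficiently by local measurements.

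\textbf{Main obstacle.} We expect Step 3 to be hardest. The difficulty is to show that cloning followed by postselection genuinely produces a state close to the \emph{optimal} product state, rather than to some suboptimal product component. We would attack this with a symmetric-subspace (de Finetti) decomposition: write the $k$-copy state as a mixture over product states, then argue that the postselection weight on near-optimal product states dominates by a factor $2^{-\mathrm{poly}(1/\vep)}$. Making this dominance quantitative, uniformly over $n$ and $d$, is where the blockwise dimension reduction from Step 2 is essential.
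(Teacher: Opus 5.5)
Your outline has the right components: random coloring, blockwise spectral projection, Werner cloning, the qudit high-fidelity learner, and final selection by estimated fidelity. But Steps~1 and~3, as you state them, do not deliver what the later steps need.

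In Step~1 you control the wrong quantity. The learner never needs $F_C(\rho)\approx\Fprod(\rho)$; at the end it uses only the trivial inequality $F_C(\rho)\ge\Fprod(\rho)$. What it needs is that the fixed block optimizer $\ket{x_C}=\bigotimes_c\ket{x_c}$ is nearly product \emph{inside} the blocks, in aggregate: $\sum_c\bigl(1-\Fprod(x_c)^2\bigr)\le\vepcolor=O(\vep)$. This is what lets every block vector handed to the high-fidelity learner meet its promise $\Fprod(v_c)\ge 5/6+\vep_{\mathrm{hfl}}$, and what lets the per-block outputs assemble into a near-optimal product state.

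A value-level bound does not imply this. Take $q$ two-qubit blocks, $\ket{a}=\ket{01}^{\otimes q}$, $\ket{b}=\ket{\Phi^+}^{\otimes q}$ and $\rho=(\tfrac12-\eta)\ket{a}\bra{a}+(\tfrac12+\eta)\ket{b}\bra{b}$. The aligned coloring has $F_C(\rho)-\Fprod(\rho)\le 2\eta$, which is arbitrarily small. Yet its unique block optimizer is $\ket{b}$, every block has product fidelity $1/2$, and every blockwise-closest product approximation of $\ket{b}$ has fidelity at most $2^{-q}$ with $\rho$.

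This is why the paper proves a strictly stronger coloring statement, \Cref{cor:direct-good-coloring}. Let $\Lambda_{C,c}$ be the operator that refines block $c$ into single registers while keeping the other blocks coarse. The paper shows $\E_C\sum_c\Lambda_{C,c}\le I/q$ for $q\ge 13$, using a finer parity count on the joint swap eigenspaces: the eigenvalue now counts the blocks containing negative swap signs, so the tester's bound of $1/q$ on the all-even probability is not enough. It combines this with $\Tr(\Lambda_{C,c}\rho^{\otimes 2})\ge\tfrac{1}{3}F_C(\rho)^2\bigl(1-\Fprod(x_c)^2\bigr)$. Markov's inequality then requires $q=\Theta(\vep^{-3})$, not $\Theta(\vep^{-2})$.

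Step~3 has no working selection mechanism, which is exactly the obstacle you flag and leave open. A product-type test across the $q$ parties cannot favour vectors near the optimizer. The cloners' output lies in $\bigotimes_c\Sym^L(E_c)$, and each component $\bigotimes_c(\ket{v_c}\bra{v_c})^{\otimes L}$ of its measure-and-prepare approximation is exactly product across parties, so it is accepted whatever $v$ is. A qudit-level test would reward productness of $v$, not closeness to the optimizer of $\rho$.

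The paper conditions on nothing. It sends one projected copy through $1\to L$ Werner cloners on the spaces $E_c$. By \cite{Chiribella2011}, the output is within $\sum_c 2d_c/(L+d_c)$ of $\E_{v\sim\nu_C}\bigl[\bigotimes_c(\ket{v_c}\bra{v_c})^{\otimes L}\bigr]$, where $\nu_C$ has density $\prod_c d_c\,\bra{v}\rhobar_C\ket{v}$ against product Haar measure. The POVM elements $M_c$ of the caps $|\braket{\xbar_c}{v}|^2\ge 1-\vepclone/q$ satisfy $M_c\succeq(\vepclone/q)^{d_c-1}\ket{\xbar_c}\bra{\xbar_c}$. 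Hence $\Pr_{\nu_C}\bigl[\sum_c(1-|\braket{\xbar_c}{v_c}|^2)\le\vepclone\bigr]\ge\pproj(\vepclone/q)^{\sum_c(d_c-1)}$. With $q\max_c d_c=O(\vep^{-3}\cdot\vep^{-5})$, this is where $2^{\widetilde O(\vep^{-8})}$ comes from. A good sample is found only by repeating $O(\pclone^{-1}\log(1/\delta))$ trials and keeping the candidate with the best estimated fidelity.

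You also omit the conversion of $\sum_c(1-|\braket{z_c}{x_c}|^2)=O(\vep)$ into an $O(\vep)$ loss in $\bra{z}\rho\ket{z}$. The paper does this using first-order optimality of $\ket{x_C}$ and \cite[Corollary~4.3]{bakshi2024learningclosestproductstate}.

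Finally, your Step~4 is not the Bakshi et al.\ learner, and it would fail. For the normalization of $\sqrt{1-\eta}\ket{0^n}+\sqrt{\eta}\ket{+}^{\otimes n}$, the top eigenvector of each one-qudit marginal is tilted by about $\eta/2$. Their product therefore has fidelity about $(1-\eta)e^{-n\eta^2/4}$, although $\Fprod\ge 1-\eta$. The actual learner recurses with local optimization and uses only $\widetilde O(nd/\vep_{\mathrm{hfl}})$ copies. The $(nd)^2$ in the theorem instead comes from implementing the cloners, via an antisymmetric state on $E_c$ and singular-vector amplification. That cost is quadratic in $L$, and $L\gtrsim ndq/\vepclone$ is needed to feed the learner.
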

The same problem was studied by Bakshi et al.~\cite{bakshi2024learningclosestproductstate}, who achieve additive fidelity error $\vep$ with sample and computational complexity $n^{\poly(1/\vep)}$. Thus, for constant $\vep$, their algorithm runs in polynomial time, but the polynomial degree grows with $1/\vep$ and can become large even when $\vep$ is a small constant. In contrast, our learner achieves the same additive approximation guarantee with improved sample and computational complexity. Our new bound thereby moves the exponent dependence of $1/\vep$ into a prefactor while keeping the polynomial degree in the system size fixed. Consequently, for sufficiently large $n$ and small constant $\vep$, we obtain an asymptotic improvement in the dependence on the system size. Moreover, we also give the analysis for arbitrary local dimension $d\ge 2$ and it turns out that the sample and computational complexity have polynomial dependence on $d$. \footnote{The main result of Bakshi et al.~\cite{bakshi2024learningclosestproductstate} is stated for qubits. We do not view this as an inherent limitation of their approach; their techniques plausibly extend to higher local dimensions.}

Our computational complexity bound is also consistent with the hardness result of Bakshi et al.~\cite{bakshi2024learningclosestproductstate}, who show that estimating the optimal product-state fidelity to inverse-polynomial accuracy is $\mathsf{NP}$-hard. Consequently, unless $\mathsf{NP}\subseteq\mathsf{BQP}$, one cannot in general obtain an algorithm whose running time is polynomial in both the system size and $1/\vep$. Our result should therefore be viewed as a fixed-parameter tractable algorithm parameterized by $1/\vep$, with the dependence on the system size remaining polynomial.

\subsection{Technical Overview}
Although our tester and learner have different outputs, they rely on two complementary structural techniques that control the dependence on the number of subsystems $n$ and the local dimension $d$. The first technique, \textit{random coloring}, groups the subsystems into a bounded number of blocks while approximately preserving the optimal product fidelity. However, the blocks may still have large Hilbert-space dimensions. The second technique, \textit{blockwise spectral projection}, addresses this remaining difficulty by restricting each block to a high-eigenvalue subspace of its marginal, whose dimension depends only on the accuracy parameters. We first explain these shared techniques and then describe their respective roles in the testing and learning algorithms.

\paragraph{Random Coloring.} We independently assign each qudit a uniformly random color in $[q]$, obtaining a coloring $C:[n]\to[q]$ and the induced blocks $B_c=C^{-1}(c)$.
Let $F_C(\rho)$ denote the maximum fidelity of $\rho$ with a pure state that is product across these blocks:
\begin{equation}
    F_C(\rho) := \max_{\ket{\pi} = \ket{\pi_1} \otimes \dots \otimes \ket{\pi_q}} \bra{\pi}\rho\ket{\pi}.
\end{equation}
The question is whether we can choose the blocks so that this additional freedom barely increases the optimum. Our random coloring argument answers this affirmatively: for any $\rho$, there always exists a good partition $C$ such that
\begin{equation}
    \Fprod(\rho)^2 \leq F_C(\rho)^2 \leq \Fprod(\rho)^2 + \frac{3}{q}. \label{eqn: fc_vs_fprod}
\end{equation}
Indeed, the lower bound holds for any partition $C$. And the upper bound implies that the square of the overlap with the closest product state with respect to the good partition $C$ is only an additive factor $O(1/q)$ larger. Thus, for any additive accuracy $\eta>0$, a suitable partition into $O(\eta^{-2})$ blocks preserves the optimal product fidelity to within $\eta$, independently of $n$ and $d$.

\paragraph{Blockwise Spectral Projection.} The random coloring argument leaves a second difficulty: although there are only $q$ blocks, their Hilbert-space dimensions may still grow with both $n$ and $d$. To address this, let $\rho_c$ be the marginal on block $B_c$, and let $\Pi_c^{\mathrm{hi}} := \1_{[\theta,1]}(\rho_c)$ project onto its eigenvectors with eigenvalues at least a threshold $\theta>0$. 
This idea is also present in \cite{bakshi2024learningclosestproductstate}. Since $\Tr[\rho_c]=1$, the retained subspace $E_c := \ran\Pi_c^{\mathrm{hi}}$ satisfies
\begin{equation}
    \dim E_c \leq \frac{1}{\theta}.
    \label{eqn: bounded_dim}
\end{equation}
Moreover, the discarded part satisfies
\begin{equation}
    \norm{(I-\Pi_c^{\mathrm{hi}})\rho_c(I-\Pi_c^{\mathrm{hi}})}
    \leq \theta,
    \label{eqn: total_lows}
\end{equation}
The key intuition is that directions corresponding to very small eigenvalues of a block marginal cannot contribute substantially to the overlap with any global product state. Importantly,~\eqref{eqn: total_lows} controls the contribution along any individual direction in the low-eigenvalue subspace, even when that subspace carries substantial total weight. This motivates retaining the high-eigenvalue sectors, which capture the structure relevant to product fidelity up to a controlled error. By~\eqref{eqn: bounded_dim}, their dimensions are bounded solely in terms of $\theta$, independently of $n$ and $d$. We can therefore analyze the symmetric-subspace test on the resulting bounded-rank core while controlling the contribution of the discarded low-eigenvalue components separately.

Although sharing the same core components, these reductions are used differently in the two algorithms. For testing, the partition and spectral subspaces serve only as witnesses in the soundness analysis. The algorithm never constructs them. 
For learning, we sample a random coloring and implement blockwise spectral projection using the techniques in~\cite{WangZhang2025Samplizer, GSLW19}.
We explain these two applications next.

\subsubsection{Fully Tolerant Testing Algorithm}
Our tester follows a $k$-copy generalization of~\cite{harrow2013testing}, which is also studied in~\cite{she2022unitarypropertytestinglower} for bipartite states, and also implicitly considered in \cite{weinbrenner2026complete}. In particular, the $k$-copy generalization replaces each local SWAP test with a permutation test: for every qudit $i\in[n]$, it performs the projective measurement $\{Q_{k,i}, I-Q_{k,i}\}$ on the $k$ corresponding registers, where $Q_{k,i}$ projects onto $\operatorname{Sym}^k(\mathbb C^d)$. We denote its acceptance probability by
\begin{equation}
    Z_k(\rho)
    :=
    \Tr\!\left[
        \left(\prod_{i=1}^n Q_{k,i}\right)
        \rho^{\otimes k}
    \right].
    \label{eq:overview-product-test}
\end{equation}
The central analytical question is how tightly $Z_k(\rho)$ can be bounded in terms of the product fidelity $\Fprod(\rho)$. For $k=2$, the analysis from~\cite{beckey2026optimalanalysisproducttest, soleimanifar2022testing} indicates that the two-copy tester tends to accept with probability $1/2$ even as $F_{\mathrm{prod}}(\rho)\to0$ in the worst case. Consequently, these bounds do not yield a fully tolerant tester capable of resolving arbitrary fidelity gaps of $a$ and $b$. This motivates replacing the two-copy test with the general $k$-copy symmetric-subspace measurement and studying $Z_k(\rho)$. Specifically, we aim to prove a uniform sandwich bound:
\begin{equation}
F_{\mathrm{prod}}(\rho)
\le Z_k(\rho)^{1/k}
\le \alpha_k F_{\mathrm{prod}}(\rho)+\beta_k, \label{eqn: target_sandwich_bound}
\end{equation}
where $\alpha_k\to1$ and $\beta_k\to0$ are independent of the number of subsystems and their local dimensions. However, the analysis of $Z_k(\psi)$ was restricted to bipartite systems~\cite{she2022unitarypropertytestinglower}, and it is not obvious how to generalize it.

To overcome this difficulty, we analyze the fine-grained statistic $Z_k(\rho)$ through a good partition $C$. By~\Cref{eqn: fc_vs_fprod}, we can choose $C$ such that the corresponding block-product fidelity $F_C(\rho)$ gives an additive approximation to $\Fprod(\rho)$. It therefore remains to relate $Z_k(\rho)$ to $F_C(\rho)$. This is where the blockwise spectral projection becomes useful. By projecting each block onto its high-eigenvalue subspace and carefully controlling the contribution from the discarded low-eigenvalue sectors, we obtain
\begin{equation}
    Z_k(\rho)^{1/k}
    \leq
    F_C(\rho)+ 2q\theta,
\end{equation}
for sufficiently large $k$. Thus, after choosing $q$ so that the approximation guarantee in~\eqref{eqn: fc_vs_fprod} holds, we set the spectral threshold to $\theta=O(\eta/q)$ so that the error introduced by the projection is also at most $O(\eta)$. Combining these two bounds yields the desired upper bound on $Z_k(\rho)^{1/k}$ in terms of $\Fprod(\rho)$, establishing the sandwich bound in~\eqref{eqn: target_sandwich_bound}. Finally, following standard amplification arguments for tolerant testing~\cite{gs009, canonne2021pricetolerancedistributiontesting}, we obtain Theorem~\ref{thm:ProductTester}.

\subsubsection{Agnostic Product-State Learning Algorithm}
For learning, the same random coloring and blockwise spectral projection ideas must be used algorithmically rather than only as witnesses in the analysis. The main additional difficulty is that, given a coloring $C$, a state maximizing $F_C(\rho)$ is only guaranteed to be product across the color blocks, and may still be entangled within each block. It thus remains to recover a fully product state over the original $n$ qudits.
Note that for the analysis, at a desired level of accuracy $\vep$, it suffices to consider $\Fprod(\rho)\ge\vep$,
since otherwise every pure product state is already $\vep$-optimal.

In the nontrivial regime $\Fprod(\rho)\ge\vep$, we first choose a uniformly random coloring $C:[n]\to[q]$ and let $\ket{x_C}=\bigotimes_{c=1}^q\ket{x_c}$ achieve $F_C(\rho)$. For an appropriate choice of $q=\poly(1/\vep)$, our random coloring argument shows that, with constant probability, \begin{equation} 
    \sum_{c=1}^q\bigl(1-\Fprod(x_c)\bigr)=O(\vep), 
\end{equation} 
where $\Fprod(x_c)$ is defined with respect to the original qudits in block $B_c$. Thus the block optimizer is nearly fully product, with total infidelity of order $\vep$ across all blocks. This quantitative guarantee will control the final approximation error.

We next apply the blockwise spectral projection introduced above. Unlike in the tester, these spectral projectors must actually be implemented. We do this from copies of the marginals $\rho_c$ using spectral filtering and the sample-based simulation technique of~\cite{GSLW19,WangZhang2025Samplizer}. For each block marginal $\rho_c$, we implement the projector $\Pi_c^{\mathrm{hi}}$ onto its high-eigenvalue subspace $E_c$. By choosing the threshold appropriately, $\dim E_c=\poly(1/\vep)$ is independent of $n$ and $d$, while the projection approximately preserves the corresponding optimizer $\ket{x_c}$. 

However, although the dimension of $E_c$ is bounded, learning a candidate state within this subspace remains challenging because we do not have access to an explicit basis for $E_c$. To overcome this, we use Werner’s optimal cloning channel~\cite{Werner1998}. For each color $c$, we apply the cloning channel on the bounded-dimensional subspace $E_c$. For a sufficiently large number of output copies, the joint output of these cloning channels is close to a measure-and-prepare channel of the form
\begin{equation}
    \mathbb{E}_{(v_1,\ldots,v_q)\sim\nu_C}
    \left[
        \bigotimes_{c=1}^q
        \bigl(\ket{v_c}\!\bra{v_c}\bigr)^{\otimes L}
    \right],
\end{equation}
for some joint distribution $\nu_C$ over unit vectors $\ket{v_c}\in E_c$. This description is used only in the analysis. The algorithm feeds the physical clone registers directly to the next step. 
The important property is that, with probability at least $\exp(-\poly(1/\vep))$, the sampled vectors satisfy
\begin{equation}
  \sum_{c=1}^q
  \bigl(1-|\langle\bar{x}_c|v_c\rangle|^2\bigr)=O(\vep),
\end{equation}
where $\ket{\bar{x}_c}$ is the normalized projection of $\ket{x_c}$ onto $E_c$. Together with the coloring and projection guarantees, this gives $\sum_c(1-\Fprod(v_c))=O(\vep)$ and keeps the sampled vectors close to the particular block optimizer $\ket{x_C}$.
To our knowledge this is the first time approximate cloning was used in an algorithmic fashion. Because the argument is quite generic we expect it to be a useful subroutine for agnostic tomography for other classes of states. 

These guarantees play two distinct roles. To apply the qudit high-fidelity product-state learner from~\cite{bakshi2024learningclosestproductstate} to $\ket{v_c}$ with additive accuracy $\vep_{\mathrm{hfl}}$, it suffices that $\Fprod(v_c)\ge5/6+\vep_{\mathrm{hfl}}$. Our parameter choices ensure this promise for every block on the good sampling event. We also use the stronger aggregate infidelity bounds to control the accuracy of the full learner. Choosing $q\vep_{\mathrm{hfl}}=O(\vep)$, the resulting product states $\ket{z_c}$ satisfy
\begin{equation}
  e_{\mathrm{out}}:=\sum_c\bigl(1-|\langle z_c|x_c\rangle|^2\bigr) =O(\vep).
\end{equation}
The optimality of $\ket{x_C}$ then implies $\bra z\rho\ket z\ge\Fprod(\rho)-O(\vep)$ for $\ket z=\bigotimes_c\ket{z_c}$, as shown in Lemma~\ref{lem:learner-output-quality}.
We therefore repeat the candidate-generation procedure sufficiently many times, estimate the fidelity of every candidate using fresh copies of $\rho$, and return the candidate with the largest estimated fidelity. With the appropriate parameter choices, this yields an $\vep$-optimal product state with probability at least $1-\delta$, giving the fixed-parameter tractable learner of Theorem~\ref{thm:ProductLearner}.

\subsection{Use of AI Tools}

This work was developed with assistance from ChatGPT, using the
5.6 Sol, 6.0 Astra, and 6.0 Sol models. We (the authors) formulated the
research questions and directed the development of the testing and
learning approaches. We used these tools to explore examples,
generate and revise proof drafts, develop implementation and
complexity analyses, run numerical checks, and assisted in drafting the manuscript.

We started thinking about this problem over a year ago, with the idea that the $k$-copy generalization of Harrow \& Montanaro's product test should be a tolerant tester at constant $k$. We didn't get anywhere, and for a while we thought we should be able to construct a lower bound by adapting the hardness argument in \cite{bakshi2024learningclosestproductstate} (which essentially reduces a four-party product fidelity estimation problem (with growing local dimensions) to a computationally hard problem). With TN visiting CWI over the summer we restarted seriously exploring this problem. Some back and forth about a query-controlled version of this argument led 5.6 Sol to propose a fully tolerant product-state
tester for tripartite states, which explicitly implemented a
blockwise spectral projection (thus refuting our lower bound route). Related ideas appear in the bipartite
analysis of~\cite{she2022unitarypropertytestinglower}. The authors
observed that the projection could be used solely in the analysis,
yielding a uniform bound for the registerwise permutation test.
This observation underlies \Cref{thm:explicit-spectral-core}.
We also explored direct estimates from the Haar integral
representation of the permutation-test acceptance probability.
We subsequently focused on approximating the fully product fidelity
by the product fidelity associated with a suitable partition into
a bounded number of blocks. This was very successful in the case of permutation invariant states but a general argument initially eluded us (even with AI assistance).

We initially proposed random coloring as a good method for choosing a partition in the general case, which the initial
AI responses assessed as unpromising. Subsequent exploration
included a proposed nonconstructive argument using ultrafilters
and ultraproducts to establish a copy bound independent of the
system size for fixed testing thresholds. This argument induced us to consider a regularity-lemma approach, producing
an intermediate argument with sample complexity bounded by a tower
of twos of height $\poly(1/\Delta)$, where $\Delta=a-b$ is the testing
gap. As an intermediate result, we attempted to a doubly exponential bound of the form
$2^{2^{\poly(1/\Delta)}}$, using a type of dimension-independent Helly theorem from \cite{adiprasito2020theorems} (which governs intersection sizes for sets of convex sets). Eventually, AI developed
our initial random-coloring idea into the general partition
argument used in the present proof, leading to the sample
complexity stated in \Cref{thm:ProductTester}.

For learning, we proposed combining random coloring,
an implementation of blockwise spectral projection, and the high-fidelity
product-state learner of~\cite{bakshi2024learningclosestproductstate}.
Starting from this outline, AI generated the initial versions of
the remaining construction and proof details. AI also developed
the proposed extension of the high-fidelity learner to qudits. This extension is not particularly surprising, largely consisting of a careful retracing of the argument in ~\cite{bakshi2024learningclosestproductstate}. More surprising is the use of the Werner approximate cloning channel in an algorithmic fashion, which was entirely developed by AI and for which we found no precedent for in the literature. 

We subsequently reviewed and substantially revised the
proofs and exposition for mathematical clarity and readability.
This included reorganizing the lemmas, standardizing notation,
and identifying and adding appropriate references. We (obviously)
take responsibility for the final statements and proofs.

Finally we want to note that neither 5.6 Sol  nor 6.0 Astra  was able (at the time of testing, early September 2026) to re-derive either the tester or the learner from a generic prompt, even with substantial token use. We thus consider this a successful combination of human ideas and artificially generated argumentation. 

\subsection*{Acknowledgements}
The authors want to thank Amira Abbas for introducing TN and ZB and kickstarting this project. The authors also thank Cl\'ement Canonne and Aditya Vikram Singh for helpful discussions on tolerant testing.
JH acknowledges funding from the Dutch Research Council
(NWO) through a Veni grant (grant No.VI.Veni.222.331) and the Quantum Software Consortium
(NWO Zwaartekracht Grant No.024.003.037).
ZB acknowledges support from QDNL (EZK Groeifonds) through the Cat-2 program. Part of this project was undertaken while TN was a PhD intern at Centrum Wiskunde \& Informatica (CWI) in Amsterdam (June-September).
\section{Preliminaries}
\subsection{Basic Notation}
\label{sec:basic-notation}
For a positive integer $m$, we write $[m]:=\{1,\ldots,m\}$.
We write $\symgroup_k$ for the symmetric group over $k$ elements.
All Hilbert spaces are finite-dimensional and complex, and
$\mathcal H_i=\C^d$ denotes the space of the $i$-th qudit.
We write $\D(\mathcal H)$ for the density operators on
$\mathcal H$, namely positive semidefinite operators with trace one.
A pure state is represented by a unit vector $\ket{v}$.
When a pure state appears as the argument of a function on density
operators, as in $\Fprod(v)$, we identify $v$ with
$\ket{v}\!\bra{v}$.

For Hermitian operators, $A\ge B$ means that $A-B$ is positive
semidefinite. We write $I$ for the identity operator and omit
identity tensor factors when the registers are clear from context.
For a block $B_c$, we write $\rho_c=\Tr_{-c}\rho$ for its marginal,
where $\Tr_{-c}$ traces out all blocks other than $c$.
We also use $\Tr_{-B_c}$ for this partial trace.

For vectors, $\|\cdot\|_2$ denotes the Euclidean norm.
For operators, the trace norm and operator norm are
\begin{equation}
  \|A\|_1:=\Tr\sqrt{A^\dagger A},
  \qquad
  \|A\|_\infty:=\sup_{\|v\|_2=1}\|Av\|_2.
\end{equation}
An operator norm without a subscript means $\|\cdot\|_\infty$.
The trace distance between $\rho$ and $\sigma$ is
$\frac12\|\rho-\sigma\|_1$.
For linear maps, $\|\cdot\|_\diamond$ denotes the diamond norm.

We use $\ran P$ to denote the range
of a projector $P$. For a Hermitian operator
$A=\sum_j\lambda_j\ket{v_j}\!\bra{v_j}$ and an interval
$J\subseteq\mathbb R$, its spectral projector onto $J$ is
\begin{equation}
  \1_J(A):=\sum_{j:\lambda_j\in J}\ket{v_j}\!\bra{v_j}.
\end{equation}
For a Hilbert space $E$, $\Sym^k(E)$ denotes the subspace of
$E^{\otimes k}$ invariant under all permutations of the tensor
factors.

Unless otherwise specified, $\log$ denotes the natural logarithm;
binary logarithms are written as $\log_2$.
We use $\poly$ to denote polynomial dependence and
$\widetilde O$ to suppress polylogarithmic factors in the
relevant parameters.

\subsection{$k$-copy Product State Tester}
For an integer $k\ge1$ and each $i\in[n]$, let $Q_{k,i}$ denote
the orthogonal projector onto
$\Sym^k(\mathcal H_i)\subseteq\mathcal H_i^{\otimes k}$.
It acts on the $i$-th register of each of the $k$ copies and
as the identity on all remaining registers. Define
\begin{equation}
  Q_k:=\prod_{i=1}^n Q_{k,i}.
\end{equation}
These local projectors commute because they act on disjoint
registers. Thus, $Q_k$ is the accepting projector of the test
that checks whether every register lies in its local symmetric
subspace.

We begin with the natural $k$-copy generalization of the product test from Harrow and Montanaro \cite{harrow2013testing}.
\begin{protocol}[$k$-copy product test]\label{prot:k-copy-product-test}
Given independent copies of a density operator
$\rho\in\D\p{\p{\C^d}^{\otimes n}}$,
the $k$-copy product test proceeds as follows.

\begin{enumerate}
    \item Prepare the joint input state $\rho^{\otimes k}$.

    \item For each register $i\in[n]$, perform the projective measurement
    $\left\{Q_{k,i}, I-Q_{k,i}\right\}$ on its $k$ registers.

    \item Accept if every measurement returns the symmetric outcome
    $Q_{k,i}$. Otherwise, reject.
\end{enumerate}
\end{protocol}

\noindent The acceptance probability of Protocol~\ref{prot:k-copy-product-test} is
\begin{equation}
  Z_k(\rho):=\Tr\!\left[
    Q_k\rho^{\otimes k}
  \right].
  \label{eq:k-copy-acceptance}
\end{equation}
Every pure product vector $\ket{\pi}$ has $\ket{\pi}^{\otimes k}$ in the accepting subspace. The accepting projector therefore
dominates $(\ket{\pi}\!\bra{\pi})^{\otimes k}$, and maximizing over product vectors yields the completeness bound. Thus, we have the following universal lower bound:
\begin{claim}
For any $k\geq 2$, $Z_k(\rho)\ge \Fprod(\rho)^k$.
\end{claim}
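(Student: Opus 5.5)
The claim follows from operator domination by a rank-one projector, followed by optimization over product vectors. Fix an arbitrary pure product vector $\ket{\pi}=\ket{\pi_1}\otimes\cdots\otimes\ket{\pi_n}$. After reordering the $nk$ registers by qudit index, the vector $\ket{\pi}^{\otimes k}$ becomes $\bigotimes_{i=1}^n\ket{\pi_i}^{\otimes k}$. Each factor $\ket{\pi_i}^{\otimes k}$ is invariant under every permutation of the $k$ tensor factors, so it lies in $\Sym^k(\mathcal H_i)$. Hence $Q_{k,i}\ket{\pi}^{\otimes k}=\ket{\pi}^{\otimes k}$ for every $i$, and since the $Q_{k,i}$ commute, $Q_k\ket{\pi}^{\otimes k}=\ket{\pi}^{\otimes k}$.

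Because $Q_k$ is an orthogonal projector that fixes the unit vector $\ket{\pi}^{\otimes k}$, it dominates the corresponding rank-one projector:
\begin{equation}
  Q_k \ge \bigl(\ket{\pi}\!\bra{\pi}\bigr)^{\otimes k}.
\end{equation}
To see this, write $P=(\ket{\pi}\!\bra{\pi})^{\otimes k}$. Then $PQ_k=Q_kP=P$, so $Q_k-P$ is itself a projector and in particular positive semidefinite.

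Next we use that $\rho^{\otimes k}\ge0$. For $A\ge B$ and $\sigma\ge0$ we have $\Tr[A\sigma]\ge\Tr[B\sigma]$, so
\begin{equation}
  Z_k(\rho)=\Tr[Q_k\rho^{\otimes k}]\ge \Tr\bigl[(\ket{\pi}\!\bra{\pi})^{\otimes k}\rho^{\otimes k}\bigr]=\bra{\pi}\rho\ket{\pi}^{k}.
\end{equation}

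Finally, we take the maximum over product vectors. The maximum in \eqref{eq:def-product-fidelity} is attained by compactness of the product of unit spheres. Since $t\mapsto t^k$ is monotone on $[0,1]$, applying the bound above to the maximizer gives $Z_k(\rho)\ge\Fprod(\rho)^k$. This argument works for every $k\ge1$, and in particular for $k\ge2$.

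None of these steps is a real obstacle. The only point needing care is the bookkeeping of register ordering, which confirms that the $k$-fold copy of a product vector is locally symmetric on each qudit's $k$ registers.
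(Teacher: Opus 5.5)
Your proposal is correct and follows the paper's argument: the paper also notes that $\ket{\pi}^{\otimes k}$ lies in the accepting subspace of $Q_k$ for every pure product vector, deduces $Q_k\ge(\ket{\pi}\!\bra{\pi})^{\otimes k}$, and then traces against $\rho^{\otimes k}$ with $\ket{\pi}$ chosen as a maximizer. You additionally verify the operator inequality via $Q_kP=PQ_k=P$ and note attainment of the maximum by compactness, which the paper leaves implicit.
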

\section{A Fully Tolerant Tester}\label{sec: tester}

\subsection{The $k$-copy Product Test}

Our approach introduces an intermediate optimization over partitions
with a bounded number of blocks. For a coloring $C:[n]\to[q]$,
let $B_c:=C^{-1}(c)$ and
$\mathcal H_{B_c}:=\bigotimes_{i\in B_c}\mathcal H_i$ for each
$c\in[q]$.
When it is clear from the context we also use $\mathcal H_c$ for $\mathcal H_{B_c}$.
If $B_c=\varnothing$, we use the empty tensor product
convention $\mathcal H_{c}=\C$. Define
\begin{equation}
  F_C(\rho):=
  \max_{\substack{
    \ket{v_c}\in\mathcal H_{c},\ \norm{\ket{v_c}}_2=1\\
    c\in[q]
  }}
  \bra{v_1\otimes\cdots\otimes v_q}
  \rho
  \ket{v_1\otimes\cdots\otimes v_q}.
  \label{eq:def-FC}
\end{equation}
We then minimize over all such colorings, with at most $q$ colors:
\begin{equation}
  F_q(\rho):=
  \min_{C:[n] \to [q]}
  F_C(\rho).
  \label{eq:def-Fq}
\end{equation}
The optimization defining $F_C$ allows entanglement within each block
while requiring product structure across blocks. Every fully product
vector is admissible, so we have the following fact.
\begin{fact}\label{fact1}
For every coloring $C$ with at most $q$ nonempty blocks,
$\Fprod(\rho)\le F_q(\rho)\le F_C(\rho)$.
\end{fact}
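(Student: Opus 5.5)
The statement is Fact~\ref{fact1}. I will prove the two inequalities separately, and both are direct consequences of how the maximizations and the minimization are nested.

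For the upper inequality $F_q(\rho)\le F_C(\rho)$, I will use the definition~\eqref{eq:def-Fq}. There $F_q(\rho)$ is a minimum of $F_{C'}(\rho)$ over all colorings $C':[n]\to[q]$. The given $C$, which has at most $q$ nonempty blocks, can be relabelled as a map into $[q]$, with the unused colors giving empty blocks and $\mathcal H_c=\C$. Relabelling colors and adding trivial factors $\C$ does not change the feasible set in~\eqref{eq:def-FC}. So $C$ is one of the competitors in the minimum, and the inequality follows at once.

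For the lower inequality $\Fprod(\rho)\le F_q(\rho)$, it suffices to show $\Fprod(\rho)\le F_{C'}(\rho)$ for every coloring $C':[n]\to[q]$, and then take the minimum over $C'$. I fix an optimal product vector $\ket{\pi_1}\otimes\cdots\otimes\ket{\pi_n}$ in~\eqref{eq:def-product-fidelity}; the maximum is attained because the domain is a product of unit spheres, which is compact. For each color $c$, I set $\ket{v_c}:=\bigotimes_{i\in B_c}\ket{\pi_i}\in\mathcal H_c$. This is a unit vector, and for an empty block I take the scalar $1\in\C$. After reordering tensor factors according to the blocks, $\bigotimes_c\ket{v_c}$ equals $\bigotimes_i\ket{\pi_i}$. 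Hence it is feasible in~\eqref{eq:def-FC} and attains the value $\Fprod(\rho)$, which gives $F_{C'}(\rho)\ge\Fprod(\rho)$.

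No step is a real obstacle. The only care needed is bookkeeping: the identification of $\bigotimes_c\mathcal H_{B_c}$ with $\bigotimes_i\mathcal H_i$ by permuting tensor factors, and the empty-block convention. Both are fixed by the conventions stated before~\eqref{eq:def-FC}.
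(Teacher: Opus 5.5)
Your proof is correct and matches the paper's one-line justification: grouping a fully product vector by blocks makes it admissible in the optimization defining $F_{C'}$, so $\Fprod(\rho)\le F_{C'}(\rho)$ for every coloring and hence $\Fprod(\rho)\le F_q(\rho)$. The bound $F_q(\rho)\le F_C(\rho)$ holds because $C$, after relabeling its colors into $[q]$, is one of the competitors in the minimum; your bookkeeping about relabeling, empty blocks, and attainment of the maximum just makes explicit what the paper leaves implicit.
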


Our analysis proceeds in two steps. In
Subsection~\ref{sec:uniform-bounded-partition}, we use a random
$q$-coloring to show that
$F_q(\rho)^2-\Fprod(\rho)^2\le3/q$.
Thus a bounded number of color blocks suffices to approximate the
fully product fidelity, independently of $n$ and $d$.
In Subsection~\ref{sec:chosen-partition-fidelity}, we choose a coloring
attaining $F_q(\rho)$ and apply blockwise spectral projection onto the
high-eigenvalue subspaces of the block marginals. The retained operator has bounded local
support dimensions, so its contribution to the test is controlled by
$F_q(\rho)$ and the dimensions of the corresponding symmetric
subspaces. The discarded components have small marginal eigenvalues,
which bounds their contribution. Choosing $q$, the spectral cutoff,
and $k$ appropriately therefore makes $Z_k(\rho)^{1/k}$ arbitrarily close to
$\Fprod(\rho)$, uniformly in the system size and local dimension.

Finally, in Subsection~\ref{subsec:tester-complete}, we combine these two results to separate the
one-trial acceptance probabilities in the two promised cases.
Repeating the test and thresholding the empirical acceptance
frequency gives the fully tolerant tester and its resource bounds.
Random coloring and blockwise spectral projection are used only in the analysis;
the tester performs Protocol~\ref{prot:k-copy-product-test}.

\subsection{Random Coloring}\label{sec:uniform-bounded-partition}
Our first step is to show that coarse-graining the $n$ registers into a bounded number of blocks increases the optimal product fidelity by only a controlled amount, which is independent of both the number of subsystems $n$ and the local dimension $d$. The following theorem quantifies
this approximation.

\begin{theorem}\label{thm: random_color tester}
    Given $n\geq 1$, $d\geq 2$, for every integer $q \geq 1$ and any mixed state $\rho\in\D\p{\p{\C^d}^{\otimes n}}$, the following inequality holds
    \begin{equation}
        \Fprod(\rho)^2 \leq F_q(\rho)^2 \leq \Fprod(\rho)^2 + \frac{3}{q} 
    \end{equation}
\end{theorem}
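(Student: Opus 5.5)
The lower bound $\Fprod(\rho)^2\le F_q(\rho)^2$ is immediate from Fact~\ref{fact1}. For the upper bound I will use the probabilistic method: since $F_q(\rho)=\min_C F_C(\rho)$, it suffices to show that for a uniformly random coloring $C:[n]\to[q]$ (each qudit colored independently and uniformly)
\begin{equation}
  \E_C\bigl[F_C(\rho)^2\bigr]\le \Fprod(\rho)^2+\frac{3}{q}.
\end{equation}
Some coloring then attains at most the average. Empty blocks are harmless by the convention $\mathcal H_c=\C$.

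The plan is to find a coloring-dependent quantity $\Phi_C(\rho)$ with $F_C(\rho)^2\le\Phi_C(\rho)$ pointwise, whose average over $C$ can be compared to a fully product quantity. The square suggests a two-copy quantity. For a block-product optimizer $\ket{x_C}=\bigotimes_c\ket{x_c}$ we have
\begin{equation}
  F_C(\rho)^2=\bra{x_C}^{\otimes 2}\rho^{\otimes 2}\ket{x_C}^{\otimes 2}.
\end{equation}
I would write each block projector $\ket{x_c}\!\bra{x_c}$ in terms of operators on the individual qudits of $B_c$, for instance through the swap and symmetric-subspace expansion $\tfrac12(I+\mathrm{SWAP}_{B_c})$ on the two copies. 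This separates $\Phi_C$ into a term that factorizes over single qudits plus correction terms. The correction terms are supported only on pairs of distinct qudits $i\ne j$ that receive the same color.

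The key probabilistic input is that $\Pr[C(i)=C(j)]=1/q$ for each pair $i\ne j$. Meanwhile, the factorizing term should be bounded by $\Fprod(\rho)^2$: it is an overlap of $\rho^{\otimes 2}$ with a state that is product over qudits, and it can be dominated by the best fully product overlap. The correction terms should be written as a normalized average over pairs of colliding qudits, each bounded in absolute value by a constant. Averaging over $C$ then gives an additive error of the form (constant)$/q$. I expect the bookkeeping of the diagonal, cross and collision contributions to yield the constant $3$: roughly one unit from each of the two copies plus one from the cross term.

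The main obstacle is the first step. The optimizer $\ket{x_C}$ depends on $C$, so one cannot naively average $\bra{x_C}^{\otimes 2}\rho^{\otimes 2}\ket{x_C}^{\otimes 2}$ over $C$ while treating the vector as fixed. The collision terms must also be normalized so that their total weight is $O(1/q)$ rather than $O(n^2/q)$. I would first try a coloring-independent relaxation, such as a semidefinite or permutation-averaged upper bound on $F_C^2$ that depends on $C$ only through the indicator pattern $\{\1[C(i)=C(j)]\}_{i,j}$. If that is too lossy, the fallback is an exposure argument: reveal the colors one qudit at a time and show that the expected increment in $F^2$ when qudit $i$ joins an occupied block is at most $O(1/q)$ times its share of the total. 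Summing these increments telescopes to $\Fprod(\rho)^2+3/q$.
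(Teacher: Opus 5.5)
There is a genuine gap. Your outer framework matches the paper's: a uniformly random coloring, two-copy swap statistics, and a $1/q$ collision probability. The step that is supposed to bring in $\Fprod(\rho)^2$ fails, however.

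Make your swap expansion concrete. The natural two-copy majorant is $F_C(\rho)^2\le Z_{2,C}(\rho)=\Tr(Q_{2,C}\rho^{\otimes 2})$, since $\ket{x_C}^{\otimes 2}$ lies in the range of $Q_{2,C}$. The part that factorizes over single qudits is then the registerwise Harrow--Montanaro statistic $Z_2(\rho)$, because $Z_{2,C}=Z_2+(Z_{2,C}-Z_2)$. But $Q_2=\prod_i(I+\swap_i)/2$ is a high-rank projector, not a product state, and $Z_2(\rho)$ is \emph{not} bounded by $\Fprod(\rho)^2$. For a maximally entangled pair of qudits, $Z_2=(1+1/d)/2\ge 1/2$ while $\Fprod^2=1/d^2$. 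So any bound of the form $F_C^2\le(\text{factorizing part})+(\text{collision part})$ only yields $\E_C F_C^2\le Z_2(\rho)+1/q$, which says nothing about $\Fprod$. Your fallbacks (an SDP relaxation, an exposure argument) are unspecified and would need exactly the missing ingredient. Your accounting for the constant $3$ is also not where it comes from.

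The missing idea is to bound the \emph{difference} rather than $F_C^2$ itself. The paper shows, pointwise in $C$,
\[
F_C(\rho)^2-\Fprod(\rho)^2\le 3\bigl(Z_{2,C}(\rho)-Z_2(\rho)\bigr),
\]
by refining one color block at a time:
\begin{enumerate}[label=(\arabic*)]
\item For an optimizer $\ket{x_c}\otimes\ket{x_{-c}}$, optimality makes $\ket{x_c}$ a top eigenvector of $\bra{x_{-c}}\rho\ket{x_{-c}}$. Hence $\bra{x_{-c}}\rho\ket{x_{-c}}\ge F_C(\rho)\ketbra{x_c}{x_c}$.
\item It follows that the acceptance lost by splitting block $c$ into singletons is at least $F_C(\rho)^2\bigl(1-Z_2(x_c)\bigr)$.
\item The pure-state soundness of the two-copy product test, $1-Z_2(v)\ge(1-\Fprod(v)^2)/3$ (Beckey et al.), together with $F_{C^{(c)}}(\rho)\ge F_C(\rho)\Fprod(x_c)$, turns this into $(F_C^2-F_{C^{(c)}}^2)/3$.
\item Telescoping over the $q$ colors gives the displayed inequality.
\end{enumerate}
This is the source of the constant $3$. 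It also removes your worry that the optimizer depends on $C$: the right-hand side no longer involves the optimizer and is linear in $Q_{2,C}$, so it can be averaged.

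For the averaging, do not sum over colliding pairs, which invites the $O(n^2/q)$ normalization problem you flag. Instead decompose into the joint eigenspaces of the commuting $\swap_i$, indexed by $A\subseteq[n]$. The operator $Q_{2,C}-Q_2$ projects onto the sectors with $A\ne\varnothing$ and every $|A\cap B_c|$ even. For each fixed nonempty $A$ this event has probability at most $1/q$, since conditional on the other colors, the color of one element of $A$ is forced. Hence $\E_C[Q_{2,C}-Q_2]\le(I-Q_2)/q$ as an operator inequality, so $\E_C[Z_{2,C}-Z_2]\le 1/q$ independently of $n$, and $F_q^2\le\E_C F_C^2\le\Fprod^2+3/q$.
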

Theorem~\ref{thm: random_color tester} shows that $F_q$ provides a uniform additive approximation to $\Fprod(\rho)$, with an error that vanishes
as the number of blocks $q$ increases.  
We have the following corollary.
\begin{corollary}
\label{cor:fq_vs_ffull}
For any $0<\vepcolor\le1$, choose $q=\left\lceil 3/\vepcolor\right\rceil$.
Then, for any mixed state $\rho\in\D\p{\p{\C^d}^{\otimes n}}$,
\begin{equation}
0\le F_q(\rho)^2-\Fprod(\rho)^2 \le\vepcolor.
\end{equation}
\end{corollary}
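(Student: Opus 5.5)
The corollary follows directly from Theorem~\ref{thm: random_color tester}, so no new argument is needed beyond choosing the number of colors.

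\emph{Lower bound.} By Fact~\ref{fact1}, $\Fprod(\rho)\le F_q(\rho)$. Both quantities are fidelities of a density operator with a unit vector, hence lie in $[0,1]$. Squaring preserves the inequality on nonnegative reals, so $F_q(\rho)^2-\Fprod(\rho)^2\ge0$. This is also exactly the left inequality of Theorem~\ref{thm: random_color tester}.

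\emph{Upper bound.} Take $q=\lceil 3/\vepcolor\rceil$. Since $0<\vepcolor\le1$, we have $3/\vepcolor\ge3$, so $q$ is an integer with $q\ge3\ge1$. Theorem~\ref{thm: random_color tester} therefore applies for every $n\ge1$, $d\ge2$ and every mixed state $\rho$, giving
\begin{equation}
F_q(\rho)^2-\Fprod(\rho)^2\le \frac{3}{q}.
\end{equation}
Because $q\ge 3/\vepcolor$, it follows that $3/q\le\vepcolor$, which is the claimed bound.

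There is no real obstacle here; all the substantive work lies in Theorem~\ref{thm: random_color tester}. The only points to check are that the ceiling yields a valid positive integer $q$, which holds for $\vepcolor\in(0,1]$, and that the chosen $q$ depends on neither $n$ nor $d$, so the bound is uniform in the system size and local dimension as claimed.
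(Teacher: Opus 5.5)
Your proposal is correct and follows the paper's approach. The paper states the corollary as an immediate consequence of Theorem~\ref{thm: random_color tester}, with the lower bound coming from Fact~\ref{fact1}. It relies on exactly your observation that $q=\lceil 3/\vepcolor\rceil\ge 3/\vepcolor$ forces $3/q\le\vepcolor$.
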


To compare $F_C(\rho)$ with $\Fprod(\rho)$, we use the two-copy
acceptance probability as an intermediate quantity. Let $Q_{2,c}$ be
the symmetric-subspace projector for two copies of color block $B_c$, and set
$Q_{2,C}:=\prod_{c\in [q]}Q_{2,c}$ and
$Z_{2,C}(\rho):=\Tr(Q_{2,C}\rho^{\otimes2})$.
The ordinary two-copy test introduced by Harrow and Montanaro \cite{harrow2013testing} corresponds to the case where each register has a unique color.
Its accepting projector is contained in $Q_{2,C}$, so
$Z_{2,C}(\rho)-Z_2(\rho)\ge0$.

The difference $Z_{2,C}(\rho)-Z_2(\rho)$ measures the acceptance lost
when the color blocks are refined into individual registers. For a random
coloring this loss is small on average. The proof decomposes the
two-copy space according to the swaps of corresponding registers. A
sector rejected by the registerwise test can pass the blockwise test only
when the negative swap signs pair within color blocks.

For a fixed coloring, the same loss controls the increase in product
overlap allowed by grouping registers into blocks. We refine the blocks
one at a time and show that the decrease in squared block-product
overlap is at most three times the loss in acceptance. Adding these
bounds controls $F_C(\rho)^2-\Fprod(\rho)^2$ by the total test gap.
The next lemma states both estimates, which naturally implies \Cref{thm: random_color tester}. 
We note that the part $(i)$ of the lemma can also be obtained by \Cref{cor:direct-good-coloring}, where in the learning setting we need a substantially stronger result. In the interest of self-containedness, we give a direct proof here.

\begin{lemma}
\label{lem:one-block-leakage}
For any mixed state $\rho\in\D\p{\p{\C^d}^{\otimes n}}$, the following hold.

\begin{enumerate}[label=(\roman*)]
\item If $C:[n]\to[q]$ assigns each register an independent
uniformly random color, then
\begin{equation}
\mathbb E_C[Z_{2,C}(\rho)-Z_2(\rho)]
\le \frac{1}{q}.
\end{equation}

\item For any coloring $C$,
\begin{equation}
Z_{2,C}(\rho)-Z_2(\rho)
\ge\frac{F_C(\rho)^2-\Fprod(\rho)^2}{3}.
\end{equation}
\end{enumerate}
\end{lemma}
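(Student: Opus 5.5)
For part~(i) I would decompose the two-copy space by swap-sign sectors. For each register $i$ write $I=P_i^++P_i^-$ with $P_i^\pm=(I\pm S_i)/2$, where $S_i$ swaps the $i$-th registers of the two copies, so that $Q_{2,i}=P_i^+$. For a block, $S_{B_c}=\prod_{i\in B_c}S_i$, hence
\begin{equation}
Q_{2,c}=\sum_{\substack{M\subseteq B_c\\ |M|\text{ even}}}P^-_M P^+_{B_c\setminus M}.
\end{equation}
Setting $w_M:=\Tr[P^-_MP^+_{[n]\setminus M}\rho^{\otimes2}]\ge0$, we have $\sum_M w_M=1$ and $Z_2(\rho)=w_\varnothing$. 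This gives
\begin{equation}
Z_{2,C}(\rho)-Z_2(\rho)=\sum_{M\ne\varnothing}w_M\,\mathbf 1\bigl[\,|M\cap B_c|\text{ is even for every }c\,\bigr].
\end{equation}
For a fixed nonempty $M$, I then bound the probability over $C$ that every block meets $M$ in an even number of registers. Pick one $j\in M$ and condition on the colors of $M\setminus\{j\}$. At most one color for $j$ makes all the parities even, so this probability is at most $1/q$. Taking expectations gives $\mathbb E_C[Z_{2,C}-Z_2]\le \frac1q\sum_{M\neq\varnothing} w_M\le 1/q$.

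For part~(ii) I would refine $C$ into singletons by splitting off one register at a time. This produces a chain of colorings $C=C_0,C_1,\dots,C_m$, where $C_m$ is the all-singletons partition, so that $F_{C_m}=\Fprod$ and $Z_{2,C_m}=Z_2$. Both sides of (ii) telescope along this chain, so it suffices to prove a one-step inequality. Let $C'$ be obtained from $C$ by splitting a block $B=B'\cup\{j\}$ into $B'$ and $\{j\}$. The claim to prove is $Z_{2,C}-Z_{2,C'}\ge (F_C^2-F_{C'}^2)/3$. Since $Q_{2,B}=P^+_{B'}P^+_j+P^-_{B'}P^-_j$, the loss equals $\Tr[Q_{\mathrm{oth}}\,(P^-_{B'}\otimes P^-_j)\,\rho^{\otimes2}]$, where $Q_{\mathrm{oth}}$ is the product of the block projectors outside $B$.

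Next I take an optimizer $\ket{x}=\bigotimes_c\ket{x_c}$ for $F_C$ and write $\ket{x_{-B}}$ for its part outside $B$. Because $Q_{\mathrm{oth}}\ge(\ket{x_{-B}}\!\bra{x_{-B}})^{\otimes2}$, and the remaining factor is PSD on disjoint registers, the loss is at least $\Tr[(P^-_{B'}\otimes P^-_j)\sigma^{\otimes2}]$. Here $\sigma:=\bra{x_{-B}}\rho\ket{x_{-B}}$ is an unnormalized PSD operator on $\mathcal H_B$ with top eigenvalue $\lambda=F_C$ and top eigenvector $\ket{x_B}$. Using $\sigma\ge\lambda\ket{x_B}\!\bra{x_B}$, the loss is at least
\begin{equation}
\lambda^2\,\Tr[(P^-P^-)(\ket{x_B}\!\bra{x_B})^{\otimes2}]=\lambda^2\frac{1-p}{2},
\end{equation}
where $p=\sum_i s_i^4$ is the purity and $s_i$ are the Schmidt coefficients of $\ket{x_B}$ across $B'|j$. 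On the other side, let $a=s_1^2$. Taking the top Schmidt product vector in place of $\ket{x_B}$ gives $F_{C'}\ge\lambda a$, hence $F_C^2-F_{C'}^2\le\lambda^2(1-a)(1+a)$.

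It remains to check the elementary inequality $(1-p)/2\ge(1-a^2)/3$. If $a\le1/2$, then $p\le a$, so $(1-p)/2\ge(1-a)/2\ge(1-a)(1+a)/3$. If $a\ge1/2$, then $p\le a^2+(1-a)^2$, so $(1-p)/2\ge a(1-a)\ge(1-a)(1+a)/3$. I expect this one-step comparison to be the main obstacle: the operator-inequality reduction to the pure top eigenvector of $\sigma$, and getting the constant $3$ from the case split at $a=1/2$. Summing the one-step bounds along the chain then gives (ii). Combining (i) and (ii) for some coloring that achieves at most the expectation yields \Cref{thm: random_color tester}.
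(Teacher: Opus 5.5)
Your part (i) is the paper's argument in different notation; the paper phrases it as the operator bound $\mathbb E_C[Q_{2,C}-Q_2]\le (I-Q_2)/q$.

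Part (ii) is correct but takes a genuinely different route. The paper refines a whole color block into singletons in one step, and imports the multipartite pure-state bound $1-Z_2(v)\ge(1-\Fprod(v)^2)/3$ from Beckey et al.\ for the block optimizer $\ket{x_c}$. You instead split off one register at a time. Each step's leakage is then bounded below by the bipartite quantity $\lambda^2(1-p)/2$, with $p$ the purity of the reduced state of $\ket{x_B}$ on $B'$. The constant $3$ comes from your elementary case split at $a=1/2$, which is tight at Schmidt weights $(1/2,1/2)$. The individual steps check out.

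One notational fix is needed. In (ii), $P^\pm_{B'}$ must denote the block projectors $(I\pm S_{B'})/2$, not the registerwise products you used for $P^-_M$ in (i). Otherwise neither the decomposition $Q_{2,B}=P^+_{B'}P^+_j+P^-_{B'}P^-_j$ nor the value $(1-p)/2$ holds.

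What your route buys: it is self-contained. Applied to a pure $\ket{v}$ with the one-block coloring (where $Z_{2,C}=F_C=1$), it actually re-derives the cited bound $1-Z_2(v)\ge(1-\Fprod(v)^2)/3$.

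What the paper's route buys: it works with a single fixed optimizer $\ket{x_C}$ rather than a fresh optimizer at every step. This yields the per-block estimate $\Tr(\Lambda_{C,c}\rho^{\otimes2})\ge\frac{F_C(\rho)^2}{3}(1-\Fprod(x_c)^2)$. The learner's good-coloring lemma reuses that estimate to control the internal fidelities $f_c$ of a fixed block optimizer. Your telescoping by itself only controls $F_C^2-\Fprod^2$, though feeding your re-derived pure-state bound into the paper's one-block step would recover that estimate too.
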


\begin{proof}
\noindent\textit{Part (i).}
For each register $i$, let $\swap_i$ be the operator which swaps the $i$-th register between the two copies.
Then $Q_2=\prod_{i\in[n]}(I+\swap_i)/2$ is the registerwise accepting
projector, while $Q_{2,c}=(I+\prod_{i\in B_c}\swap_i)/2$ for each $c\in[q]$.
The $\swap_i$ commute, so their joint eigenspaces are indexed by sets
$A\subseteq[n]$: on the eigenspace indexed by $A$, the swap $\swap_i$ has
eigenvalue $-1$ exactly when $i\in A$. 
One important observation is that the projector $Q_2$ accepts only the eigenspace indexed by 
$A=\varnothing$, whereas the projector $Q_{2,C}$ accepts the eigenspace indexed
by $A$ exactly when $|A\cap B_c|$ is even for every $c\in[q]$.

Fix a nonempty $A$ and one register $i\in A$, and condition on
the colors of the other registers in $A$. At most one color
for $i$ makes all the counts even. We conclude that
\begin{equation}
\Pr_C\!\left[|A\cap B_c|\text{ is even for every }c\in[q]\right]
\le\frac1q.
\end{equation}
Thus $\mathbb E_C[Q_{2,C}-Q_2]$ acts as zero on the eigenspace indexed
by $A=\varnothing$ and as a scalar at most $1/q$ on every other joint
eigenspace. Hence
\begin{equation}
\mathbb E_C[Q_{2,C}-Q_2] \le \frac{I-Q_2}{q}.
\label{eq:lemma33-random-operator-bound}
\end{equation}
Taking the trace against $\rho^{\otimes2}$ gives the desired bound
\begin{equation}
\mathbb E_C[Z_{2,C}(\rho)-Z_2(\rho)]
\le\frac{1-Z_2(\rho)}q\le\frac1q.
\end{equation}

\noindent\textit{Part (ii).}
We use the pure-state two-copy product-test bound
\begin{equation}
1-Z_2(v)\ge\frac{1-\Fprod(v)^2}{3}
\label{eq:lemma33-pure-bound}
\end{equation}
for every pure multipartite state $\ket v$; it follows directly from Theorem~1.1 of~\cite{beckey2026optimalanalysisproducttest}.
Fix a coloring $C:[n]\to[q]$ and a color $c\in[q]$.
Let $C^{(c)}$ be obtained by giving the registers in $B_c$ distinct
new colors, leaving the other color classes unchanged.
Choose a color-product vector
$\ket{x_c}\otimes\ket{x_{-c}}$ achieving $F_C(\rho)$, where
$\ket{x_c}$ is on $B_c$ and $\ket{x_{-c}}$ is product across the other
color classes. With $\ket{x_{-c}}$ fixed, optimality makes $\ket{x_c}$
a top eigenvector of the positive operator
$\bra{x_{-c}}\rho\ket{x_{-c}}$ on $B_c$, with eigenvalue $F_C(\rho)$.
Therefore
\begin{equation}
\bra{x_{-c}}\rho\ket{x_{-c}}
\ge F_C(\rho)\ket{x_c}\bra{x_c}.
\end{equation}

Let $Q^{\mathrm{fine}}_{2,c}:=\prod_{i\in B_c}(I+\swap_i)/2$ be the registerwise two-copy test in the color block $B_c$,
recall that $Q_{2,c} = \p{I+\prod_{i\in B_c} \swap_i}/2$ is the two-copy swap test for the whole color block $B_c$. We have $Q_{2,c}\ge Q^{\mathrm{fine}}_{2,c}$.
We also note that 
\begin{equation}
    Q_{2,C}-Q_{2,C^{(c)}} = \left(Q_{2,c}-Q^{\mathrm{fine}}_{2,c}\right)
\left(\prod_{c'\in[q]\setminus\{c\}}Q_{2,c'}\right)
    \ge \left(Q_{2,c}-Q^{\mathrm{fine}}_{2,c}\right)\otimes \ketbra{x_{-c}}{x_{-c}}^{\otimes 2}
\end{equation}
Taking the trace against $\rho^{\otimes2}$ and using the conditional operator bound on each copy, we obtain

\begin{align}
Z_{2,C}(\rho)-Z_{2,C^{(c)}}(\rho)
&=\Tr\!\left[\p{Q_{2,C}-Q_{2,C^{(c)}}}\rho^{\otimes2}\right]\\
&\ge\Tr\!\left[\left(Q_{2,c}-Q^{\mathrm{fine}}_{2,c}\right)
\left(\bra{x_{-c}}\rho\ket{x_{-c}}\right)^{\otimes2}\right]\\
&\ge F_C(\rho)^2\bra{x_c}^{\otimes2}
\left(Q_{2,c}-Q^{\mathrm{fine}}_{2,c}\right)
\ket{x_c}^{\otimes2}\\
&=F_C(\rho)^2\bigl(1-Z_2(x_c)\bigr)\\
&\ge\frac{F_C(\rho)^2}{3}(1-\Fprod(x_c)^2)\label{eq:one-color-refinement-leakage},
\end{align}
where the last step uses
\eqref{eq:lemma33-pure-bound}. If $\ket{u_c}$ is a product vector on
$B_c$ with $|\langle u_c|x_c\rangle|^2=\Fprod(x_c)$, then
\begin{equation}
F_{C^{(c)}}(\rho)
\ge\bra{u_c\otimes x_{-c}}\rho\ket{u_c\otimes x_{-c}}
\ge F_C(\rho)\Fprod(x_c).
\end{equation}
Thus this one-color refinement satisfies
\begin{equation}
Z_{2,C}(\rho)-Z_{2,C^{(c)}}(\rho)
\ge \frac{F_C(\rho)^2}{3}(1-\Fprod(x_c)^2)
\ge\frac{F_C(\rho)^2-F_{C^{(c)}}(\rho)^2}{3}.
\end{equation}

Set $C^{[0]}=C$. For each $c\in[q]$, let $C^{[c]}$ be obtained from
$C^{[c-1]}$ by refining color $c$ into singleton classes; if that
class is already a singleton or empty, leave the coloring unchanged.
Then $C^{[q]}$ is the singleton coloring, and the one-color bound gives
\begin{align}
Z_{2,C}(\rho)-Z_2(\rho)
&=\sum_{c=1}^{q}\bigl(Z_{2,C^{[c-1]}}(\rho)-Z_{2,C^{[c]}}(\rho)\bigr)\notag\\
&\ge\frac13\sum_{c=1}^{q}
\bigl(F_{C^{[c-1]}}(\rho)^2-F_{C^{[c]}}(\rho)^2\bigr)\notag\\
&=\frac{F_C(\rho)^2-\Fprod(\rho)^2}{3}.
\end{align}
\end{proof}

\subsection{Blockwise Spectral Projection}\label{sec:chosen-partition-fidelity}
The preceding subsection guarantees the existence of a partition with controlled product fidelity. We now relate the test statistic $Z_k(\rho)$ to the partition-product fidelity $F_q(\rho)$. The main idea is to project each block onto its high-eigenvalue subspace,
to make the state have bounded local ranks while bounding the error introduced by this step.
The intuition is that directions corresponding to very small eigenvalues of a block marginal cannot contribute substantially to the overlap with any global product state. Thus, product fidelity is governed primarily by the high-eigenvalue sectors of the local marginals rather than by the total weight distributed across their low-eigenvalue subspaces. Since a density operator can have only a bounded number of eigenvalues above any fixed threshold, restricting to these sectors yields an effective state with bounded local dimensions. This reduction allows us to analyze the symmetric-subspace test on the bounded-rank core, while treating the discarded low-eigenvalue components as a controlled error.

In particular, we establish the following theorem.

\begin{theorem}\label{thm:explicit-spectral-core}
Let $n\ge1$, $d\ge2$, and
$\rho\in\D\!\left((\C^d)^{\otimes n}\right)$. For integers
$q,k\ge1$ and $R\ge2$,
\begin{equation}
Z_k(\rho)^{1/k}
\le
\dim\Sym^k(\C^R)^{q/k}F_q(\rho)
+\frac{q}{R}\dim\Sym^k(\C^R)^{1/k},
\label{eq:master-partition}
\end{equation}
where $\dim\Sym^k(\C^R) = \binom{k+R-1}{k}$.
\end{theorem}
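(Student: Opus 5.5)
Throughout, fix a coloring $C:[n]\to[q]$ attaining $F_q(\rho)=F_C(\rho)$, and write $D_c:=\dim\Sym^k(\mathcal H_c)$ and $D:=\dim\Sym^k(\C^R)$. The plan has four steps: coarsen the test to the block test, rewrite that as a Haar integral, split each block vector into its high-eigenvalue component plus a remainder, and bound the two resulting terms separately.

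\textbf{Coarsening and Haar representation.} Since $\Sym^k(\mathcal H_c)\supseteq\bigotimes_{i\in B_c}\Sym^k(\mathcal H_i)$, we have $Q_k\le Q_{k,C}:=\prod_{c\in[q]}P^{\Sym}_{c}$, where $P^{\Sym}_c$ projects onto $\Sym^k(\mathcal H_c)$. Hence $Z_k(\rho)\le Z_{k,C}(\rho):=\Tr[Q_{k,C}\rho^{\otimes k}]$. Next use the Haar identity $P^{\Sym}_c=D_c\int(\ket{u_c}\!\bra{u_c})^{\otimes k}\,du_c$ on each block. With $u=\bigotimes_c u_c$ and $f(u):=\bra u\rho\ket u$, this gives
\begin{equation}
Z_{k,C}(\rho)=\Bigl(\prod_c D_c\Bigr)\int f(u)^k\,du .
\end{equation}
The point of this form is that $Z_{k,C}^{1/k}$ becomes an $L^k$ norm, so Minkowski's inequality is available. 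For each $c$, let $\Pi_c:=\1_{[1/R,1]}(\rho_c)$. Its range $E_c$ has $\dim E_c\le R$, and $\|\Pi_c^\perp\rho_c\Pi_c^\perp\|\le 1/R$.

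\textbf{Splitting the symmetric projectors.} Work at the operator level. Write $P^{\Sym}_c=P^{E}_c+(P^{\Sym}_c-P^{E}_c)$, where $P^E_c$ projects onto $\Sym^k(E_c)\subseteq\Sym^k(\mathcal H_c)$. Telescoping over $c$ gives
\begin{equation}
Q_{k,C}=\bigotimes_c P^E_c+\sum_{c=1}^q\Bigl(\bigotimes_{c'<c}P^{E}_{c'}\Bigr)\otimes\bigl(P^{\Sym}_c-P^E_c\bigr)\otimes\Bigl(\bigotimes_{c'>c}P^{\Sym}_{c'}\Bigr),
\end{equation}
and every summand is positive. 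For the \emph{main term}, apply the Haar formula inside each $E_c$, with dimension at most $D$:
\begin{equation}
\Tr\Bigl[\bigotimes_c P^E_c\,\rho^{\otimes k}\Bigr]=\Bigl(\prod_c\dim\Sym^k(E_c)\Bigr)\int_{u_c\in E_c}f(u)^k\,du\le D^q F_C(\rho)^k .
\end{equation}
This holds because every such $u$ is a product across blocks. It produces the term $D^{q/k}F_q(\rho)$ after taking $k$-th roots.

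\textbf{Remainder terms.} For the $c$-th summand, bound all projectors on blocks $c'\neq c$ by $I$, which traces out those blocks and leaves $\Tr[(P^{\Sym}_c-P^E_c)\rho_c^{\otimes k}]$. Inside the symmetric subspace, the complement of $\Sym^k(E_c)$ is spanned by the vectors with at least one copy in $E_c^\perp$. The aim is to use this together with symmetry to obtain
\begin{equation}
\Tr\bigl[(P^{\Sym}_c-P^E_c)\rho_c^{\otimes k}\bigr]\le D\,R^{-k},
\end{equation}
or a comparable bound whose $k$-th root is $D^{1/k}/R$. To get it, I would apply the Haar formula on block $c$ alone and expand $u_c=\Pi_cu_c+\Pi_c^\perp u_c$. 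The claim to establish is that the part of $\int\bra{u_c}\rho_c\ket{u_c}^k$ not accounted for by $E_c$ is dominated by $\|\Pi_c^\perp\rho_c\Pi_c^\perp\|^k\le R^{-k}$, times a dimension factor of order $D$.

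\textbf{Combining.} Finally, take $k$-th roots and use $(\sum_j a_j)^{1/k}\le\sum_j a_j^{1/k}$, or Minkowski in the $L^k$ picture. This gives $Z_k^{1/k}\le D^{q/k}F_q(\rho)+q\,D^{1/k}/R$, which is \eqref{eq:master-partition}.

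\textbf{Main obstacle.} The hard step is the remainder estimate. The cross terms between $E_c$ and $E_c^\perp$ in $\bra{u_c}\rho_c\ket{u_c}$ do not vanish, so the operator $P^{\Sym}_c-P^E_c$ is not simply a product of $\Pi_c^\perp$'s. What I would try first is a pointwise bound of $f$ in the square-root norm: $\|\rho_c^{1/2}u_c\|\le\|\rho_c^{1/2}\Pi_cu_c\|+\|\rho_c^{1/2}\Pi_c^\perp u_c\|$, with the last term at most $R^{-1/2}$. I would then combine this with Minkowski in $L^{2k}$. If that yields only $R^{-1/2}$ rather than the stated $1/R$, I would instead attempt a recursion in $k$ that peels off one copy lying in $E_c^\perp$ at a time, using $\Tr_{1}[\Pi^\perp_{c,1}\,\cdot\,]$ against $\|\Pi_c^\perp\rho_c\Pi_c^\perp\|\le 1/R$.
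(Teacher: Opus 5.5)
\textbf{There is a genuine gap in the remainder step, and a sharper estimate will not close it: the inequality you aim for is false.} Because $\Pi_c$ is a spectral projector of $\rho_c$, the quantity $\Tr[(P^{\Sym}_c-P^E_c)\rho_c^{\otimes k}]$ equals the sum of $\lambda_{i_1}\cdots\lambda_{i_k}$ over size-$k$ multisets of eigen-indices of $\rho_c$ that contain at least one low index. This sum includes the mixed sectors $\Sym^{k-j}(E_c)\otimes\Sym^{j}(E_c^\perp)$ with $1\le j<k$, and these are weighted by the \emph{high} eigenvalues.

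For a concrete counterexample, take $\rho_c$ with one eigenvalue $1/2$ and the remaining mass spread over $M>R/2$ eigenvalues $1/(2M)<1/R$. This works already for $q=1$. The multisets with $k-1$ copies of the top eigenvector and one low index alone contribute $2^{-(k-1)}\cdot\frac12=2^{-k}$. For $R\ge3$ and large $k$ this is far above $DR^{-k}$. Its $k$-th root, $1/2$, also exceeds $D^{1/k}/R\to1/R$, so even the weaker ``comparable'' version fails.

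Nor can these terms be charged to the main term. Once you replace the other blocks' projectors by $I$, the high marginal eigenvalues bear no relation to $F_C(\rho)$. For instance, with two colors and $\rho=\rho_1\otimes I/\dim\mathcal H_2$, you get $F_C(\rho)=\lambda_{\max}(\rho_1)/\dim\mathcal H_2$. So an additive split of $Q_{k,C}$ followed by $(\sum_ja_j)^{1/k}\le\sum_ja_j^{1/k}$ cannot give the stated bound.

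Incidentally, the $E_c$--$E_c^\perp$ cross terms you worry about do vanish for $\rho_c$. They persist only for the conditional operators $\bra{u_{-c}}\rho\ket{u_{-c}}$. Your $L^{2k}$ fallback, as you suspected, leaves a $2\sqrt{ab}$ cross term after squaring, so it gives only $R^{-1/2}$-type control.

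The missing idea is to split the \emph{operator} rather than the symmetric projector, and to do it at the level of $k$-th roots. Then the mixed sectors are absorbed by the triangle inequality instead of appearing additively.

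\textbf{The one-color step.} For fixed $u_{-c}$, let $A=\bra{u_{-c}}\tau\ket{u_{-c}}$ and write $A=A^{1/2}\Pi_cA^{1/2}+A^{1/2}\Pi_c^\perp A^{1/2}$. Since $\bra{u_c}A\ket{u_c}$ is linear in $A$, Minkowski in $L^k(\mu_c)$ makes $h_k(A)^{1/k}$ subadditive on positive operators, where $h_k(A):=\Tr[P^{\Sym}_cA^{\otimes k}]$. Also, $A^{1/2}\Pi A^{1/2}$ has the same nonzero spectrum as $\Pi A\Pi$. Applying Minkowski again over $\mu_{-c}$ gives $Z_{k,C}(\tau)^{1/k}\le Z_{k,C}(\Pi_c\tau\Pi_c)^{1/k}+Z_{k,C}(\Pi_c^\perp\tau\Pi_c^\perp)^{1/k}$.

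\textbf{Iterating and bounding the pieces.} Apply this color by color, each time to the operator already compressed by the earlier high projectors. In the $c$-th low term, every copy of block $c$ lies in $E_c^\perp$. Your step of bounding the other projectors by $I$ is now harmless. It leaves $h_k(\sigma_c)$ with $0\le\sigma_c\le\Pi_c^\perp\rho_c\Pi_c^\perp\le I/R$ and $\Tr\sigma_c\le1$. The eigenvalues of $\sigma_c$ are weakly majorized by $R$ copies of $1/R$. Monotonicity of complete homogeneous symmetric polynomials under weak majorization then gives $h_k(\sigma_c)\le D/R^k$. Your main-term bound $D^qF_C(\rho)^k$ applies verbatim to the fully compressed operator.
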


The first term comes from the part supported on at most $R$ dimensions
per color. Its bound uses the same Haar-integral representation of
symmetric projectors as in~\cite[Appendix~A]{weinbrenner2026complete}. The second term bounds what is discarded by the blockwise spectral projection. We make the
second term small by choosing $R$ of order $q/\vep_{\mathrm{sp}}$ and
then choose $k$ so that the finite-rank factor is close to one, which gives us the following corollary.

\begin{corollary}[Parameters for blockwise spectral projection]
\label{cor:fidelity-sensitive-core}
Let $n,d,q$ be integers with $n\ge1$, $d\ge2$, $q\ge1$, and
let $\rho\in\D\!\left((\C^d)^{\otimes n}\right)$. For
$0<\vep_{\mathrm{sp}}\le1$, set
\begin{equation}
R:=\left\lceil\frac{4q}{\vep_{\mathrm{sp}}}\right\rceil,
\qquad
k_{\mathrm{sp}}
:=
\left\lceil
\frac{8qR\max\{F_q(\rho),\vep_{\mathrm{sp}}\}}
{\vep_{\mathrm{sp}}}
\log\!\left(
\frac{8eq\max\{F_q(\rho),\vep_{\mathrm{sp}}\}}
{\vep_{\mathrm{sp}}}\right)
\right\rceil.
\end{equation}
For every integer $k\ge k_{\mathrm{sp}}$,
\begin{equation}
Z_k(\rho)^{1/k}\le F_q(\rho)+\vep_{\mathrm{sp}}.
\label{eq:fidelity-sensitive-additive}
\end{equation}
\end{corollary}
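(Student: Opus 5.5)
The plan is to apply Theorem~\ref{thm:explicit-spectral-core} with the stated $R$ and show that each of the two terms in \eqref{eq:master-partition} is within $\vep_{\mathrm{sp}}/2$ of its target. Write $D:=\dim\Sym^k(\C^R)=\binom{k+R-1}{k}$ and $M:=\max\{F_q(\rho),\vep_{\mathrm{sp}}\}$. It suffices to prove
\begin{equation}
D^{q/k}\le 1+\frac{\vep_{\mathrm{sp}}}{2M}.
\label{eq:plan-D-target}
\end{equation}
Granting \eqref{eq:plan-D-target}, the two terms are handled as follows.
\begin{itemize}
\item For the first term, $D^{q/k}F_q(\rho)\le F_q(\rho)+\frac{\vep_{\mathrm{sp}}F_q(\rho)}{2M}\le F_q(\rho)+\vep_{\mathrm{sp}}/2$, because $F_q(\rho)\le M$.
\item For the second term, $M\ge\vep_{\mathrm{sp}}$ and $q\ge1$ give $D^{1/k}\le D^{q/k}\le 3/2$. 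Together with $q/R\le\vep_{\mathrm{sp}}/4$ this yields $\frac qR D^{1/k}\le \frac{3}{8}\vep_{\mathrm{sp}}<\vep_{\mathrm{sp}}/2$.
\end{itemize}
Adding the two bounds gives \eqref{eq:fidelity-sensitive-additive}.

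To reduce \eqref{eq:plan-D-target} to an inequality in $k$, I will use the standard binomial estimate
\begin{equation}
\log D=\log\binom{k+R-1}{R-1}\le (R-1)\log\frac{e(k+R-1)}{R-1}\le R\log\!\Bigl(e\bigl(1+\tfrac kR\bigr)\Bigr).
\end{equation}
The last step uses that $x\mapsto x\log(e(1+k/x))$ is increasing in $x$. Set $x:=\vep_{\mathrm{sp}}/(2M)\le 1/2$. Then $\log(1+x)\ge x/2$, so \eqref{eq:plan-D-target} follows once $\frac{q}{k}\log D\le \frac{\vep_{\mathrm{sp}}}{4M}$. It is therefore enough to show
\begin{equation}
\frac{k}{\log\bigl(e(1+k/R)\bigr)}\ge a,\qquad a:=\frac{4qRM}{\vep_{\mathrm{sp}}}.
\end{equation}
The left side is increasing in $k$, since its derivative has the sign of $\log(e(1+k/R))-\frac{k}{k+R}>0$. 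So it suffices to verify the inequality at $k=k_{\mathrm{sp}}$, or more precisely at the real number $k_0:=2a\log(eu)$ with $u:=8qM/\vep_{\mathrm{sp}}$. Note that $eu$ is exactly the argument of the logarithm in $k_{\mathrm{sp}}$, and that $k_0/R=u\log(eu)$.

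The main obstacle is this last numerical check, which I will do as follows. Since $u\ge 8$, we have $k_0/R\ge1$, hence $e(1+k_0/R)\le 2e\,u\log(eu)$. The required inequality $k_0\ge a\log(e(1+k_0/R))$ becomes $2\log(eu)\ge\log\bigl(2eu\log(eu)\bigr)$, that is, $eu\ge 2\log(eu)$. This holds for every positive argument, since $y\ge 2\log y$ for all $y>0$. Monotonicity then extends the bound to all integers $k\ge k_{\mathrm{sp}}\ge k_0$, which completes the argument.
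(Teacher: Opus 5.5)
Your proof is correct and follows essentially the same route as the paper. Both arguments apply Theorem~\ref{thm:explicit-spectral-core} and show $\dim\Sym^k(\C^R)^{q/k}\le 1+\varepsilon_{\mathrm{sp}}/(2\max\{F_q(\rho),\varepsilon_{\mathrm{sp}}\})$ using a binomial estimate and monotonicity in $k/R$, reducing the final check to $y\ge 2\log y$; the differences are only cosmetic (your $(e(1+k/R))^R$ and $3/2$ in place of the paper's $(2ek/R)^R$ and $2$).
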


\begin{proof}
Set $s:=8q\max\{F_q(\rho),\vep_{\mathrm{sp}}\}/\vep_{\mathrm{sp}}\ge8$.
The choice of $k$ gives $k/R\ge s\log(es)\ge1$. The binomial bound
$\dim\Sym^k(\C^R)\le(2ek/R)^R$ and the fact that
$x\mapsto\log(2ex)/x$ decreases for $x\ge1$ give
\begin{equation}
\begin{aligned}
\frac{q}{k}\log\dim\Sym^k(\C^R)
&\le\frac{qR}{k}\log(2ek/R)\\
&\le\frac{q\log(2es\log(es))}{s\log(es)}\\
&\le\frac{2q}{s}
=\frac{\vep_{\mathrm{sp}}}{4\max\{F_q(\rho),\vep_{\mathrm{sp}}\}}
\le\frac14.
\end{aligned}
\end{equation}
The third inequality uses $2\log(es)\le es$.
Using $e^x\le1+2x$ for $0\le x\le1/4$, we obtain
\begin{equation}
\bigl(\dim\Sym^k(\C^R)\bigr)^{q/k}
\le1+\frac{\vep_{\mathrm{sp}}}{2\max\{F_q(\rho),\vep_{\mathrm{sp}}\}}
\le2.
\end{equation}
Since $q\ge1$, the factor
$\bigl(\dim\Sym^k(\C^R)\bigr)^{1/k}$ is also at most $2$.
Thus \Cref{thm:explicit-spectral-core} and
$q/R\le\vep_{\mathrm{sp}}/4$ give
\begin{equation}
\begin{aligned}
Z_k(\rho)^{1/k}
&\le F_q(\rho)
+\frac{\vep_{\mathrm{sp}}F_q(\rho)}{2\max\{F_q(\rho),\vep_{\mathrm{sp}}\}}
+\frac{2q}{R}\\
&\le F_q(\rho)+\vep_{\mathrm{sp}}.
\end{aligned}
\end{equation}
\end{proof}

For the proof of \Cref{thm:explicit-spectral-core}, use the marginal of $\rho$ on each color class to select
its eigenspaces above the threshold $1/R$. Each selected eigenspace
has dimension less than $R$. After all colors have been compressed,
the test on the remaining operator is controlled by $F_C(\rho)$.
Each discarded operator has a marginal bounded by $I/R$ on the color
where it was discarded.

For a fixed coloring $C$, let $Q_{k,c}$ denote the block projector onto
$\Sym^k(\mathcal H_c)$ on the $k$ copies of block $B_c$,
acting as the identity on all other blocks. Set
$Q_{k,C}:=\prod_{c\in[q]}Q_{k,c}$ and
$Z_{k,C}(\rho):=\Tr[Q_{k,C}\rho^{\otimes k}]$.
This is the blockwise test's acceptance probability, whereas
$Z_k(\rho)=\Tr[Q_k\rho^{\otimes k}]$ is its registerwise counterpart.

The following lemma makes these estimates
precise for mixed inputs,
which naturally implies \Cref{thm:explicit-spectral-core}.

\begin{lemma}\label{lem:mixed-spectral-core}
Fix $C:[n]\to[q]$, $R\ge2$, $k\ge1$, and
$\rho\in\D\!\left((\C^d)^{\otimes n}\right)$.
For each $c\in[q]$, let
$\rho_c:=\Tr_{-c}\rho$,
$\Pihi_c:=\mathbf1_{(1/R,1]}(\rho_c)$ be the projector to the eigenspaces of $\rho_c$ with eigenvalues larger than $1/R$,
$\Pi^{\mathrm{lo}}_c:=I-\Pihi_c$, and
$\Pihi_{<c}:=\prod_{j<c}\Pihi_j$, with $\Pihi_{<1}=I$.
Set $\omega_0^{\mathrm{hi}}:=\rho$ and
\begin{equation}
\omega_c^{\mathrm{hi}}
:=\Pihi_c\Pihi_{<c}\rho\Pihi_{<c}\Pihi_c,
\qquad
\omega_c^{\mathrm{lo}}
:=\Pi^{\mathrm{lo}}_c\Pihi_{<c}\rho\Pihi_{<c}\Pi^{\mathrm{lo}}_c.
\end{equation}
Note that $\omega_q^{\mathrm{hi}} = \p{\bigotimes_c \Pihi_c} \rho \p{\bigotimes_c \Pihi_c}$.
Then:
\begin{enumerate}[label=(\roman*)]
\item The successive compressions satisfy
\begin{equation}
Z_k(\rho)^{1/k}\le 
Z_{k,C}(\rho)^{1/k}
\le
Z_{k,C}(\omega_q^{\mathrm{hi}})^{1/k}
+\sum_{c=1}^{q}Z_{k,C}(\omega_c^{\mathrm{lo}})^{1/k}.
\end{equation}
\item The high- and low-eigenspace parts obey
\begin{equation}
\begin{aligned}
Z_{k,C}(\omega_q^{\mathrm{hi}})^{1/k}
&\le \dim\Sym^k(\C^R)^{q/k}F_C(\rho),\\
Z_{k,C}(\omega_c^{\mathrm{lo}})^{1/k}
&\le \frac{\dim\Sym^k(\C^R)^{1/k}}{R}
\qquad \forall c\in[q].
\end{aligned}
\end{equation}
\end{enumerate}
\end{lemma}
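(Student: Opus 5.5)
The plan is to work throughout with the Haar-integral representation of symmetric projectors. For any space $\mathcal K$ of dimension $D$ we have $\Pi_{\Sym^k(\mathcal K)}=\dim\Sym^k(\mathcal K)\int\ket{u}\!\bra{u}^{\otimes k}\,du$, with $u$ Haar-random on the unit sphere of $\mathcal K$. Applying this blockwise gives, for every positive $\sigma$,
\begin{equation}
Z_{k,C}(\sigma)=\Bigl(\prod_{c}\dim\Sym^k(\mathcal H_c)\Bigr)\int\bigl\langle u\big|\sigma\big|u\bigr\rangle^k\,du,
\qquad \ket{u}=\bigotimes_c\ket{u_c}.
\end{equation}
Hence $Z_{k,C}(\sigma)^{1/k}$ is, up to a constant, the $L^k$ norm of the nonnegative function $f_\sigma(u)=\langle u|\sigma|u\rangle$, which is linear in $\sigma$. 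In particular $\sigma\mapsto Z_{k,C}(\sigma)^{1/k}$ is monotone and subadditive on positive operators. I would prove (ii) first, since it is mechanical, and then (i).

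\textbf{Part (ii), high part.} The operator $\omega_q^{\mathrm{hi}}$ is supported on $\bigotimes_c E_c$ with $E_c=\ran\Pihi_c$. Since $\Tr\rho_c=1$, each $E_c$ has dimension less than $R$. On this support, $Q_{k,c}$ acts as the projector onto $\Sym^k(E_c)$. Applying the Haar formula inside $E_c$ gives
\begin{equation}
Z_{k,C}(\omega_q^{\mathrm{hi}})\le\prod_c\dim\Sym^k(E_c)\cdot\max_{u}\langle u|\omega_q^{\mathrm{hi}}|u\rangle^k .
\end{equation}
Here $u$ ranges over unit vectors that are product across blocks with $u_c\in E_c$, so $\omega_q^{\mathrm{hi}}$ acts on them like $\rho$. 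Therefore $\langle u|\omega_q^{\mathrm{hi}}|u\rangle\le F_C(\rho)$, and monotonicity of $\dim\Sym^k(\C^{r})$ in $r$ yields the first bound.

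\textbf{Part (ii), low part.} Use $Q_{k,C}\le Q_{k,c}$ and trace out everything except block $c$. This gives $Z_{k,C}(\omega_c^{\mathrm{lo}})\le\Tr[\Pi_{\Sym^k(\mathcal H_c)}\tau^{\otimes k}]$ with $\tau=\Pi^{\mathrm{lo}}_c\Tr_{-c}(\Pihi_{<c}\rho\Pihi_{<c})\Pi^{\mathrm{lo}}_c$. The projector $\Pihi_{<c}$ acts only on the traced-out blocks, so $\Tr_{-c}(\Pihi_{<c}\rho\Pihi_{<c})\le\rho_c$. Hence $\tau\le\Pi^{\mathrm{lo}}_c\rho_c\Pi^{\mathrm{lo}}_c$, which gives $\|\tau\|\le1/R$ and $\Tr\tau\le1$.

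I would not bound the result by the dimension of $\ran\Pi^{\mathrm{lo}}_c$, which can be huge. Instead, peel off $k-1$ factors pointwise in the Haar integral, $\langle u|\tau|u\rangle^k\le\|\tau\|^{k-1}\langle u|\tau|u\rangle$. Integrating back gives
\begin{equation}
\Tr[\Pi_{\Sym^k}\tau^{\otimes k}]\le\|\tau\|^{k-1}\Tr[\Pi_{\Sym^k}(\tau\otimes I^{\otimes k-1})]=\|\tau\|^{k-1}\Tr\tau\le R^{-(k-1)}.
\end{equation}
Then $Z^{1/k}\le R^{1/k}/R\le\dim\Sym^k(\C^R)^{1/k}/R$, using $\dim\Sym^k(\C^R)\ge R$.

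\textbf{Part (i).} The first inequality is immediate, since $Q_k\le Q_{k,C}$: each registerwise symmetric projector on a block is dominated by the blockwise one. For the second, I would telescope $\omega_{c-1}^{\mathrm{hi}}$ into $\omega_c^{\mathrm{hi}}$ and $\omega_c^{\mathrm{lo}}$ one color at a time. It then suffices to show, for each $c$,
\begin{equation}
Z_{k,C}(\omega_{c-1}^{\mathrm{hi}})^{1/k}\le Z_{k,C}(\omega_c^{\mathrm{hi}})^{1/k}+Z_{k,C}(\omega_c^{\mathrm{lo}})^{1/k},
\end{equation}
i.e.\ the $L^k$ norm of $f_\sigma$ is at most the sum for the two pinched pieces $P\sigma P$ and $P^\perp\sigma P^\perp$, where $P=\Pihi_c$.

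\textbf{Main obstacle.} This pinching step is where I expect the difficulty: $f_\sigma\neq f_{P\sigma P}+f_{P^\perp\sigma P^\perp}$ because of the cross terms. My first attempt would exploit the symmetry $u_c\mapsto U_\theta u_c$ with $U_\theta=P+e^{i\theta}P^\perp$, which preserves the Haar measure and commutes with $Q_{k,c}$. After averaging over $\theta$, only charge-balanced terms survive in the expansion of $\Tr[Q_{k,C}\sigma^{\otimes k}]$. One could then try to dominate the mixed terms via Cauchy--Schwarz against the positive operator $Q_{k,C}$, combined with H\"older in $L^k$.

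The naive routes lose constants. Minkowski applied to $\sqrt{f_\sigma}$ gives only the square-root version, and the bound $\sigma\le2(P\sigma P+P^\perp\sigma P^\perp)$ gives a factor $2$ per color. If the exact additive form resists, I would check whether such a lossy version still suffices downstream for Corollary~\ref{cor:fidelity-sensitive-core}: the square-root version after adjusting $R$, or the factor-$2$ version after absorbing $2^{q/k}$ into the choice of $k$.
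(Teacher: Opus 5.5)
Your first inequality in (i) and your high-part bound in (ii) are fine and match the paper. However, the proposal has two genuine gaps.

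\textbf{The pinching step in (i).} You correctly identify the one-color pinching step as the crux, but you leave it unproved, so the second inequality of (i) is not established. The missing idea is to first Haar-integrate out the blocks other than $c$:
$Z_{k,C}(\tau)=\bigl(\prod_{j\neq c}\dim\Sym^k(\mathcal H_j)\bigr)\int h_k\bigl(\bra{u_{-c}}\tau\ket{u_{-c}}\bigr)\,d\mu_{-c}(u_{-c})$,
where $h_k(A):=\Tr(Q_{k,c}A^{\otimes k})$ acts on block $c$ alone. Because $Q_{k,c}$ commutes with $U^{\otimes k}$ for every unitary $U$ on $\mathcal H_c$, $h_k(A)$ depends only on the spectrum of $A$.

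For $A=\bra{u_{-c}}\tau\ket{u_{-c}}$ there is an exact positive decomposition with no cross terms: $A=A^{1/2}\Pihi_c A^{1/2}+A^{1/2}\Pi^{\mathrm{lo}}_c A^{1/2}$. Moreover, $A^{1/2}\Pihi_c A^{1/2}=XX^\dagger$ with $X=A^{1/2}\Pihi_c$ has the same spectrum as $X^\dagger X=\Pihi_c A\Pihi_c=\bra{u_{-c}}\Pihi_c\tau\Pihi_c\ket{u_{-c}}$, and likewise for $\Pi^{\mathrm{lo}}_c$. Subadditivity of $h_k^{1/k}$ then gives the pointwise bound, and a second Minkowski inequality over $\mu_{-c}$ gives the exact one-color step.

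This spectral trick needs full unitary invariance on a single block. That is why it must be applied after conditioning on $u_{-c}$, not to your global function $f_\sigma$ on product vectors.

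Your fallbacks do not rescue the lemma. The bound $\sigma\le2(P\sigma P+P^\perp\sigma P^\perp)$ costs a factor $2$, not $2^{1/k}$, per color on $Z_{k,C}^{1/k}$, because $Z_{k,C}(2X)^{1/k}=2Z_{k,C}(X)^{1/k}$. This compounds to $2^q$, which cannot be absorbed into $k$. The $L^{2k}$ (square-root) version is valid but strictly weaker than the stated inequality.

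\textbf{The low-part bound in (ii).} Your argument rests on a false identity. In fact $\Tr_{2,\ldots,k}\Pi_{\Sym^k(\mathcal H_c)}=\frac{\dim\Sym^k(\mathcal H_c)}{\dim\mathcal H_c}I$, so $\Tr[\Pi_{\Sym^k}(\tau\otimes I^{\otimes(k-1)})]=\frac{\dim\Sym^k(\mathcal H_c)}{\dim\mathcal H_c}\Tr\tau$, not $\Tr\tau$. Your peeled bound therefore reintroduces exactly the block dimension you wanted to avoid.

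Your intermediate claim $\Tr[\Pi_{\Sym^k}\tau^{\otimes k}]\le R^{-(k-1)}$ is itself false. Let $\rho_c$ be maximally mixed on an $R$-dimensional subspace. Then $\Pihi_c=0$, since the eigenvalues equal but do not exceed $1/R$. For $c=1$ one gets $\tau=\rho_c$ and $\Tr[\Pi_{\Sym^k}\tau^{\otimes k}]=\binom{R+k-1}{k}R^{-k}$, e.g.\ $3/4>1/2$ at $R=k=2$. So the lemma's bound is tight here, and the peeling $\langle u|\tau|u\rangle^k\le\|\tau\|^{k-1}\langle u|\tau|u\rangle$ is too lossy.

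You need to use $\tau\le I/R$ and $\Tr\tau\le1$ jointly on the spectrum. $\Tr[\Pi_{\Sym^k}\tau^{\otimes k}]$ is the complete homogeneous symmetric polynomial of the eigenvalues of $\tau$. These eigenvalues are weakly majorized by $R$ copies of $1/R$ (zero-padded), and the paper concludes via the weak-majorization inequality for these polynomials. Alternatively, by the Haar formula this polynomial is convex on the nonnegative orthant. Its maximum over $\{\lambda\in[0,1/R]^D:\sum_i\lambda_i\le1\}$, with $D=\dim\mathcal H_c$, is then attained at a vertex, which has at most $R$ entries equal to $1/R$ and the rest zero.
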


\begin{proof}
\noindent\textit{Part (i).}

Let $\tau$ be any positive operator on all the registers and fix $c\in[q]$.
For any positive operator $A$ on $\mathcal H_c$, define
\begin{equation}
h_k(A):=\Tr(Q_{k,c}A^{\otimes k})
=\dim\Sym^k(\mathcal H_c)
\int\langle u_c|A|u_c\rangle^k\,d\mu_c(u_c),
\end{equation}
where $\mu_c$ is the Haar probability measure on unit vectors in $\mathcal H_c$.
Minkowski's inequality implies that $h_k(\cdot)^{1/k}$ is subadditive
on positive operators. The Haar representation on the remaining colors gives
\begin{equation}\label{eq:haar-formula}
Z_{k,C}(\tau)
=
\left(\prod_{j\in[q]\setminus\{c\}}
\dim\Sym^k(\mathcal H_j)\right)
\int h_k\!\left(\bra{u_{-c}}\tau\ket{u_{-c}}\right)
\,d\mu_{-c}(u_{-c}),
\end{equation}
where $\mu_{-c}$ is the product of the Haar measures on the color blocks other than $c$.
For each fixed $\ket{u_{-c}}$, set
$A:=\bra{u_{-c}}\tau\ket{u_{-c}}\ge0$.
Although $\Pihi_c$ need not commute with $A$, we have the positive decomposition
$A=A^{1/2}\Pihi_c A^{1/2}+A^{1/2}\Pi^{\mathrm{lo}}_c A^{1/2}$,
since $\Pihi_c+\Pi^{\mathrm{lo}}_c=I$.
Note that the operators $A^{1/2}\Pihi_c A^{1/2}$ and $\Pihi_c A\Pihi_c$ have the same nonzero
eigenvalues; same for $\Pi^{\mathrm{lo}}_c$.
As $h_k$ depends only on the eigenvalues and is unchanged by adjoining
zeros, subadditivity applied to this decomposition yields
\begin{equation}
h_k\!\left(\bra{u_{-c}}\tau\ket{u_{-c}}\right)^{1/k}
\le
h_k\!\left(\bra{u_{-c}}\Pihi_c\tau\Pihi_c
\ket{u_{-c}}\right)^{1/k}
+
h_k\!\left(\bra{u_{-c}}\Pi^{\mathrm{lo}}_c\tau
\Pi^{\mathrm{lo}}_c\ket{u_{-c}}\right)^{1/k}.
\end{equation}

Apply Minkowski's inequality over $\mu_{-c}$ to the functions
$\ket{u_{-c}}\mapsto h_k(\bra{u_{-c}}\Pi\tau\Pi\ket{u_{-c}})^{1/k}$,
for $\Pi\in\{\Pihi_c,\Pi^{\mathrm{lo}}_c\}$. The pointwise bound above
then gives
\begin{equation}
\left(\int h_k\!\left(\bra{u_{-c}}\tau\ket{u_{-c}}\right)\,d\mu_{-c}\right)^{1/k}\le\sum_{\Pi\in\{\Pihi_c,\Pi^{\mathrm{lo}}_c\}}\left(\int h_k\!\left(\bra{u_{-c}}\Pi\tau\Pi\ket{u_{-c}}\right)\,d\mu_{-c}\right)^{1/k}.
\end{equation}
Multiplying by
$\left(\prod_{j\in[q]\setminus\{c\}}\dim\Sym^k(\mathcal H_j)\right)^{1/k}$
and combining with \Cref{eq:haar-formula}, we obtain
\begin{equation}\label{eq:one-color-spectral-step}
Z_{k,C}(\tau)^{1/k}\le Z_{k,C}(\Pihi_c\tau\Pihi_c)^{1/k}+Z_{k,C}(\Pi^{\mathrm{lo}}_c\tau\Pi^{\mathrm{lo}}_c)^{1/k},
\end{equation}
which holds for any positive operator $\tau$ and color $c\in [q]$.

Apply \Cref{eq:one-color-spectral-step} with
$\tau=\omega_{c-1}^{\mathrm{hi}}$. The projectors for different colors
commute, so the two compressed operators are $\omega_c^{\mathrm{hi}}$
and $\omega_c^{\mathrm{lo}}$. Thus for all $c\in [q]$, 
\begin{equation}
Z_{k,C}(\omega_{c-1}^{\mathrm{hi}})^{1/k}
\le Z_{k,C}(\omega_c^{\mathrm{hi}})^{1/k}
+Z_{k,C}(\omega_c^{\mathrm{lo}})^{1/k}.
\end{equation}
Summing over $c\in[q]$ cancels the intermediate high terms and gives
the second inequality in (i). The registerwise accepting projector is
contained in $Q_{k,C}$, so $Z_k(\rho)\le Z_{k,C}(\rho)$, giving the
first inequality.

\noindent\textit{Part (ii).}
Every eigenvalue selected by $\Pihi_c$ is larger than $1/R$, so
$\operatorname{rank}\Pihi_c\le R-1$. Write
$E_c:=\operatorname{ran}\Pihi_c$. The operator
$\omega_q^{\mathrm{hi}}$ is supported on $\bigotimes_{c\in[q]}E_c$.
If some $E_c=\{0\}$, the core is zero and its bound is immediate.
Otherwise, let $\mu_C$ be the product Haar probability measure on unit
vectors $\ket{u_c}\in E_c$, and put
$\ket{u}:=\bigotimes_{c\in[q]}\ket{u_c}$. On this support, each
$Q_{k,c}$ acts as the symmetric projector on $E_c^{\otimes k}$.
Using its integral representation
\cite[Appendix~A, Eqs.~(A3)--(A5)]{weinbrenner2026complete}, we get
\begin{align}
Z_{k,C}(\omega_q^{\mathrm{hi}})
&=\left(\prod_{c\in[q]}\dim\Sym^k(E_c)\right)
\int\!\left(\bra{u}\omega_q^{\mathrm{hi}}\ket{u}\right)^k
\,d\mu_C(u)\notag\\
&=\left(\prod_{c\in[q]}\dim\Sym^k(E_c)\right)
\int\!\left(\bra{u}\rho\ket{u}\right)^k
\,d\mu_C(u)\notag\\
&\le\left(\prod_{c\in[q]}\dim\Sym^k(E_c)\right)
F_C(\rho)^k\notag\\
&\le\bigl(\dim\Sym^k(\C^R)\bigr)^q F_C(\rho)^k.
\end{align}
The second equality uses $\Pihi_c\ket{u_c}=\ket{u_c}$ for every
$c\in[q]$; the inequalities use the definition of $F_C(\rho)$ and
$\dim E_c<R$.

For a discarded part, put
$\sigma_c:=\Tr_{-c}\omega_c^{\mathrm{lo}}$.
Accepting all colors implies accepting color $c$, so
$Q_{k,C}\le Q_{k,c}$. Tracing out the other colors therefore gives
\begin{equation}
\begin{aligned}
Z_{k,C}(\omega_c^{\mathrm{lo}})
&=\Tr\!\left(Q_{k,C}(\omega_c^{\mathrm{lo}})^{\otimes k}\right)\\
&\le\Tr\!\left(Q_{k,c}(\omega_c^{\mathrm{lo}})^{\otimes k}\right)\\
&=\Tr\!\left(Q_{k,c}\sigma_c^{\otimes k}\right)=h_k(\sigma_c).
\end{aligned}
\end{equation}
Here $Q_{k,c}$ performs the $k$-copy permutation test on color block $B_c$, namely
projection onto $\Sym^k(\mathcal H_c)$, and acts as the identity on
the other registers. Since $\sigma_c$ is subnormalized,
$h_k(\sigma_c)$ is the joint probability of obtaining the discarded
part on all $k$ copies and having this test accept. Let $\lambda_i$'s be the eigenvalues of $\sigma_c$, we have
\begin{equation}
h_k(\sigma_c)
=\sum_{1\le i_1\le\cdots\le i_k\le\dim\mathcal H_c}
\lambda_{i_1}\cdots\lambda_{i_k}.
\end{equation}
Thus $h_k$ is the complete homogeneous symmetric polynomial of
degree $k$ in these eigenvalues.

The earlier projections act on other colors; taking the partial trace
therefore gives
\begin{equation}
0\le\sigma_c\le\Pi^{\mathrm{lo}}_c\rho_c\Pi^{\mathrm{lo}}_c
\le\frac{I}{R},
\qquad \Tr\sigma_c\le1.
\end{equation}
Pad the eigenvalues of $\sigma_c$ and a vector with $R$ entries
$1/R$ by zeros to a common length. The former are weakly majorized
by the latter, since the sum of their $j$ largest values is at most
$\min\{j/R,1\}$. The weak-majorization inequality for complete
homogeneous symmetric polynomials
\cite[Corollary~3.3]{issa2021applications} therefore bounds
$h_k(\sigma_c)$ by its value on this vector. The normalization in
that result cancels because the vectors have the same length, giving
\begin{equation}
h_k(\sigma_c)\le\frac{\dim\Sym^k(\C^R)}{R^k}.
\end{equation}
Combining the two bounds and taking $k$-th roots proves (ii).
\end{proof}

\subsection{The Tolerant Testing Algorithm}
\label{subsec:tester-complete}

We now combine \Cref{cor:fq_vs_ffull} and
\Cref{cor:fidelity-sensitive-core} to obtain the tester.
Fix $1\ge a>b\ge0$, $\delta\in(0,1/2)$, $n\ge1$, $d\ge2$, and
$\rho\in\D((\C^d)^{\otimes n})$.
The coloring and spectral projectors are witnesses in the soundness
analysis. The algorithm performs only the local symmetric-subspace
measurements of Protocol~\ref{prot:k-copy-product-test}.

\begin{parameterbox}
Let $\Delta:=a-b$. We split the gap between the errors from random
coloring and blockwise spectral projection by setting
\begin{equation}
  c:=b+\frac{\Delta}{4},
  \qquad
  \eta:=\frac{\Delta}{4}.
  \label{eq:tolerant-c-eta}
\end{equation}
Thus $b<c<a$ and $0<\eta\le c\le1$. Choose
\begin{equation}
  \begin{aligned}
    q&:=\left\lceil\frac{3}{c^2-b^2}\right\rceil,\\
    R&:=\left\lceil\frac{4q}{\eta}\right\rceil,\\
    k&:=\left\lceil
      \frac{8qRc}{\eta}
      \log\!\left(\frac{8eqc}{\eta}\right)
    \right\rceil.
  \end{aligned}
  \label{eq:tester-q-R-k}
\end{equation}
The values of $q$ and $R$ set the test size $k$, but the algorithm
does not construct a coloring or implement blockwise spectral projection. Write
\begin{equation}
  p_{\mathrm Y}:=a^k,
  \qquad
  p_{\mathrm N}:=\left(\frac{a+b}{2}\right)^k,
  \label{eq:tester-one-trial-probabilities}
\end{equation}
so $p_{\mathrm Y}>p_{\mathrm N}$. Set
\begin{equation}
  T:=\left\lceil
    \frac{\min\{2,8p_{\mathrm Y}\}}{(p_{\mathrm Y}-p_{\mathrm N})^2}
    \log\frac{4}{\delta}
  \right\rceil.
  \label{eq:combined-repetitions}
\end{equation}
\end{parameterbox}

\begin{protocol}[Fully tolerant product-state tester]
\label{prot:fully-tolerant-tester}
Given $a,b,\delta$ and independent copies of $\rho$, proceed as follows.
\begin{enumerate}
  \item Set $k$, $T$, $p_{\mathrm Y}$, and $p_{\mathrm N}$ as above.
  \item Run Protocol~\ref{prot:k-copy-product-test} on $T$
  independent groups of $k$ copies, and let $X_j\in\{0,1\}$ indicate
  acceptance in trial $j$.
  \item Accept if $T^{-1}\sum_{j=1}^{T}X_j\ge
  (p_{\mathrm Y}+p_{\mathrm N})/2$; otherwise, reject.
\end{enumerate}
\end{protocol}

The next lemma turns the two bounds from the preceding subsections
into a gap between the one-trial acceptance probabilities.

\begin{lemma}[One-trial acceptance bounds]
\label{lem:tester-one-trial-separation}
For the parameters in \eqref{eq:tolerant-c-eta}--\eqref{eq:tester-one-trial-probabilities}
and any $\rho\in\D((\C^d)^{\otimes n})$, the $k$-copy test satisfies
\begin{equation}
  \begin{aligned}
    \Fprod(\rho)\ge a&\quad\Longrightarrow\quad Z_k(\rho)\ge p_{\mathrm Y},\\
    \Fprod(\rho)\le b&\quad\Longrightarrow\quad Z_k(\rho)\le p_{\mathrm N}.
  \end{aligned}
  \label{eq:tester-one-trial-separation}
\end{equation}
\end{lemma}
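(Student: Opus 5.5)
The YES direction follows from the Claim after \Cref{prot:k-copy-product-test}: $Z_k(\rho)\ge\Fprod(\rho)^k\ge a^k=p_{\mathrm Y}$. It remains to treat the NO case. There I would chain \Cref{cor:fq_vs_ffull} and \Cref{cor:fidelity-sensitive-core}, showing that $Z_k(\rho)^{1/k}\le c+\eta=b+\Delta/2=(a+b)/2$, and then raise both sides to the $k$-th power.

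\emph{Step 1 (coloring).} Assume $\Fprod(\rho)\le b$. Apply \Cref{thm: random_color tester} with $q=\lceil 3/(c^2-b^2)\rceil$. This gives
\[
F_q(\rho)^2\le \Fprod(\rho)^2+\frac{3}{q}\le b^2+(c^2-b^2)=c^2,
\]
so $F_q(\rho)\le c$. Note that $c^2-b^2>0$ and $c\le 1$, so $q$ is a well-defined positive integer. Equivalently, one can invoke \Cref{cor:fq_vs_ffull} with $\vepcolor=c^2-b^2\le1$.

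\emph{Step 2 (spectral projection).} Apply \Cref{cor:fidelity-sensitive-core} with $\vep_{\mathrm{sp}}=\eta=\Delta/4\in(0,1]$ and the same $q$. The value of $R$ matches \eqref{eq:tester-q-R-k}. The corollary's threshold $k_{\mathrm{sp}}$ involves $\max\{F_q(\rho),\eta\}$, which is unknown to the algorithm. By Step 1 and $\eta\le c$, we have $\max\{F_q(\rho),\eta\}\le c$.

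The threshold $k_{\mathrm{sp}}$ is monotone nondecreasing in this max: the prefactor grows linearly, and the logarithm's argument $8eq\max\{\cdot\}/\eta\ge 8e>1$ grows as well. Hence $k_{\mathrm{sp}}\le k$ with $k$ defined using $c$, and the corollary yields $Z_k(\rho)^{1/k}\le F_q(\rho)+\eta\le c+\eta=(a+b)/2$. Therefore $Z_k(\rho)\le p_{\mathrm N}$.

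The only delicate point is this monotonicity step, replacing the state-dependent $\max\{F_q(\rho),\vep_{\mathrm{sp}}\}$ by the uniform bound $c$ so that one fixed $k$ works for every NO instance. Its key ingredient is that $\eta\le c$ holds by construction, since $c=b+\eta$ with $b\ge 0$. Everything else is direct substitution. Finally, $p_{\mathrm Y}>p_{\mathrm N}$ holds since $a>(a+b)/2\ge0$.
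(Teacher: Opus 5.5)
Your proposal is correct and follows the paper's proof essentially step for step. Completeness comes from $Z_k(\rho)\ge\Fprod(\rho)^k$; soundness comes from \Cref{cor:fq_vs_ffull} with $\vepcolor=c^2-b^2$ to get $F_q(\rho)\le c$, then \Cref{cor:fidelity-sensitive-core} with $\vep_{\mathrm{sp}}=\eta$, using $\max\{F_q(\rho),\eta\}\le c$ to conclude $k\ge k_{\mathrm{sp}}$ and $Z_k(\rho)^{1/k}\le c+\eta=(a+b)/2$. Your explicit monotonicity argument for $k_{\mathrm{sp}}\le k$ spells out a step that the paper asserts in one line.
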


\begin{proof}
Put $f:=\Fprod(\rho)$, and choose a pure product vector $\ket{\phi}$
attaining $f$. The accepting projector
$Q_k=\prod_{i=1}^{n}Q_{k,i}$ contains
$\ket{\phi}^{\otimes k}$ in its range. Hence
$Q_k\ge(\ket{\phi}\!\bra{\phi})^{\otimes k}$, and
\begin{equation}
  Z_k(\rho)
  =\Tr(Q_k\rho^{\otimes k})
  \ge\bigl(\bra{\phi}\rho\ket{\phi}\bigr)^k
  =f^k.
  \label{eq:tester-completeness}
\end{equation}
This proves the YES bound, including for mixed inputs.

For soundness, apply Corollary~\ref{cor:fq_vs_ffull} with
$\vepcolor:=c^2-b^2$. We have $0<\vepcolor\le1$, and our choice
of $q$ is exactly $\lceil3/\vepcolor\rceil$. In the NO case,
$f\le b$, so the corollary gives
\begin{equation}
  F_q(\rho)^2
  \le f^2+\vepcolor
  \le b^2+(c^2-b^2)
  =c^2.
  \label{eq:no-Fq-bound}
\end{equation}
Thus $F_q(\rho)\le c$. We next apply
Corollary~\ref{cor:fidelity-sensitive-core} with
$\vep_{\mathrm{sp}}:=\eta$. Since $c\ge\eta$, we have
$\max\{F_q(\rho),\eta\}\le c$, so our choice of $k$ is at least
$k_{\mathrm{sp}}$. Consequently,
\begin{equation}
  Z_k(\rho)^{1/k}
  \le F_q(\rho)+\eta
  \le c+\eta
  =\frac{a+b}{2}.
  \label{eq:tester-soundness}
\end{equation}
Raising both sides to the $k$th power proves the NO bound.
\end{proof}

\begin{theorem}[Fully tolerant product-state testing]
\label{thm:fully-tolerant-tester}
Let $1\ge a>b\ge0$, let $\Delta:=a-b$, and let
$\delta\in(0,1/2)$. For every $n\ge1$ and finite local dimension
$d\ge2$, Protocol~\ref{prot:fully-tolerant-tester}, given independent
copies of any unknown density operator
$\rho\in\D((\C^d)^{\otimes n})$, distinguishes
$\Fprod(\rho)\ge a$ from $\Fprod(\rho)\le b$ with success
probability at least $1-\delta$. It uses
$2^{\widetilde O(\Delta^{-5})}\log(2/\delta)$ copies and
$2^{\widetilde O(\Delta^{-5})}n\,
\poly\!\bigl(\log(2nd/\delta)\bigr)$ quantum gates.
\end{theorem}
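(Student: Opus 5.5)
The plan is to combine \Cref{lem:tester-one-trial-separation} with a standard concentration argument for the empirical acceptance frequency, and then to count resources from the parameter box.

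\textbf{Correctness.} By \Cref{lem:tester-one-trial-separation}, in the YES case each indicator $X_j$ is Bernoulli with mean $Z_k(\rho)\ge p_{\mathrm Y}$. In the NO case the mean is at most $p_{\mathrm N}$. The trials use disjoint groups of copies, so the $X_j$ are i.i.d. The threshold $(p_{\mathrm Y}+p_{\mathrm N})/2$ sits at distance $(p_{\mathrm Y}-p_{\mathrm N})/2$ from both endpoints.

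In the YES case, an error requires the empirical mean to fall at least $(p_{\mathrm Y}-p_{\mathrm N})/2$ below its true mean. In the NO case, an error requires it to exceed its mean by the same amount. Monotonicity in the mean lets us reduce to the boundary values $p_{\mathrm Y}$ and $p_{\mathrm N}$. I would bound both tails two ways:
\begin{itemize}
\item Hoeffding gives $\exp(-T(p_{\mathrm Y}-p_{\mathrm N})^2/2)$, which matches the constant $2$ in $T$.
\item Bernstein/Chernoff uses the variance bound $p(1-p)\le p_{\mathrm Y}$ at the relevant endpoint and gives $\exp\bigl(-T(p_{\mathrm Y}-p_{\mathrm N})^2/(8p_{\mathrm Y})\bigr)$. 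This matches the $8p_{\mathrm Y}$ branch.
\end{itemize}
Taking the better of the two, the choice of $T$ makes each tail at most $\delta/4$, so the failure probability is at most $\delta$. The only delicate point is checking the constants in the Bernstein branch, where the deviation $t=(p_{\mathrm Y}-p_{\mathrm N})/2$ is at most the mean. I expect this to be routine.

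\textbf{Sample complexity.} The main obstacle is the bookkeeping that shows $kT=2^{\widetilde O(\Delta^{-5})}\log(2/\delta)$. Set $c=b+\Delta/4$. Then $c^2-b^2=(c-b)(c+b)\ge(\Delta/4)^2$, which gives $q=O(\Delta^{-2})$. Next, $R=O(q/\eta)=O(\Delta^{-3})$. Using $c\le1$,
\[
k=O\bigl(qR\,\eta^{-1}\log(q/\eta)\bigr)=O\bigl(\Delta^{-6}\log(1/\Delta)\bigr).
\]
This is $\widetilde O(\Delta^{-6})$, but the theorem claims only $\widetilde O(\Delta^{-5})$ in the exponent, and $k$ itself does not appear in an exponent. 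So the real question is how $T$ scales.

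The gap satisfies $p_{\mathrm Y}-p_{\mathrm N}=a^k\bigl(1-((a+b)/(2a))^k\bigr)$. Since $(a+b)/(2a)=1-\Delta/(2a)\le1-\Delta/2$, the bracket is at least $1-e^{-k\Delta/2}\ge1/2$, because $k\Delta\gg1$. So $(p_{\mathrm Y}-p_{\mathrm N})^{-2}\lesssim a^{-2k}$. Using the $8p_{\mathrm Y}$ branch improves this to $T\lesssim a^{-k}\log(4/\delta)$.

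The difficulty is that $a^{-k}$ blows up when $a$ is small. I would use $a\ge\Delta$ to get $a^{-k}\le\Delta^{-k}=2^{k\log(1/\Delta)}=2^{\widetilde O(\Delta^{-6})}$. To match the stated exponent, I would sharpen this by choosing $c$ multiplicatively rather than additively, or by noting that $c\ge\Delta/4$ gives $c^2-b^2\ge c\,\Delta/4$, hence $q=O(1/(c\Delta))$. The factor $c$ in $k$ then cancels one power of $\Delta$, giving $k=\widetilde O(\Delta^{-5})$. Then $a^{-k}\le\Delta^{-k}=2^{\widetilde O(\Delta^{-5})}$, and the total copy count $kT$ is $2^{\widetilde O(\Delta^{-5})}\log(2/\delta)$. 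Carrying out this refined estimate, $q\cdot c\lesssim1/\Delta$, is the step I expect to require the most care.

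\textbf{Gate complexity.} Each trial applies $n$ local projective measurements onto $\Sym^k(\C^d)$ acting on $k$ qudits. Each of these can be implemented by a Schur transform, or by the symmetrization circuit (a generalized SWAP test with a $\log k!$-qubit permutation register). This costs $\poly(k,\log d)$ gates up to precision $\delta/(nkT)$, which contributes the polylogarithmic factor. Multiplying by $nT$ and absorbing $\poly(k)$ into $2^{\widetilde O(\Delta^{-5})}$ yields the stated bound $2^{\widetilde O(\Delta^{-5})}\,n\,\poly(\log(2nd/\delta))$.
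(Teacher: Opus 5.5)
Your proposal is correct and follows essentially the same route as the paper: the midpoint threshold with Hoeffding and Chernoff/Bernstein bounds reduced to the boundary means, the key bookkeeping observation $c^2-b^2=(\Delta/4)(c+b)\ge c\Delta/4$ giving $qc=O(\Delta^{-1})$ and hence $k=O(\Delta^{-5}\log(1/\Delta))$, and $a\ge\Delta$ to control $a^{-k}$. The differences are cosmetic: you lower-bound the gap by $p_{\mathrm Y}-p_{\mathrm N}\ge a^k/2$ using $k\Delta\gg1$ instead of the paper's $(\Delta/2)a^{k-1}$, and you realize each local $\Sym^k(\C^d)$ measurement by the symmetrization (permutation-test) circuit instead of the high-dimensional Schur transform of Burchardt et al.\ that the paper cites. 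Use the symmetrization circuit rather than a generic Schur transform, since a standard Schur transform costs $\poly(d)$ gates rather than $\poly(\log d)$.
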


\begin{proof}
By Lemma~\ref{lem:tester-one-trial-separation}, each trial has
acceptance probability at least $p_{\mathrm Y}$ in the YES case and
at most $p_{\mathrm N}$ in the NO case. Let
$\overline X:=T^{-1}\sum_{j=1}^{T}X_j$. Stochastic monotonicity
reduces the two error probabilities to Bernoulli means
$p_{\mathrm Y}$ and $p_{\mathrm N}$. At the threshold in
Protocol~\ref{prot:fully-tolerant-tester}, Hoeffding's inequality
bounds each error by $\exp(-T(p_{\mathrm Y}-p_{\mathrm N})^2/2)$.
The multiplicative Chernoff lower-tail bound and Bernstein's
upper-tail bound also bound each error by
$\exp(-T(p_{\mathrm Y}-p_{\mathrm N})^2/(8p_{\mathrm Y}))$;
for the upper tail, use
$p_{\mathrm N}+(p_{\mathrm Y}-p_{\mathrm N})/6\le p_{\mathrm Y}$.
Thus \eqref{eq:combined-repetitions} makes each statistical error at
most $\delta/4$, for a total of at most $\delta/2$.

The exact copy count of this construction is
\begin{equation}
  kT
  =k\left\lceil
    \frac{\min\{2,8a^k\}\log(4/\delta)}
    {\left[a^k-\left((a+b)/2\right)^k\right]^2}
  \right\rceil.
  \label{eq:tester-copy-complexity}
\end{equation}
Since $c-b=\Delta/4$ and $c\ge\Delta/4$, the choice of $q$ gives
\begin{equation}
  q\le1+\frac{48}{\Delta^2},
  \qquad
  qc\le c+\frac{12c}{\Delta(c+b)}
  \le1+\frac{12}{\Delta}.
  \label{eq:tester-q-c-bounds}
\end{equation}
In particular, $qc=O(\Delta^{-1})$. Together with
$R=\lceil16q/\Delta\rceil$, this yields
\begin{equation}
  \begin{aligned}
    q&=O(\Delta^{-2}),&
    qc&=O(\Delta^{-1}),\\
    R&=O(\Delta^{-3}),&
    k&=O\!\left(\Delta^{-5}\log\frac e\Delta\right).
  \end{aligned}
  \label{eq:tester-q-k-asymptotic}
\end{equation}
For the bound on $k$, use $qc/\eta=O(\Delta^{-2})$ in
\eqref{eq:tester-q-R-k}. Moreover,
$p_{\mathrm Y}-p_{\mathrm N}\ge(\Delta/2)a^{k-1}$ and $a\ge\Delta$.
Substituting these estimates into \eqref{eq:tester-copy-complexity}
gives
\begin{equation}
  kT
  \le \exp\!\left(
    O\!\left(\Delta^{-5}\log^2\frac e\Delta\right)
  \right)\log\frac2\delta
  =2^{\widetilde O(\Delta^{-5})}\log\frac2\delta.
\end{equation}

Each trial makes one local symmetric-subspace measurement at each of
the $n$ sites. Implementing each binary measurement to half-diamond-norm
channel error at most $\delta/(2nT)$ changes the final decision
probability by at most $\delta/2$ by a hybrid argument. The total
error is therefore at most $\delta$. The corrected high-dimensional Schur
transform of Burchardt et al.~\cite[Theorem~1]{burchardt2025highdimensional},
followed by measurement of the irrep label $(k)$, implements each local projector using
$\poly(k,\log d,\log(nT/\delta))$ gates per register, giving
\begin{equation}
  nT\,\poly\!\left(k,\log d,\log\frac{nT}{\delta}\right)
  =2^{\widetilde O(\Delta^{-5})}n\,
  \poly\!\left(\log\frac{2nd}{\delta}\right)
\end{equation}
gates in total.
\end{proof}

\section{A Fixed Parameter Tractable Learner}

In this section we are going to give a closest 
    product state learner based on the ideas of 
    our tester above, with fixed parameter tractable (FPT) sample complexity
    and computational complexity.
We have the following result.

\begin{theorem}[An FPT Closest Product State Learner]
    Given $n\ge 1, d\ge 2, \vep,\delta\in \p{0,\f{1}{2}}$, the independent copies of 
    an unknown mixed state $\rho\in \D\p{\p{\C^d}^{\otimes n}}$, 
    there is a quantum algorithm which, with probability at least $1 - \delta$, outputs 
        a classical description of a pure product state $\ket{\wu}$, such that
    \begin{equation}\label{eq:learning-goal}
        \bra{\wu}\rho\ket{\wu} \ge \Fprod\p{\rho} - \vep,
    \end{equation}
    with sample complexity $2^{\widetilde O\p{\vep^{-8}}}(nd)^2
    \poly\p{\log\f{nd}{\delta}}$ and computational complexity $2^{\widetilde O\p{\vep^{-8}}}n^3d^2
    \poly\p{\log\f{nd}{\delta}}$.
\end{theorem}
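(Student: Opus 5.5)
The plan follows the pipeline sketched in the technical overview: random coloring, then blockwise spectral projection, then Werner cloning, then the high-fidelity learner on each block, and finally selection of the best candidate.

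\emph{Trivial regime.} If $\Fprod(\rho)\le\vep$, any product state is $\vep$-optimal. So the analysis may assume $\Fprod(\rho)\ge\vep$, although the algorithm never needs to know which regime it is in.

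\emph{Step 1 (good coloring).} Sample a uniformly random coloring $C:[n]\to[q]$ with $q=\poly(1/\vep)$ and let $\ket{x_C}=\bigotimes_c\ket{x_c}$ attain $F_C(\rho)$. I will strengthen Lemma~\ref{lem:one-block-leakage} to a per-block statement. In that lemma's proof, the refinement of color $c$ loses test acceptance at least $\tfrac{F_C(\rho)^2}{3}(1-\Fprod(x_c)^2)$. Summing over $c$ and combining with part (i) and Markov's inequality, with probability at least $1/2$ we get
\[
\sum_c\bigl(1-\Fprod(x_c)\bigr)\le \frac{6}{q\,F_C(\rho)^2}\le \frac{6}{q\vep^2}.
\]
This is $O(\vep)$ once $q=\Theta(\vep^{-3})$.

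\emph{Step 2 (spectral projection).} Estimate the block marginals implicitly and implement $\Pihi_c=\1_{[\theta,1]}(\rho_c)$ via QSVT and the samplizer, with $\theta=\Theta(\vep/q)$, so that $\dim E_c\le 1/\theta=\poly(1/\vep)$. I then need $\|\Pi^{\mathrm{lo}}_c\ket{x_c}\|^2=O(\theta/F_C)$. This follows from the top-eigenvector relation $\bra{x_{-c}}\rho\ket{x_{-c}}\ge F_C\ketbra{x_c}{x_c}$, used in Lemma~\ref{lem:one-block-leakage}, together with $\rho_c\ge\bra{x_{-c}}\rho\ket{x_{-c}}$ after tracing. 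Consequently $\bra{x_c}\Pi^{\mathrm{lo}}_c\rho_c\Pi^{\mathrm{lo}}_c\ket{x_c}\le\theta$ forces the low-eigenvalue weight of $x_c$ to be small. Summed over $c$ this costs $O(q\theta/\vep)=O(\vep)$, and a smoothed filter handles eigenvalues near the threshold.

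\emph{Step 3 (cloning) and Step 4 (local learning).} Apply Werner's optimal $1\to L$ cloner restricted to each $E_c$ to the projected copies. For large $L$ the output is close in trace distance to a mixture $\E_{\nu_C}\bigotimes_c(\ketbra{v_c}{v_c})^{\otimes L}$. Here $\nu_C$ has density, with respect to product Haar measure, proportional to $\prod_c\dim\Sym^{1+\cdot}$ times $\bra{v}\rho\ket{v}$, which is the same Haar representation used in Theorem~\ref{thm:explicit-spectral-core}. Because $\bra{v}\rho\ket{v}\ge F_C-O(\sum_c(1-|\langle\bar x_c|v_c\rangle|^2))$ near $\bar x_C$, a Haar-volume lower bound for a neighbourhood of radius $\vep/q$ in a product of spheres of dimension $\poly(1/\vep)$ gives success probability $\exp(-\poly(1/\vep))$. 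I expect this to be about $2^{-\widetilde O(\vep^{-8})}$, which fixes the exponent. On this event each $\Fprod(v_c)\ge 5/6+\vep_{\mathrm{hfl}}$, so the qudit extension of Bakshi et al.'s high-fidelity learner applies. Run on the $L$ clones with accuracy $\vep_{\mathrm{hfl}}=\Theta(\vep/q)$, it returns product $\ket{z_c}$ using $\poly(n d/\vep_{\mathrm{hfl}})$ copies; this is where the $(nd)^2$ factor comes from. The error $e_{\mathrm{out}}=O(\vep)$ then gives, by Lemma~\ref{lem:learner-output-quality}, $\bra z\rho\ket z\ge F_C-O(\vep)\ge\Fprod-O(\vep)$.

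\emph{Step 5 (selection).} Repeat Steps 1--4 $2^{\widetilde O(\vep^{-8})}\log(1/\delta)$ times. Estimate each candidate's fidelity to accuracy $\vep/4$ with fresh copies, which is easy for product states via local measurements and a union bound, and output the best candidate. The complexities then multiply out to the stated bounds.

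\emph{Main obstacle.} The hardest step is Step 3. It requires rigorous control of the cloning channel's finite-$L$ deviation from measure-and-prepare, uniformly in the mixed input (compressed by an only approximately implemented $\Pihi_c$). It also requires the anti-concentration bound on $\nu_C$ near $\bar x_C$ when $F_C$ is as small as $\vep$. I would first prove the exact de Finetti-type representation for pure product inputs. I would then extend it by linearity and bound the cloner's output against its Haar-integral formula, in the manner of Theorem~\ref{thm:explicit-spectral-core}.
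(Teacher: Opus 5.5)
Your pipeline (coloring, spectral projection, Werner cloning, per-block high-fidelity learner, selection) is the paper's, but two steps do not go through as written.

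\textbf{Step~1.} The one-color estimate in the proof of Lemma~\ref{lem:one-block-leakage} lower-bounds $Z_{2,C}-Z_{2,C^{(c)}}=\Tr(\Lambda_{C,c}\rho^{\otimes2})$. Here $C^{(c)}$ refines only color $c$ of the \emph{original} coloring, and $\Lambda_{C,c}=(Q_{2,B_c}-Q^{\mathrm{fine}}_{2,B_c})\prod_{c'\ne c}Q_{2,B_{c'}}$. Summing these over $c$ does not give $Z_{2,C}-Z_2$. The telescoping in part~(ii) refines colors sequentially and uses the optimizer of each partially refined coloring, so it says nothing about the blocks of the fixed $\ket{x_C}$ that your later steps rely on. Concretely, take the joint swap eigenspace indexed by a nonempty $A$ with every $N_c=|A\cap B_c|$ even. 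There $Q_{2,C}-Q_2$ has eigenvalue $1$, while $\sum_c\Lambda_{C,c}$ has eigenvalue $|\{c:N_c>0\}|$. So part~(i) does not bound $\E_C\sum_c\Tr(\Lambda_{C,c}\rho^{\otimes2})$. You need the stronger operator bound $\E_C\sum_c\Lambda_{C,c}\le I/q$. The paper proves it for $q\ge13$ in Lemma~\ref{cor:direct-good-coloring}, computing the expected eigenvalue directly for $|A|=2,4$ and handling $|A|\ge6$ via the parity random walk on $\mathbb F_2^q$; for small $q$ it actually fails. A smaller point: with $\theta=\Theta(\vep/q)$ the capture loss is $q\theta/\vep=\Theta(1)$. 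You need $\theta=\Theta(\vep^2/q)=\Theta(\vep^5)$, which is also what produces the exponent $q/\theta=\vep^{-8}$.

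\textbf{Step~3.} More seriously, you never say how to realize Werner's cloner on $E_c$, and this, not the finite-$L$ or anticoncentration estimates, is where the difficulty and the complexity bound live. $E_c$ is an unknown subspace of the exponentially large $\mathcal H_{B_c}$, so you cannot append $I_{E_c}/d_c$ directly. Even if you could, appending $L-1$ such ancillas and projecting onto $\Sym^L(E_c)$ succeeds with probability $\binom{L+d_c-1}{L}d_c^{-L}$. That is exponentially small in $L$, and $L$ must grow linearly in $nd$ to feed the high-fidelity learner. The paper handles this as follows.
\begin{itemize}
\item It prepares the totally antisymmetric state $\Phi_{E_c}$ by projecting $d_c$ copies of $\rho_c$ onto $E_c$ and antisymmetrizing, which succeeds with probability at least $\theta_0^{d_c}$; the same test determines $d_c$.
\item One register of $\Phi_{E_c}$ has marginal $I_{E_c}/d_c$ and serves as the ancilla.
\item Clones are added one at a time, and each $\ell\to\ell+1$ step is made essentially deterministic by singular vector transformation, since all nonzero singular values of $Q_{c,\ell}P_{c,\ell}$ equal $\sqrt{p_{c,\ell}}$.
\end{itemize}
Each of the $q(L-1)$ additions consumes a fresh antisymmetric state prepared to accuracy $O(\xi_{\mathrm{clone}}/(qL))$. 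This is what yields the $\widetilde O(L^2/\xi_{\mathrm{clone}})$ copy count and hence the $(nd)^2$ factor. The high-fidelity learner itself needs only $\widetilde O(nd/\vep_{\mathrm{hfl}})$ clones, so attributing $(nd)^2$ to it is wrong. This construction also requires an exact projector onto a genuine subspace. The paper gets it by drawing thresholds from a grid so that each $\rho_c$ has a spectral gap around $\theta_c$; a smoothed filter does not provide this.

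The points you flag as the main obstacle are comparatively routine. Chiribella's bound $\|\mathcal W_{c,L}-\mathcal M_{c,L}\|_\diamond\le2d_c/(L+d_c)$ holds for all inputs. Your pointwise density-times-cap-volume anticoncentration argument does work. The paper's route is shorter: the operator bound $M_c\succeq(\vepclone/q)^{d_c-1}\ketbra{\xbar_c}{\xbar_c}$ combined with $\bra{\xbar_C}\rhobar_C\ket{\xbar_C}\ge\pproj$.
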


Without loss of generality, we assume $\Fprod(\rho)\ge\vep$. Otherwise if $\Fprod(\rho)<\vep$, every pure product state trivially fulfills \Cref{eq:learning-goal} above.

\subsection{Overview of the Product State Learning Algorithm}
We use random coloring and blockwise spectral projection to construct a learner for the closest pure product state.
Given the $n$-qudit input state $\rho$, we first randomly choose a coloring $C$ that partitions the qudit set $[n]$ into $q$ parts, where $q$ is independent of $n$.
Let $\ket{x_C} = \bigotimes_{c\in C} \ket{x_c}$ be a closest product state with respect to the partition induced by the coloring $C$.
For the choice of $q$ below, with probability at least $3/4$,
\begin{equation}
  \sum_c\bigl(1-\Fprod(x_c)\bigr)\le\vepcolor=O(\vep).
\end{equation}
This bounds the total internal product infidelity across all blocks.

For each color $c\in C$, given access to the marginal $\rho_c = \Tr_{-c}\rho$, we can then implement the high-eigenvalue subspace projector $\Pihi_c$, whose range, $\ran \Pihi_c = E_c$, is the high-eigenvalue subspace of $\rho_c$.
The dimension of $E_c$ is independent of $n$.
After projecting $\rho$ into $\bigotimes_c E_c$ to obtain $\bar{\rho}$, we then apply Werner's optimal cloning channel \cite{Werner1998} on each $E_c$, where we append several copies of the maximally mixed state over $E_c$ and project them into the symmetric subspace among copies.
For a sufficiently large number of output copies, each local instance of Werner's optimal cloning channel can be approximated by a measure-and-prepare channel. The joint action of these channels can be viewed as drawing ${\ket{v_c}}_{c\in C}$ from a joint distribution and preparing multiple copies of each $\ket{v_c}$. This sampling description is used only in the analysis; the algorithm feeds the clone registers directly to the next step.
In the ideal sampling description, with probability at least
$\exp(-\poly(1/\vep))$, the sampled vectors satisfy
\begin{equation}
  \sum_c\bigl(1-|\langle x_c|v_c\rangle|^2\bigr)=O(\vep).
\end{equation}

Finally, we apply a qudit version of the high-fidelity product-state learner introduced in \cite{bakshi2024learningclosestproductstate} separately to the output registers of the cloning channels action on the color blocks. Combining the resulting classical descriptions yields a pure product-state candidate. We will show that each trial produces a candidate whose fidelity with $\rho$ is sufficiently close to optimal with probability at least $\exp\p{-\poly\p{1/\vep}}$. We repeat the candidate-generation procedure independently, estimate each candidate’s fidelity using fresh copies of $\rho$, and return the candidate with the largest estimated fidelity. With an appropriate number of repetitions and sufficiently accurate fidelity estimates, the resulting learner returns an $\vep$-optimal pure product state with probability at least $1-\delta$.

We use the following qudit version of the high-fidelity product-state learner, which can be easily generalized from the original result in \cite{bakshi2024learningclosestproductstate}. For self-containedness we include a proof in \Cref{app:proof-of-[thm:hfl-qudit]}. 

\begin{restatable}[High-fidelity product state learner, qudit version of Theorem 4.20 of \cite{bakshi2024learningclosestproductstate}]{theorem}{HFL}\label{thm:hfl-qudit}

    Given $n\ge 1, d\ge 2, \vep_{\mathrm{hfl}}\in \left(0, \f{1}{6}\right],\delta\in \p{0,\f{1}{2}}$, and independent copies of
    an unknown mixed state $\rho\in \D\p{\p{\C^d}^{\otimes n}}$, satisfying 
    \begin{equation}
        \Fprod\p{\rho} \ge \f{5}{6} + \vep_{\mathrm{hfl}},
    \end{equation}
    there is a quantum algorithm which, with probability at least $1-\delta$, outputs
    a classical description of a pure product state $\ket{\wu}$, such that
    \begin{equation}
        \bra{\wu}\rho\ket{\wu} \ge \Fprod\p{\rho} - \vep_{\mathrm{hfl}},
    \end{equation}
    with sample complexity $\widetilde{O}\p{\f{nd}{\vep_{\mathrm{hfl}}}}$ and computational complexity $\widetilde{O}\p{\f{\p{nd}^2}{\vep_{\mathrm{hfl}}}}$.
\end{restatable}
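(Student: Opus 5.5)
We plan to retrace the high-fidelity learner of Bakshi et al., replacing qubit-specific steps by qudit analogues. Let $\ket{\pi}=\bigotimes_i\ket{\pi_i}$ attain $\Fprod(\rho)=:F\ge 5/6+\vep_{\mathrm{hfl}}$.

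\textbf{Step 1: locating the optimizer from the marginals.} For each site $i$, let $\rho_i$ be the single-qudit marginal and $\ket{\phi_i}$ its top eigenvector. Since $F>5/6$, the optimizer carries most of the weight locally: $\bra{\pi_i}\rho_i\ket{\pi_i}\ge F$. This should force $\ket{\phi_i}$ to be close to $\ket{\pi_i}$, with the total infidelity $\sum_i(1-|\langle\phi_i|\pi_i\rangle|^2)$ bounded by $O(1-F)$. To get this we use the self-consistency condition: $\ket{\pi_i}$ is the top eigenvector of $\bra{\pi_{-i}}\rho\ket{\pi_{-i}}$ with eigenvalue $F$. We expand $\rho$ in the orthonormal product basis obtained by completing each $\ket{\pi_i}$ to a basis of $\C^d$, and group basis strings by Hamming weight relative to $\pi$. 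Weight-zero carries $F$, and weight-one off-diagonal terms vanish by first-order optimality. Hence $\rho_i$ equals $F\ket{\pi_i}\bra{\pi_i}$ plus a PSD-controlled perturbation of trace at most $1-F$. This perturbation analysis is the only place $d$ enters, and it is dimension-free because it uses only $\ket{\pi_i}$ versus its orthogonal complement. This is a direct analogue of the qubit argument.

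\textbf{Step 2: learning the marginals and correcting.} Estimate each $\rho_i$ to operator-norm accuracy $O(\vep_{\mathrm{hfl}}/n)$ in the aggregate sense via single-qudit tomography on all $n$ sites simultaneously from the same copies. Union bounding over sites gives $\widetilde O(nd/\vep_{\mathrm{hfl}})$ copies, since qudit tomography to Frobenius accuracy $\eta$ costs $O(d/\eta^2)$. Then take the top eigenvectors $\ket{\widehat\phi_i}$. Matching the claimed additive accuracy $\vep_{\mathrm{hfl}}$ (rather than $O(1-F)$) is the main obstacle. The crude product of top eigenvectors loses $O(1-F)$, which need not be $\le\vep_{\mathrm{hfl}}$. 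Bakshi et al.\ fix this with a refinement: from the initial guess, they estimate for each $i$ the conditional operator $\bra{\widehat\phi_{-i}}\rho\ket{\widehat\phi_{-i}}$, a $d\times d$ matrix, by measuring the other sites in bases containing $\ket{\widehat\phi_j}$ and postselecting. They then update to its top eigenvector, iterating as a local power/alternating maximization. We would prove the qudit version of their contraction lemma: in the regime $F>5/6$, one round of coordinate-wise top-eigenvector updates reduces the squared distance to $\pi$ by a constant factor plus estimation error. The constant $5/6$ is exactly what makes the cross-terms (weight-two strings) dominated, again an argument insensitive to $d$. After $O(\log(1/\vep_{\mathrm{hfl}}))$ rounds, each with $\widetilde O(nd/\vep_{\mathrm{hfl}})$ copies, we get fidelity $\ge F-\vep_{\mathrm{hfl}}$.

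\textbf{Step 3: complexity.} Classical postprocessing is $n$ eigendecompositions of $d\times d$ matrices per round plus processing of $\widetilde O(nd/\vep_{\mathrm{hfl}})$ measurement records, each of length $n$ with $d$-dimensional outcomes. This yields $\widetilde O((nd)^2/\vep_{\mathrm{hfl}})$ time. Failure probability is handled by median-of-means and a union bound across the $n$ sites and rounds. We expect Step 2's contraction lemma to require the most care; we would first try to copy the qubit proof verbatim, noting that every qubit-specific identity is really a statement about a vector and its orthogonal complement.
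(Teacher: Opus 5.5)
Your Step~1 claim is false, so the refinement has no valid starting point. Marginal dominance $\rho_i\ge F\ket{\pi_i}\!\bra{\pi_i}$ bounds the tilt of each top eigenvector by an angle $O(1-F)$, i.e.\ a per-site infidelity $O((1-F)^2)$, but these losses add up over the sites.

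Take qubits, $n=2m$ with $m\ge2$, and $\ket{\psi}=\sqrt{F}\ket{0}^{\otimes 2m}+\sqrt{1-F}\ket{1}^{\otimes m}\ket{+}^{\otimes m}$. An AM--GM argument gives $\Fprod(\psi)=F$, attained at $\ket{0^{2m}}$. On each of the last $m$ sites the cross terms of the marginal vanish, because the traced-out first $m$ sites are orthogonal. Hence $\rho_i=F\ket{0}\!\bra{0}+(1-F)\ket{+}\!\bra{+}$, whose top eigenvector is tilted from $\ket{0}$ by about $(1-F)/(2F)$. The product of top eigenvectors then has fidelity about $F\exp\bigl(-m(1-F)^2/(4F^2)\bigr)$, which tends to $0$ as $n$ grows even at $F=0.9$.

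The coordinate-wise contraction lemma you attribute to Bakshi et al.\ is not theirs, and it is unproven. Their local-optimization guarantee (Algorithm~4.11, Corollary~4.17) needs a seed of fidelity at least $2/3$, and producing such a seed is the missing idea. The paper, following their Algorithm~4.19, uses divide and conquer:
\begin{itemize}
\item split the sites into balanced halves and learn each half's marginal recursively;
\item restricting the global optimizer shows each marginal keeps product fidelity at least $5/6+\varepsilon_{\mathrm{hfl}}$, so each child's output has fidelity above $5/6$;
\item tensor the two outputs; the projector union bound (their Lemma~4.18) gives a seed above $1-\frac16-\frac16=\frac23$, and the local optimizer then refines it.
\end{itemize}
Only two pieces are combined per node, so the union bound loses a constant rather than an amount growing with $n$. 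The base case is one-qudit tomography from Clifford shadows.

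Your refinement, as described, also does not meet the copy bound. Estimating each conditional operator $\bra{\widehat\phi_{-i}}\rho\ket{\widehat\phi_{-i}}$ by measuring the other sites in bases containing $\ket{\widehat\phi_j}$ and postselecting needs a different measurement for each $i$. Keeping the total loss below $\varepsilon_{\mathrm{hfl}}$ needs per-site angle accuracy about $\sqrt{\varepsilon_{\mathrm{hfl}}/n}$. Naively this costs roughly $n^2d/\varepsilon_{\mathrm{hfl}}$ copies, not $\widetilde O(nd/\varepsilon_{\mathrm{hfl}})$.

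The paper instead works in local bases where the current candidate is $\ket{0^m}$ and estimates all weight-one amplitudes $z_{i,j}=\bra{e_{i,j}}\rho\ket{0^m}$ simultaneously. It coherently maps $\ket{0^m}$ and the $D=m(d-1)$ weight-one strings to labels of a $(D+2)$-dimensional register and takes Clifford shadows there. Each coordinate has variance $O(1)$, so $O(D\eta^{-2}\log(1/\beta))$ copies give $\|a-z\|_2\le\eta$. It then updates all sites at once to $\ket{\pi_{a/10}}$. The dependence on $d$ enters only through $D$, and the convergence analysis carries over by counting a site as excited when it is orthogonal to $\ket{0}$. Finally, one fresh global copy feeds all disjoint blocks at a given recursion depth. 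Summing over the $O(\log n)$ depths gives $\widetilde O(nd/\varepsilon_{\mathrm{hfl}})$ copies and $\widetilde O((nd)^2/\varepsilon_{\mathrm{hfl}})$ work.
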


\begin{remark}[Relation to the tester]
The learner uses the same random coloring and blockwise spectral projection
ideas as the tester in Section~\ref{sec: tester}. For the tester, the
coloring and spectral projectors serve only as witnesses in the soundness
analysis. For the learner, we sample the coloring and implement blockwise
spectral projection using copies of $\rho$ to prepare the corresponding projected
state. We then implement the local cloning channels to prepare the
registers supplied to the high-fidelity learners, which does not appear in the analysis of the tester. The following
subsections therefore establish both the structural properties and the
implementation guarantees needed for these state preparation operations.
Also note that the required properties also differ subtly: the learner needs nearly product block optimizers for the good random coloring and the blockwise spectral projection that preserves their joint overlap with the state and succeeds with sufficient probability, whereas the tester only needs bounds on optimal fidelity and test acceptance.
\end{remark}

\subsection{Random Coloring}
The first step of the learner is to choose a uniformly random coloring $C: [n] \rightarrow [q]$.
Let $B_c = C^{-1}(c)$ be the qudit block corresponding to the color $c$.
Without ambiguity, we can also use $C$ to represent a partition of $n$ qudits into $q$ parts $\lrs{B_c}_{c\in[q]}$.
Let
\begin{equation}
  F_{C}(\rho)
  =\max_{\ket x=\ket{x_1}\otimes\cdots\otimes\ket{x_q}}
  \bra x\rho\ket x,
  \label{eq:direct-color-optimum}
\end{equation}
and fix an optimizer $\ket{x_C}=\ket{x_1}\otimes\cdots\otimes\ket{x_q}$.
 Define
\begin{equation}
  f_{c}:=\Fprod(x_c)=
  \max_{\ket{u_c}\text{ product over the qudits in }B_c}
  |\langle u_c|x_c\rangle|^2.
  \label{eq:direct-internal-product-fidelity}
\end{equation}
Thus $f_c$ measures the closest product state approximations inside each color block.
For an empty color, set $f_c=1$.

We are going to show that, with high probability, $\ket{x_C}$ and each $\ket{x_c}$ are close to a product state, described by the following lemma.

\begin{lemma}[A good coloring]
\label{cor:direct-good-coloring}
Consider an $n$-qudit state $\rho$ and let $q\ge13$ and $0<f_{\min}\le1$.  Suppose
$F_{C}(\rho)\ge f_{\min}$ for every coloring $C$;
in particular, it is enough that $\Fprod(\rho)\ge f_{\min}$.  Then
\begin{equation}
  \E_C\sum_c(1-f_c^2)
  \le\frac{3}{qf_{\min}^2}.
  \label{eq:direct-expected-internal-defect}
\end{equation}
Moreover, if $\vepcolor>0$ and
\begin{equation}
  q\ge\max\left\{
    13,
    \left\lceil
      \frac{12}{f_{\min}^2\vepcolor}
    \right\rceil
  \right\},
  \label{eq:direct-color-count}
\end{equation}
then, with probability at least $3/4$ over $C$,
\begin{equation}
  \sum_c(1-f_c^2)\le\vepcolor.
  \label{eq:direct-good-color-defect}
\end{equation}
\end{lemma}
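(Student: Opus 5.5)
The plan is to strengthen the telescoping argument of Lemma~\ref{lem:one-block-leakage}(ii) so that every block defect is measured against the \emph{same} optimizer $\ket{x_C}$, and then to average over $C$ as in part (i).

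\textbf{Step 1: disjoint block operators.} Fix $C$ and work in the joint eigenbasis of the commuting swaps $\swap_i$, whose sectors are indexed by $A\subseteq[n]$. For each color $c$ define
\[
P_c:=\bigl(Q_{2,c}-Q^{\mathrm{fine}}_{2,c}\bigr)\prod_{c'\ne c}Q_{2,c'} .
\]
On the sector indexed by $A$, $P_c$ acts as the indicator of the event that $|A\cap B_c|$ is even and nonzero and every $|A\cap B_{c'}|$ is even. Hence
\[
\sum_c P_c=N_C(A)\,\mathbf 1\bigl[A\ne\varnothing,\ \text{all }|A\cap B_{c'}|\text{ even}\bigr],
\]
where $N_C(A)$ is the number of blocks met by $A$. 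Since the operator $\sum_c P_c$ is diagonal in this basis, it suffices to control, for each fixed nonempty $A$, the scalar $\E_C\!\left[N_C(A)\mathbf 1[\text{all even}]\right]$.

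\textbf{Step 2: lower bound by block defects.} The optimizer satisfies $\bra{x_{-c}}\rho\ket{x_{-c}}\ge F_C(\rho)\ketbra{x_c}{x_c}$. Moreover $\prod_{c'\ne c}Q_{2,c'}\ge\ketbra{x_{-c}}{x_{-c}}^{\otimes2}$, because $\ket{x_{-c}}$ is a product across the other color blocks. The chain of inequalities ending in \eqref{eq:one-color-refinement-leakage} therefore gives
\[
\Tr\bigl(P_c\rho^{\otimes2}\bigr)\ \ge\ F_C(\rho)^2\bigl(1-Z_2(x_c)\bigr)\ \ge\ \frac{f_{\min}^2}{3}\bigl(1-f_c^2\bigr),
\]
using \eqref{eq:lemma33-pure-bound} and the hypothesis $F_C(\rho)\ge f_{\min}$. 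Empty or singleton blocks contribute zero on both sides. Summing over $c$ yields
\[
\sum_c\bigl(1-f_c^2\bigr)\ \le\ \frac{3}{f_{\min}^2}\,\Tr\!\Bigl(\sum_c P_c\,\rho^{\otimes2}\Bigr).
\]

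\textbf{Step 3: averaging over colorings (main obstacle).} The claimed bound \eqref{eq:direct-expected-internal-defect} follows if
\begin{equation}
\E_C\!\left[N_C(A)\,\mathbf 1[\text{all even}]\right]\ \le\ \frac1q\qquad\text{for every nonempty }A,
\end{equation}
since then $\E_C\sum_c P_c\le (I-Q_2)/q$, which has trace at most $1$ against $\rho^{\otimes2}$.

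This is where the difficulty lies. Part (i) only bounds the probability of the event, whereas now the event is weighted by $N_C(A)$, which can be as large as $|A|/2$. For $|A|=2$ the bound is exactly $1/q$. For $|A|=m$ I would expand the all-even indicator by characters,
\[
\mathbf 1[\text{all even}]=2^{-q}\sum_{S\subseteq[q]}(-1)^{\sum_{c\in S}|A\cap B_c|},
\]
and write $N_C(A)=\sum_c\mathbf 1[B_c\cap A\ne\varnothing]$. Alternatively, I would bound the expectation by a direct count of colorings in which every color class of $A$ has even size, weighted by the number of classes. The target is that the pair-dominated term equals $\binom{m}{2}$ pairings times a $q^{-(m-1)}$-type factor, which should be at most $1/q$ once $q\ge13$. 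The hypothesis $q\ge13$ in the lemma suggests precisely such a tail estimate for larger $|A|$. If a clean $1/q$ proves unavailable, I would accept a constant factor here and absorb it through the threshold $q\ge 12/(f_{\min}^2\vepcolor)$.

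\textbf{Step 4: high-probability statement.} The second claim follows from Markov's inequality. Under \eqref{eq:direct-color-count},
\[
\E_C\sum_c\bigl(1-f_c^2\bigr)\ \le\ \frac{3}{qf_{\min}^2}\ \le\ \frac{\vepcolor}{4},
\]
so \eqref{eq:direct-good-color-defect} fails with probability at most $1/4$.
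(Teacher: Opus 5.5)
Your Steps 1, 2 and 4 are the paper's argument. Your $P_c$ is the paper's $\Lambda_{C,c}$, and the sector eigenvalue $N_C(A)\,\mathbf 1[\text{all }|A\cap B_c|\text{ even}]$ is the same. The lower bound $\sum_c\Tr(P_c\rho^{\otimes 2})\ge \frac{f_{\min}^2}{3}\sum_c(1-f_c^2)$ comes from the same one-color refinement estimate applied to the fixed optimizer, and Markov finishes.

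The genuine gap is Step 3. It is the only ingredient not already contained in Lemma~\ref{lem:one-block-leakage}, and you leave it as a target rather than proving it. Your sketch of it is also inaccurate. For $|A|=m$ the leading term comes from the $(m-1)!!$ pairings of $A$ into distinct colors, each of probability about $q^{-m/2}$ and weight $m/2$. For example, at $m=4$ the true value is $(6q-5)/q^3\approx 6/q^2$, not of order $\binom42 q^{-3}$. Moreover, for even $m\to\infty$ with $q$ fixed nothing is pair-dominated: the all-even probability tends to $2^{1-q}$ while $N_C(A)$ tends to $q$. So you need a bound that is uniform in $m$. Your fallback of accepting a constant factor does not prove the statement as written. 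The constant $3$ in $3/(qf_{\min}^2)$ is fixed, and so is the threshold $12/(f_{\min}^2\varepsilon_{\mathrm{color}})$, which is exactly what makes Markov give $1/4$.

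The missing idea is a case split on $s=|A|$; odd $s$ contributes zero.
\begin{itemize}
\item For $s=2$ and $s=4$, count directly. This gives $1/q$ and $(6q-5)/q^3\le 1/q$ for $q\ge5$. Here the true weight must be kept: the crude product $q\Pr_C[\text{all even}]$ equals $1$ at $s=2$ and $(3q-2)/q^2$ at $s=4$, both too large.
\item For even $s\ge6$, bound the weight crudely by $N_C(A)\le q$ and use your own character expansion. Averaging over $C$ gives $\Pr_C[\text{all even}]=2^{-q}\sum_{j=0}^{q}\binom qj(1-2j/q)^s$. This is nonincreasing over even $s$ because $|1-2j/q|\le1$, so it is at most its value at $s=6$. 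A direct count shows that value is $(15q^2-30q+16)/q^5$.
\end{itemize}
The resulting requirement $q\cdot(15q^2-30q+16)/q^5\le1/q$ is equivalent to $q^3\ge15q^2-30q+16$. This holds for all $q\ge13$ and fails at $q=12$, which is where the hypothesis $q\ge13$ comes from. With this, $\mathbb E_C\sum_c P_c\le (I-Q_2)/q$, and the rest of your plan goes through verbatim.
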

The proof of this lemma resembles Lemma \ref{lem:one-block-leakage}, but is somewhat more complicated (because we are demanding something stronger).
\begin{proof}
For each qudit $i\in[n]$, let, as before,  $\swap_i$ be the operator which
    swaps the $i$-th register between two copies of the input.
It is easy to see the operators $\swap_i$ commute pairwise.
For a block $B\subseteq[n]$, note that 
    $\prod_{i\in B}\swap_i$ is the operator which swaps the whole block $B$, we define
\begin{equation}
  Q_{2,B}:=\frac{I+\prod_{i\in B}\swap_i}{2},
  \qquad
  Q^{\mathrm{fine}}_{2,B}:=
      \prod_{i\in B}\frac{I+\swap_i}{2}.
\end{equation}
They are commuting projectors, with
$Q^{\mathrm{fine}}_{2,B}\le Q_{2,B}$. For a coloring $C$, define
\begin{equation}
  \Lambda_{C,c}:=
  \bigl(Q_{2,B_c}-Q^{\mathrm{fine}}_{2,B_c}\bigr)
  \prod_{c'\ne c}Q_{2,B_{c'}}.
\end{equation}
For an empty block both projectors are the identity, so
$\Lambda_{C,c}=0$.
In order to give an upper bound for $\E_C\sum_{c}\p{1-f_c^2}$, the key
    quantity we will look into is $\Tr\p{\Lambda_{C,c}\rho^{\otimes2}}$.

We first prove that, for $q\ge13$,
\begin{equation}
  \mathbb E_C\sum_{c=1}^q\Lambda_{C,c}\le\frac{I}q .
  \label{eq:good-coloring-proof-weighted-bound}
\end{equation}
Since the swap operators $\swap_1,\ldots,\swap_n$ commute, they have simultaneous
eigenspaces indexed by subsets $A\subseteq[n]$, where $\swap_i$ has
eigenvalue $-1$ precisely when $i\in A$. Fix such an eigenspace 
    corresponding to the subset $A$ and
write $N_c:=|A\cap B_c|$. On this eigenspace,
$\sum_c\Lambda_{C,c}$ has eigenvalue
\begin{equation}
  \abs{\{c:N_c>0\}}
  \1\lrb{N_1,\ldots,N_q\text{ are all even}}.
\end{equation}
Indeed, every block must contain an even number of negative swap
signs, and a block contributes to the sum precisely when this number
is nonzero. The eigenvalue vanishes when $A$ is empty or $|A|$ is odd.

Let $h_q(s)$ be the expected value of the eigenvalue 
    when $s=|A|$ is positive and even.
If $s=2$, both sites must have the same color, and hence $h_q(2)=1/q$.
If $s=4$, either all four sites have the same color, or two colors each
contain two sites, giving
\begin{equation}
  h_q(4)=\frac{q+2\cdot 3\cdot q(q-1)}{q^4}
        =\frac{6q-5}{q^3}\le\frac1q
        \qquad(q\ge5).
\end{equation}

For every even $s\ge 6$, it's easy to see that
\begin{equation}
    h_q(s) \le q\Pr_C\lrb{N_1,\ldots,N_q\text{ are all even}}.
\end{equation}
The parity vector $(N_1,\ldots,N_q)\bmod 2$ is distributed as
the position after $s$ steps of the simple random walk on
$\mathbb F_2^q$, started at the origin, which flips one uniformly
chosen coordinate at each step. The formula for its
transition probabilities
\cite[Eqs.~(3.4) and~(4.3)]{slade2023hypercube} therefore gives
\begin{align}
  \Pr_C[N_1,\ldots,N_q \text{ are all even}]
  &=
  2^{-q}\sum_{j=0}^{q}\binom{q}{j}
  \p{1-\frac{2j}{q}}^s \\
  & \le 2^{-q}\sum_{j=0}^{q}\binom{q}{j}
  \p{1-\frac{2j}{q}}^6 \\
  &= \Pr_C[N_1,\ldots,N_q \text{ are all even} \mid s=6].
\end{align}
We further have
\begin{align}
    \Pr_C[N_1,\ldots,N_q \text{ are all even} \mid s=6] 
        &= \f{q + \binom62 q(q-1) + \binom62 \binom42 q(q-1)(q-2)/6}{q^6} \\
    &= \f{15q^2 - 30q + 16}{q^5},
\end{align}
and therefore for every even $s\ge 6$, 
\begin{equation}
    h_q(s) \le q \cdot \f{15q^2 - 30q + 16}{q^5} \le \f{1}{q} \qquad(q\ge 13).
\end{equation}

Recall that we label any joint eigenspace of the swap operators $\swap_1,\dots,\swap_n$ by a subset $A\subseteq [n]$. 
To summarize, for any $A\subseteq [n]$ and the joint eigenspace corresponding to $A$, $\E_C\sum_{c}\Lambda_{C,c}$ has eigenvalue bounded by $\f{1}{q}$. 
We conclude that \Cref{eq:good-coloring-proof-weighted-bound} holds.

We next show that, for every coloring $C$,
\begin{equation}
  \sum_{c}\Tr\left(\Lambda_{C,c}\rho^{\otimes2}\right)
  \ge \f{f_{\min}^2}{3} \sum_{c} \p{1-f_c^2}.
  \label{eq:good-coloring-proof-block-leakage}
\end{equation}

For a nonempty color block, apply the one-color refinement estimate
\Cref{eq:one-color-refinement-leakage} from the proof of
Lemma~\ref{lem:one-block-leakage} to the fixed optimizer $\ket{x_C}$.
Since $\Lambda_{C,c}=Q_{2,C}-Q_{2,C^{(c)}}$ and
$f_c=\Fprod(x_c)$, this gives, for all $c\in [q]$,
\begin{equation}
  \Tr\!\left(\Lambda_{C,c}\rho^{\otimes2}\right)
  \ge
  \frac{F_{C}(\rho)^2}{3}(1-f_c^2).
\end{equation}
Summing over $c$, taking the coloring expectation, and using
$F_{C}(\rho)\ge f_{\min}$ gives
\begin{equation}
  \frac{f_{\min}^2}{3}\,
  \mathbb E_C\sum_c(1-f_c^2)
  \le
  \mathbb E_C\sum_c
  \Tr\!\left(\Lambda_{C,c}\rho^{\otimes2}\right)
  \le\frac1q,
\end{equation}
which proves \Cref{eq:direct-expected-internal-defect}.  Markov's
inequality then gives
\begin{equation}
  \Pr_C\left[
    \sum_c(1-f_c^2)>\vepcolor
  \right]
  \le
  \frac{3}{qf_{\min}^2\vepcolor}
  \le\frac14,
\end{equation}
where the last inequality is \Cref{eq:direct-color-count}.  This proves
\Cref{eq:direct-good-color-defect}.
\end{proof}

Let $\rho_c:=\Tr_{-B_c}\rho$ be the marginal of $\rho$ on the color
block $B_c$. The following property of these marginals will be used
in the next blockwise spectral projection step.

\begin{lemma}[Marginal dominance]
\label{lem:learner-marginal-dominance}
Fix a coloring $C$ and an optimizer
$\ket{x_C}=\bigotimes_c\ket{x_c}$ in
\eqref{eq:direct-color-optimum}. Then, for every color $c$,
\begin{equation}
  \rho_c\ge F_C(\rho)\ket{x_c}\bra{x_c}.
  \label{eq:learner-marginal-dominance}
\end{equation}
\end{lemma}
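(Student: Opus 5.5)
The plan is to combine the first-order optimality of $\ket{x_C}$ with positivity of the partial trace; this is the same observation already used in the proof of Lemma~\ref{lem:one-block-leakage}(ii). Write $\ket{x_{-c}}:=\bigotimes_{c'\ne c}\ket{x_{c'}}$, which is a unit vector on the complement of $B_c$. Define the positive operator $M_c:=\bra{x_{-c}}\rho\ket{x_{-c}}$ on $\mathcal H_c$.

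The first step identifies the top eigenvalue of $M_c$. Holding $\ket{x_{-c}}$ fixed and varying only the $c$-th factor, the objective becomes $\bra{v_c}M_c\ket{v_c}$. This is maximized over unit $\ket{v_c}$ by a top eigenvector, with value $\|M_c\|$. Every such choice is admissible in \eqref{eq:direct-color-optimum}, and $\ket{x_C}$ is a global optimizer. Hence $\|M_c\|\le F_C(\rho)=\bra{x_c}M_c\ket{x_c}\le\|M_c\|$, so $\ket{x_c}$ is an eigenvector of $M_c$ with eigenvalue $F_C(\rho)$. Since $M_c$ is Hermitian and positive, this gives $M_c\ge F_C(\rho)\ket{x_c}\bra{x_c}$.

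The second step passes from $M_c$ to the marginal $\rho_c$. Take an orthonormal basis of the complement of $B_c$ that contains $\ket{x_{-c}}$; this complement is a tensor product of the other blocks, and the basis need not be product. Then $\rho_c=\Tr_{-c}\rho=\sum_j\bra{e_j}\rho\ket{e_j}$. Every summand is positive semidefinite, so $\rho_c\ge\bra{x_{-c}}\rho\ket{x_{-c}}=M_c$. Chaining this with the first step yields \eqref{eq:learner-marginal-dominance}.

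For empty blocks, $\mathcal H_c=\C$ and $\rho_c=1\ge F_C(\rho)$, so the statement is trivial. There is no real obstacle here. The only point needing care is the first step: one must argue that $\ket{x_c}$ is an \emph{eigenvector} of $M_c$, not merely that it attains the maximal Rayleigh quotient. This follows because a unit vector attaining the top eigenvalue of a Hermitian operator must lie in the top eigenspace.
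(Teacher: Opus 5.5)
Your proof is correct and follows essentially the same route as the paper: you form the conditional operator $\bra{x_{-c}}\rho\ket{x_{-c}}$ (the paper's $T_c$), use optimality of $\ket{x_C}$ to show that $\ket{x_c}$ is a top eigenvector with eigenvalue $F_C(\rho)$, and then bound that operator by the marginal $\rho_c$ using positivity of the partial trace. Your remark that a unit vector attaining the maximal Rayleigh quotient must lie in the top eigenspace, which is what yields the operator inequality, is exactly the step the paper uses as well.
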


\begin{proof}
Fix a color $c$. Let $\ket{x_{-c}}:=\bigotimes_{c'\ne c}\ket{x_{c'}}$
and define
\begin{equation}
  T_c:=(I_{B_c}\otimes\bra{x_{-c}})\rho
       (I_{B_c}\otimes\ket{x_{-c}}).
\end{equation}
It is easy to see that $\rho_c=\Tr_{-B_c} \rho\ge T_c$.
Optimality gives $\bra vT_c\ket v\le F_C(\rho)$ for every unit vector
$\ket v\in\mathcal H_{c}$, with equality at $\ket{x_c}$.
Thus $\ket{x_c}$ is a top eigenvector of the positive operator $T_c$,
and $T_c\ge F_C(\rho)\ket{x_c}\bra{x_c}$.
We conclude that $\rho_c \ge F_C(\rho)\ket{x_c}\bra{x_c}$.
\end{proof}

\begin{parameterbox}
Now we are going to explicitly state our choice of parameters which appear in this section.
These parameters will lead to the desired learning result, \Cref{thm:ProductLearner}.
\begin{equation}
    \vep_0:=\min\{\vep,1/10\},\qquad
    f_{\min}:=\frac{\vep_0}{8},\qquad
    \vepcolor:=\frac{\vep_0}{10^4},
\end{equation}
and choose
\begin{equation}
    q:=\max\left\{13,
        \left\lceil\frac{12}{f_{\min}^2\vepcolor}\right\rceil
    \right\}
    =O(\vep^{-3}).
\end{equation}
When $\Fprod(\rho)\ge f_{\min}$,
Lemma~\ref{cor:direct-good-coloring} gives
\begin{equation}
    \Pr_C\left[\sum_c(1-f_c^2)\le\vepcolor\right]
    \ge\frac34.
\end{equation}

We note that if $\Fprod(\rho) < f_{\min} < \vep$, the learning problem becomes trivial and any pure product state output will be $\vep$-close to the input state.
\end{parameterbox}

\subsection{Blockwise Spectral Projection}\label{sec:blockwise-spectral-projection}
After choosing a coloring $C$ for the $n$ qudits, the next step of the learning algorithm is to implement blockwise spectral projections onto the high eigenvalue subspaces of the local marginals. .
Define $\Pihi_C = \bigotimes_{c}\Pihi_c$, where 
$\Pihi_c = \1_{\lrb{\theta_c,1}}\p{\rho_c}$.
Let $E_c = \ran \Pihi_c$.
Then we apply this projector to the input state, 
\begin{equation}
    \rhobar_C = \f{\Pihi_C\rho\Pihi_C}{\Tr\lrb{\Pihi_C\rho}}.
\end{equation}
For analysis, we define
\begin{equation}
  \ket{\xbar_c}:=
  \frac{\Pihi_c\ket{x_c}}{\norm{\Pihi_c\ket{x_c}}_2},
  \qquad
  \ket{\xbar_C}=\bigotimes_c\ket{\xbar_c}
\end{equation}

We are going to show the following:

\begin{enumerate}
    \item The dimension of each $E_c$ is bounded, independent of $n$.
    \item When we apply $\Pihi_C$ to $\rho$, the projection succeeds with probability at least $\pproj$, where $\pproj$ depends only on $\vep$.
    \item After projection, the projected input state has overlap at least $\pproj$ with the projected optimizer; that is, $\bra{\xbar_C}\rhobar_C\ket{\xbar_C} \ge \pproj$.
    \item $\Pihi$ is easy to implement. We can efficiently use copies of $\rho_c$ to implement $\Pihi_c$, with small errors.
\end{enumerate}

Each property is crucial for the next cloning step.
After this projection we restrict our state in the small subspace $\bigotimes_cE_c$, which guarantees that the next step of the algorithm (the cloning step) can be implemented efficiently. In the meantime, the overlap between the projected state $\rhobar$ and the projected optimizer $\ket{\xbar}$ is large enough, to make sure we do not lose much information on the closest product state. 

\subsubsection{Properties of the Blockwise Spectral Projection $\Pihi_C$}
\label{subsubsec:learner-projection-properties}

Fix $\vepproj\in(0,1/4)$ and set
\begin{equation}
  \theta_0:=\frac{f_{\min}\vepproj}{2q}.
  \label{eq:learner-spectral-parameters}
\end{equation}
We allow each threshold to be chosen anywhere in
$\theta_c\in[\theta_0,2\theta_0]$. This freedom will be useful when
implementing the projectors in the next subsection.
For an empty color block, the Hilbert space is one-dimensional and
$\Pihi_c=I$.

\begin{lemma}[Properties of blockwise spectral projection]
\label{lem:learner-spectral-projection}
For every choice of thresholds $\theta_c\in[\theta_0,2\theta_0]$,
the blockwise spectral projector $\Pihi_C = \bigotimes_{c}\Pihi_c$ defined above satisfies
\begin{align}
  \dim E_c
  &\le \frac1{\theta_0},
  \label{eq:learner-spectral-rank}\\
  \sum_c\norm{(I-\Pihi_c)\ket{x_c}}_2^2
  &\le\vepproj.
  \label{eq:learner-spectral-capture}
\end{align}
In particular, the normalized vectors
$\ket{\xbar_c}$ and $\ket{\xbar_C}$ are well defined and satisfy
\begin{align}
    \sum_c\left(1-\abs{\braket{x_c}{\xbar_c}}^2\right)
    &=\sum_c\norm{(I-\Pihi_c)\ket{x_c}}_2^2
    \le\vepproj,\\
    \abs{\braket{x_C}{\xbar_C}}^2
    &\ge1-\vepproj.
\end{align}
Furthermore, the projection of $\Pihi_C$ on $\rho$ succeeds with probability at least
$\pproj$, and the normalized projected state has overlap at least
$\pproj$ with the projected optimizer:
\begin{equation}
  \Tr(\Pihi_C\rho)\ge\pproj,
  \qquad
  \bra{\xbar_C}\rhobar_C\ket{\xbar_C}\ge\pproj,
  \label{eq:learner-projection-success-overlap}
\end{equation}
where $\pproj:=f_{\min}(1-\vepproj)^2$.
\end{lemma}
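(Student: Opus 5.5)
Each claim follows from the marginal dominance of Lemma~\ref{lem:learner-marginal-dominance}, namely $\rho_c\ge F_C(\rho)\ket{x_c}\bra{x_c}\ge f_{\min}\ket{x_c}\bra{x_c}$. Here $F_C(\rho)\ge \Fprod(\rho)\ge f_{\min}$ by Fact~\ref{fact1} and the standing assumption. We prove the claims in order.

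\emph{Rank bound.} Every eigenvalue of $\rho_c$ retained by $\Pihi_c$ is at least $\theta_c\ge\theta_0$. Since $\Tr\rho_c=1$, we get $\dim E_c\cdot\theta_0\le 1$, which is \eqref{eq:learner-spectral-rank}.

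\emph{Capture bound.} Set $\Pi^{\mathrm{lo}}_c:=I-\Pihi_c$. This operator commutes with $\rho_c$ and satisfies $\Pi^{\mathrm{lo}}_c\rho_c\Pi^{\mathrm{lo}}_c\le 2\theta_0\, \Pi^{\mathrm{lo}}_c$, with equality at the threshold eigenvalue. Compressing the dominance inequality by $\Pi^{\mathrm{lo}}_c$ gives
\begin{equation}
f_{\min}\,\Pi^{\mathrm{lo}}_c\ket{x_c}\bra{x_c}\Pi^{\mathrm{lo}}_c\le 2\theta_0 \Pi^{\mathrm{lo}}_c .
\end{equation}
Taking the expectation in the unit vector $\Pi^{\mathrm{lo}}_c\ket{x_c}/\norm{\Pi^{\mathrm{lo}}_c\ket{x_c}}$, or the trace if one prefers, yields $f_{\min}\norm{\Pi^{\mathrm{lo}}_c\ket{x_c}}^4\le 2\theta_0\norm{\Pi^{\mathrm{lo}}_c\ket{x_c}}^2$. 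Hence $\norm{\Pi^{\mathrm{lo}}_c\ket{x_c}}^2\le 2\theta_0/f_{\min}=\vepproj/q$. Summing over the $q$ colors gives \eqref{eq:learner-spectral-capture}. Empty blocks contribute zero.

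\emph{Consequences for the normalized vectors.} Since $\vepproj<1/4$, each $\Pihi_c\ket{x_c}\neq0$, so the normalized vectors are well defined. We have $|\langle x_c|\xbar_c\rangle|^2=\norm{\Pihi_c x_c}^2=1-\norm{\Pi^{\mathrm{lo}}_c x_c}^2$, which gives the identity. Writing $\epsilon_c:=\norm{\Pi^{\mathrm{lo}}_c x_c}^2$, the product overlap satisfies $|\langle x_C|\xbar_C\rangle|^2=\prod_c(1-\epsilon_c)\ge 1-\sum_c\epsilon_c\ge1-\vepproj$.

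\emph{Success probability and overlap.} Since $\Pihi_C$ is a product of projectors, $\Pihi_C\ket{x_C}=\sqrt{t}\,\ket{\xbar_C}$ with $t:=\prod_c\norm{\Pihi_cx_c}^2\ge1-\vepproj$. Then
\begin{equation}
\bra{\xbar_C}\Pihi_C\rho\Pihi_C\ket{\xbar_C}=\bra{\xbar_C}\rho\ket{\xbar_C}.
\end{equation}
This quantity is the one we must lower bound, and it is the main step, because $\ket{\xbar_C}$ differs from the optimizer. Write $\ket{x_C}=\sqrt t\ket{\xbar_C}+\ket{r}$ with $\norm{r}^2=1-t\le\vepproj$. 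We then need a bound of the form $\bra{\xbar}\rho\ket{\xbar}\ge F_C(\rho)(1-\vepproj)^2$.

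The direct triangle inequality in $\rho^{1/2}$-norm gives $\sqrt{F_C}\le\sqrt t\,\norm{\rho^{1/2}\xbar}+\norm{\rho^{1/2}r}$. This is too lossy, since $\norm{\rho^{1/2}r}$ is only bounded by $\norm r\le\sqrt{\vepproj}$, which does not scale with $F_C$.

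Instead I would exploit the product structure one color at a time. Replace $x_c$ by $\xbar_c$ sequentially, using that $\Pi^{\mathrm{lo}}_c$ is a spectral projector of $\rho_c$. In the operator $T_c=\bra{x_{-c}}\rho\ket{x_{-c}}\le\rho_c$, the low part contributes a cross term. By Cauchy--Schwarz this cross term is controlled by $\bra{x_c}\Pi^{\mathrm{lo}}T_c\Pi^{\mathrm{lo}}\ket{x_c}\le 2\theta_0\epsilon_c$, which is tiny, rather than by $\epsilon_c$ alone. The aim is to show that each replacement multiplies the overlap by at least $(1-\epsilon_c)$ up to an additive error $O(\sqrt{F_C\theta_0\epsilon_c})$, which can be absorbed into $F_C\,\epsilon_c$-type losses, with $F_C\ge f_{\min}$ and $\theta_0=f_{\min}\vepproj/(2q)$. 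Multiplying over colors then gives $\bra{\xbar_C}\rho\ket{\xbar_C}\ge f_{\min}(1-\vepproj)^2=\pproj$. I expect this bookkeeping of cross terms to be the delicate part. If it fails, I would fall back on proving only the weaker bound $\pproj$ and choosing $\theta_0$ accordingly.

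Finally, $\Tr(\Pihi_C\rho)\ge\bra{\xbar_C}\rho\ket{\xbar_C}\ge\pproj$. Normalizing, $\bra{\xbar_C}\rhobar_C\ket{\xbar_C}=\bra{\xbar_C}\rho\ket{\xbar_C}/\Tr(\Pihi_C\rho)\ge\bra{\xbar_C}\rho\ket{\xbar_C}\ge\pproj$, since the trace is at most $1$.
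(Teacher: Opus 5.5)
Your rank bound, capture bound, and the consequences for $\ket{\xbar_c}$ and $\ket{\xbar_C}$ are correct and follow the paper's argument. One small slip: taking the \emph{trace} of $f_{\min}\Pi^{\mathrm{lo}}_c\ket{x_c}\bra{x_c}\Pi^{\mathrm{lo}}_c\le 2\theta_0\Pi^{\mathrm{lo}}_c$ only yields $f_{\min}\epsilon_c\le 2\theta_0\Tr\Pi^{\mathrm{lo}}_c$. That is useless, because $\Tr\Pi^{\mathrm{lo}}_c$ can be of order $d^{|B_c|}$; only the expectation in the unit vector works.

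The genuine gap is the final part: you never prove $\Tr(\Pihi_C\rho)\ge\pproj$ or $\bra{\xbar_C}\rhobar_C\ket{\xbar_C}\ge\pproj$. The color-by-color replacement is only announced, together with a fallback. The bookkeeping you describe also does not obviously land on the constant $(1-\vepproj)^2$. In particular, an additive error of order $\sqrt{\theta_c\epsilon_c}$ cannot be absorbed into an $F_C\epsilon_c$-type loss color by color, since it is much larger than $\epsilon_c$ when $\epsilon_c$ is small.

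The missing idea sits exactly in the step you dismissed. The triangle inequality in the $\rho^{1/2}$-norm is not lossy. The point is that $\|\rho^{1/2}r\|$ is much smaller than $\|r\|$, once you use that $r$ is built from low-eigenvalue components of the block marginals. Write $I-\Pihi_C=\sum_c Z_c$ with $Z_c:=\bigotimes_{j<c}I_j\otimes\Pi^{\mathrm{lo}}_c\otimes\bigotimes_{j>c}\Pihi_j$. In $Z_c\ket{x_C}\bra{x_C}Z_c$, every tensor factor other than the $c$-th is dominated by the identity, so pairing with $\rho$ reduces to the marginal:
\[
\bra{x_C}Z_c\rho Z_c\ket{x_C}\le\bra{x_c}\Pi^{\mathrm{lo}}_c\rho_c\Pi^{\mathrm{lo}}_c\ket{x_c}\le\theta_c\epsilon_c\le\frac{\theta_c^2}{F_C(\rho)}.
\]
The last step is your capture estimate run with $\theta_c$ and $F_C(\rho)$ in place of $2\theta_0$ and $f_{\min}$. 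Hence
\[
\|\rho^{1/2}(I-\Pihi_C)\ket{x_C}\|\le\sum_c\frac{\theta_c}{\sqrt{F_C(\rho)}}\le\frac{2q\theta_0}{\sqrt{F_C(\rho)}}\le\sqrt{F_C(\rho)}\,\vepproj,
\]
and so $\bra{x_C}\Pihi_C\rho\Pihi_C\ket{x_C}\ge F_C(\rho)(1-\vepproj)^2\ge\pproj$. This is the paper's route.

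Since $\bra{x_C}\Pihi_C\rho\Pihi_C\ket{x_C}=t\,\bra{\xbar_C}\rho\ket{\xbar_C}$ with $t\le1$, your two closing deductions then give both claimed bounds. Your sequential replacement, done in the $\rho^{1/2}$-norm rather than in the quadratic form, is precisely this telescoping. Neither a fallback nor a different choice of $\theta_0$ is needed.
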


\begin{proof}
Every eigenvalue of $\rho_c$ on $E_c$ is at least
$\theta_c\ge\theta_0$. Since $\Tr(\rho_c)=1$, we have
\begin{equation}
  \theta_0\dim E_c \le \theta_c\dim E_c \le \Tr(\Pihi_c\rho_c) \le 1,
\end{equation}
which proves \eqref{eq:learner-spectral-rank}.

By Lemma~\ref{lem:learner-marginal-dominance} and the
definition of the spectral projector, for any $c$,
\begin{equation}
    (I-\Pihi_c)\ketbra{x_c}{x_c}(I-\Pihi_c)
        \le \f{1}{F_C(\rho)} (I-\Pihi_c)\rho_c(I-\Pihi_c)
        \le \f{\theta_c}{F_C(\rho)} I,
\end{equation}
therefore we have
\begin{align}
    \norm{(I-\Pihi_c)\ket{x_c}}_2^4
        &= \bra{x_c}(I-\Pihi_c)\ketbra{x_c}{x_c}(I-\Pihi_c)\ket{x_c} \\
        &\le \f{\theta_c}{F_C(\rho)} \bra{x_c}(I-\Pihi_c) \cdot I \cdot (I-\Pihi_c) \ket{x_c} \\
        &\le \f{\theta_c}{F_C(\rho)} \norm{(I-\Pihi_c)\ket{x_c}}_2^2.
\end{align}
We conclude that 
\begin{equation}\label{eq:learner:bound on difference}
    \norm{(I-\Pihi_c)\ket{x_c}}_2^2 \le \f{\theta_c}{F_C(\rho)},
\end{equation}
and
\begin{equation}
  \sum_c \norm{(I-\Pihi_c)\ket{x_c}}_2^2 
  \le \frac{\sum_c\theta_c}{F_C(\rho)}
  \le \frac{2q\theta_0}{F_C(\rho)}
  =\frac{f_{\min}}{F_C(\rho)}\vepproj
  \le\vepproj,
\end{equation}
proving \eqref{eq:learner-spectral-capture}.
In particular, $\norm{\Pihi_c\ket{x_c}}_2 \ge 1 - \vepproj > 0$ for every $c$, so all the normalized vectors $\ket{\xbar_c}$ are well defined. Moreover,
\begin{equation}
  \abs{\braket{x_c}{\xbar_c}}^2
      = \bra{x_c}\f{\Pihi_c\ketbra{x_c}{x_c}\Pihi_c}{\norm{\Pihi_c\ket{x_c}}_2^2}\ket{x_c}
      = \bra{x_c}\Pihi_c\ket{x_c}
      = 1 - \norm{(I-\Pihi_c)\ket{x_c}}_2^2,
\end{equation}
and hence
\begin{equation}
  \sum_c\left(1-\abs{\braket{x_c}{\xbar_c}}^2\right)
  =\sum_c \norm{(I-\Pihi_c)\ket{x_c}}_2^2\le\vepproj,
\end{equation}
and
\begin{equation}
  \abs{\braket{x_C}{\xbar_C}}^2
  =\prod_c\p{1-\norm{(I-\Pihi_c)\ket{x_c}}_2^2}
  \ge 1-\sum_c \norm{(I-\Pihi_c)\ket{x_c}}_2^2
  \ge 1-\vepproj.
\end{equation}

It remains to bound the projection success probability
and the overlap after projection. 
First we are going to show $\bra{x_C}\Pihi_C\rho\Pihi_C \ket{x_C} = \norm{\sqrt{\rho}\Pihi_C\ket{x_C}}_2^2 \ge \pproj$.

For each $c$, define
\begin{equation}
  Z_c := \bigotimes_{j<c} I_j  \otimes (I-\Pihi_c) \otimes \bigotimes_{j>c}\Pihi_j .
\end{equation}
We have
\begin{equation}
    I - \Pihi_C = \sum_{c=1}^q Z_c.
\end{equation}

Note that 
\begin{equation}
    Z_c\ketbra{x_C}{x_C}Z_c \le \bigotimes_{j<c}I_j \otimes (I-\Pihi_c)\ketbra{x_c}{x_c}(I-\Pihi_c) \otimes \bigotimes_{j>c}I_j.
\end{equation}
By the definition of $\rho_c$, we have
\begin{align}
    \norm{\sqrt{\rho}Z_c\ket{x_C}}_2^2
        &= \bra{x_C}Z_c\rho Z_c\ket{x_C} \\
        &= \Tr\lrb{\rho Z_c\ketbra{x_C}{x_C}Z_c} \\
        &\le \Tr\lrb{\rho_c (I-\Pihi_c) \ketbra{x_c}{x_c} (I-\Pihi_c)} \\
        &= \bra{x_c}(I-\Pihi_c)\rho_c(I-\Pihi_c)\ket{x_c} \\
        &\le \theta_c\norm{(I-\Pihi_c)\ket{x_c}}_2^2 \\
        &\le \frac{\theta_c^2}{F_C(\rho)},
\end{align}
where the last inequality follows from the bound
proved above, \Cref{eq:learner:bound on difference}. Applying the triangle inequality gives
\begin{align}
    \norm{\sqrt{\rho}(I-\Pihi_C)\ket{x_C}}_2
        &\le \sum_{c=1}^q
            \norm{\sqrt{\rho}Z_c\ket{x_C}}_2 \\
        &\le \frac{\sum_c\theta_c}{\sqrt{F_C(\rho)}} \\
        &\le \frac{2q\theta_0}{\sqrt{F_C(\rho)}} \\
        &= \frac{f_{\min}\vepproj}{\sqrt{F_C(\rho)}} \\
        &\le \sqrt{F_C(\rho)}\,\vepproj.
\end{align}
Since $\norm{\sqrt{\rho}\ket{x_C}}_2=\sqrt{F_C(\rho)}$,
the triangle inequality implies
\begin{align}
    \norm{\sqrt{\rho}\Pihi_C\ket{x_C}}_2
        &\ge \norm{\sqrt{\rho}\ket{x_C}}_2
            -\norm{\sqrt{\rho}(I-\Pihi_C)\ket{x_C}}_2 \\
        &\ge \sqrt{F_C(\rho)}(1-\vepproj).
\end{align}
Consequently,
\begin{equation}
    \bra{x_C}\Pihi_C\rho\Pihi_C\ket{x_C}
        \ge F_C(\rho)(1-\vepproj)^2
        \ge f_{\min}(1-\vepproj)^2
        =\pproj.
\end{equation}

Because $\Pihi_C\rho\Pihi_C$ is positive semidefinite,
we obtain
\begin{equation}
    \Tr(\Pihi_C\rho)
        =\Tr(\Pihi_C\rho\Pihi_C)
        \ge \bra{x_C}\Pihi_C\rho\Pihi_C\ket{x_C}
        \ge\pproj>0.
\end{equation}
Thus the projection succeeds with probability at least
$\pproj$, and $\rhobar_C$ is well defined.

Finally, using
\begin{equation}
    \ket{\xbar_C}
        =\frac{\Pihi_C\ket{x_C}}
            {\norm{\Pihi_C\ket{x_C}}_2},
\end{equation}
we have
\begin{align}
    \bra{\xbar_C}\rhobar_C\ket{\xbar_C}
        &=
        \frac{\bra{x_C}\Pihi_C\rho\Pihi_C\ket{x_C}}
        {\norm{\Pihi_C\ket{x_C}}_2^2
            \Tr(\Pihi_C\rho)} \\
        &\ge \bra{x_C}\Pihi_C\rho\Pihi_C\ket{x_C} \\
        &\ge\pproj,
\end{align}
where the first inequality holds because both factors
in the denominator are positive and at most one.
This proves \eqref{eq:learner-projection-success-overlap}.
\end{proof}

\subsubsection{Implementing the Blockwise Spectral Projection $\Pihi_C$}
\label{subsubsec:learner-projection-implementation}

We implement the projectors $\Pihi_c$ from
Lemma~\ref{lem:learner-spectral-projection} using independent copies of
the marginals $\rho_c$, obtained by discarding all color blocks other
than $c$ from fresh copies of $\rho$. We construct a spectral
threshold filter and use state samples to implement the corresponding
two-outcome measurement, with outcome $1$ for the high-eigenvalue subspace.
For an $m$-qubit operator $A$, an exact block encoding
with $a$ ancillas is a unitary $U$ satisfying
\begin{equation}
  (\bra{0^a}\otimes I)U(\ket{0^a}\otimes I)=A.
\end{equation}
To implement the projectors $\Pihi_c$, we use two complementary tools.
Theorem~\ref{thm:learner-qsvt-threshold} constructs an approximate spectral
filter for any quantum state that preserves high-eigenvalue components and suppresses
low-eigenvalue components, assuming access to a block encoding of the input state.
Theorem~\ref{thm:learner-samplizer} allows us to simulate such circuits
using copies of the input state instead of accessing to a block encoding of the state.
Together, these results enable us to approximate the required blockwise spectral projections with copies of $\rho_c$.

\begin{theorem}[Spectral threshold filtering, rephrased from
  {\cite[Theorem~19]{GSLW19}}]
\label{thm:learner-qsvt-threshold}
Let $A=\sum_j\lambda_j\ket{v_j}\!\bra{v_j}$ satisfy $0\le A\le I$
on $m$ qubits. For any exact $a$-ancilla block encoding $U$ of $A$
and known computable parameters $\theta,\Delta_{\mathrm{sp}}\in(0,1)$ and
$\varepsilon\in(0,1/2]$, we can implement $W[U]$ and $W[U]^\dagger$, such that
\[
  F:=\p{\bra{0^{a+1}}\otimes I} W[U] \p{\ket{0^{a+1}}\otimes I}
    =\sum_j f_j\ket{v_j}\!\bra{v_j},
  \qquad
  \begin{cases}
    f_j\in[0,\varepsilon],&\lambda_j\le\theta-\Delta_{\mathrm{sp}},\\
    f_j\in[1-\varepsilon,1],&\lambda_j\ge\theta+\Delta_{\mathrm{sp}}.
  \end{cases}
\]
The circuit uses $O(\Delta_{\mathrm{sp}}^{-1}\log(1/\varepsilon))$ queries to
$U,U^\dagger$ and $O((a+1)\Delta_{\mathrm{sp}}^{-1}\log(1/\varepsilon))$
additional one- and two-qubit gates, with $O(a+1)$ ancillae. The non-oracle gates are independent of $U$.
\end{theorem}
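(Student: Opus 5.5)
This is the standard QSVT construction of a threshold projector. We would obtain it by composing the QSVT machinery with a polynomial approximation of a shifted sign function.

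\emph{Step 1 (polynomial).} Use the polynomial approximation of the sign function from \cite{GSLW19}. For $\kappa=\Delta_{\mathrm{sp}}$ and error $\varepsilon'$, there is a real polynomial $P_{\mathrm{sgn}}$ of degree $D=O(\kappa^{-1}\log(1/\varepsilon'))$ with the following properties:
\begin{itemize}
\item $|P_{\mathrm{sgn}}(x)|\le1$ on $[-1,1]$;
\item $|P_{\mathrm{sgn}}(x)-\operatorname{sgn}(x)|\le\varepsilon'$ for $|x|\ge\kappa$.
\end{itemize}
Shift it to the threshold and rescale: set $p(x):=\tfrac12\bigl(1+P_{\mathrm{sgn}}(x-\theta)\bigr)$ in a form of definite parity, or apply the shift inside the argument using a degree-doubling trick. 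This gives $p(x)\in[0,\varepsilon]$ for $x\le\theta-\Delta_{\mathrm{sp}}$ and $p(x)\in[1-\varepsilon,1]$ for $x\ge\theta+\Delta_{\mathrm{sp}}$, with $|p|\le1$ on $[-1,1]$.

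Because QSVT for Hermitian $A$ requires definite parity, the cleanest route is to quote \cite[Theorem~19]{GSLW19} directly. That theorem already handles the parity issue by combining even and odd parts through a linear combination of unitaries, which costs one extra ancilla. This is the source of the $a+1$ ancillas.

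\emph{Step 2 (QSVT circuit).} Since $A\ge0$ is Hermitian, the block encoding $U$ is also a singular value encoding with singular values $\lambda_j$ and singular vectors $\ket{v_j}$. Quantum singular value transformation with the phase sequence for $p$ produces a unitary $W[U]$ whose top-left block is $p(A)=\sum_j p(\lambda_j)\ket{v_j}\!\bra{v_j}$. Setting $f_j=p(\lambda_j)$ gives the stated spectral behavior. $W[U]^\dagger$ is implemented by reversing the circuit.

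\emph{Step 3 (resource count).} Each phase step uses one query to $U$ or $U^\dagger$. It also uses one controlled reflection about $\ket{0^a}$ together with a single-qubit rotation, and these cost $O(a+1)$ elementary gates. This gives $O(D)$ queries and $O((a+1)D)$ gates, with $D=O(\Delta_{\mathrm{sp}}^{-1}\log(1/\varepsilon))$. The phases are computed classically from $(\theta,\Delta_{\mathrm{sp}},\varepsilon)$ alone, so the non-oracle gates do not depend on $U$.

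\emph{Main obstacle.} The only delicate point is the parity constraint together with keeping $f_j\in[0,1]$, rather than allowing it to land in $[-\varepsilon,1+\varepsilon]$. This is handled by shrinking the polynomial slightly, replacing $p$ by $(1-\varepsilon/2)p+\varepsilon/4$ with $\varepsilon'=\varepsilon/4$. It is also where the cited theorem's guarantees are invoked.
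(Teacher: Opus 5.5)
The paper gives no proof of this statement beyond the citation to GSLW. Applying QSVT to a threshold polynomial is therefore the right route, and your resource count in Step~3 is fine. The gap is in the parity step, the one nontrivial point, which you hand off to the citation with a mechanism that does not work.

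Your $p(x)=\tfrac12\bigl(1+P_{\mathrm{sgn}}(x-\theta)\bigr)$ has no definite parity. Combining QSVT circuits for its even and odd parts $p_e,p_o$ by an equal-weight LCU block-encodes $\tfrac12\bigl(p_e(A)+p_o(A)\bigr)=\tfrac12 p(A)$. This is why the generic arbitrary-parity eigenvalue transformation is stated for $|p|\le\tfrac12$. For $\lambda_j\ge\theta+\Delta_{\mathrm{sp}}$ you would therefore get $f_j\approx\tfrac12$, not $f_j\in[1-\varepsilon,1]$.

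The missing idea is to use $A\ge0$. Only $x\in[0,1]$ matters, so you can replace $p$ by its even symmetrization
\[
p(x)+p(-x)=1-\tfrac12\bigl[P_{\mathrm{sgn}}(x+\theta)-P_{\mathrm{sgn}}(x-\theta)\bigr],
\]
which approximates $\mathbf{1}[|x|\ge\theta]$. On $[0,1]$ it lies in $[0,\varepsilon']$ for $x\le\theta-\Delta_{\mathrm{sp}}$ and in $[1-\varepsilon'/2,1+\varepsilon'/2]$ for $x\ge\theta+\Delta_{\mathrm{sp}}$, with $\kappa=\Delta_{\mathrm{sp}}$. This is the standard rectangle/threshold polynomial behind QSVT threshold projectors. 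Because it is even, a single real-polynomial QSVT circuit has top-left block $\sum_j p(\lambda_j)|v_j\rangle\langle v_j|$ on the eigenbasis of $A$. No parity LCU is needed and no factor of $2$ is lost. If you want to keep your LCU instead, you must first verify that $2p_e$ and $2p_o$ are each bounded by $1$, and $2p_e$ is exactly this even polynomial.

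Two smaller points:
\begin{itemize}
\item $|P_{\mathrm{sgn}}|\le1$ on $[-1,1]$ does not give $|p|\le1$ on $[-1,1]$, because $x\pm\theta$ leaves $[-1,1]$. You need a sign approximation that is bounded on $[-2,2]$, or you must rescale the argument by $\tfrac12$ and halve $\kappa$.
\item The ``main obstacle'' you identify is not one, since $|P_{\mathrm{sgn}}|\le1$ already puts your $p$ in $[0,1]$. The range correction actually needed is the $O(\varepsilon')$ overshoot above $1$ of the even polynomial. Assume $\theta\ge\Delta_{\mathrm{sp}}$; otherwise the low region is empty and $F=I$ works. With that assumption, an affine rescaling such as yours fixes the overshoot and preserves evenness.
\end{itemize}
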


\begin{theorem}[Sample-based circuit simulation (Samplizer),
  {\cite[Theorem III.1]{WangZhang2025Samplizer}}]
\label{thm:learner-samplizer}
Let $\omega$ be an $m$-qubit density operator and let $a\ge4$.
Consider a circuit family $C[U]$ with $Q\ge1$ queries to an exact
$a$-ancilla block encoding $U$ of $\omega/2$ or its inverse, and $T$ additional one- and two-qubit gates independent of $U$.
For every $0<\xi\le1/2$, there is a channel
$\mathcal S_{\omega,\xi}$ using
\begin{equation}
  O\!\left(\frac{Q^2}{\xi}
       \log^2\frac{Q}{\xi}\right)
\end{equation}
independent copies of $\omega$ and $O\p{T+m\frac{Q^2}{\xi}
       \log^2\frac{Q}{\xi}}$ one- and two-qubit gates.
For each $\omega$, there exists an exact $a$-ancilla block encoding $U_\omega$
of $\omega/2$ for which
\begin{equation}
  \bigl\|\mathcal S_{\omega,\xi}
       -\mathcal C[U_\omega]\bigr\|_\diamond\le\xi,
\end{equation}
where the channel $\mathcal C[U_\omega]$ is defined as $\mathcal C[U_\omega](X) = C[U_\omega]XC[U_\omega]^\dagger$.
\end{theorem}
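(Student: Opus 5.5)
The plan is to build the sample-based simulation query by query. First, I construct a channel that, from copies of $\omega$, approximates a single use of some fixed exact block encoding $U_\omega$ of $\omega/2$ (or of its inverse) to diamond-norm error $\xi/Q$. Then I substitute this channel for each of the $Q$ oracle calls in $C[U]$, leaving the $T$ oracle-independent gates untouched. The standard hybrid argument, using subadditivity of the diamond norm under composition, then bounds the total error by $Q\cdot(\xi/Q)=\xi$. This reduces the theorem to a single-query statement: one query to precision $\xi'=\xi/Q$ should cost $O\bigl(\tfrac{1}{\xi'}\log^2\tfrac{1}{\xi'}\bigr)=O\bigl(\tfrac{Q}{\xi}\log^2\tfrac{Q}{\xi}\bigr)$ copies. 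Multiplying by $Q$ queries gives the claimed copy count. The gate count follows because each copy is consumed by $O(m)$ controlled-SWAP gates, plus the $T$ original gates.

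For the single-query step I would use density matrix exponentiation, i.e.\ the Lloyd--Mohseni--Rebentrost protocol. With $N$ copies of $\omega$ it implements $e^{\pm i\omega t}$ (and controlled versions) to diamond error $O(t^2/N)$, via repeated partial swaps $e^{-i\,\mathrm{SWAP}\,\delta}$ with a fresh copy. From controlled $e^{\pm i\omega t}$ I obtain a block encoding of $\sin(\omega t)/t$ by a linear combination of unitaries, which approximates $\omega$ for small $t$. A subsequent QSVT polynomial (an arcsine-type correction) turns this into a block encoding of exactly, or up to an error absorbed into the budget, $\omega/2$, using $O(\log(1/\xi'))$ Hamiltonian-simulation segments. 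Balancing the parameters then fixes the copy count. Each of the $O(\log(1/\xi'))$ segments must have error $\xi'/\log(1/\xi')$, so each needs about $\log(1/\xi')/\xi'$ copies, which produces the $\log^2$ factor.

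The delicate point, and the one I expect to be the main obstacle, is the \emph{existence} clause. The channel must be close to $\mathcal C[U_\omega]$ for a single exact unitary block encoding $U_\omega$, not merely to something whose top-left block is approximately $\omega/2$. I would handle this in two steps. First, I take $U_\omega$ to be the ideal unitary that the LCU/QSVT circuit would produce with exact oracles $e^{\pm i\omega t}$; this is a genuine unitary, and its block is a function of $\omega$. Second, I adjust that function, with one extra ancilla (hence $a\ge4$), so that the block equals $\omega/2$ exactly: I choose $U_\omega$ as a unitary dilation whose block is exactly $\omega/2$ and which differs from the circuit's ideal unitary by $O(\xi')$ in operator norm. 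Such a dilation exists by the standard perturbation argument, in which unitary completions depend continuously on the block. With this choice, every error is measured against a fixed $U_\omega$, and the hybrid argument of the first paragraph applies verbatim, also to the queries of $U_\omega^\dagger$, which are obtained by running the construction with time reversed.
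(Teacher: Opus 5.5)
The paper gives no proof of this statement; it imports it from Wang and Zhang's Theorem~III.1. Your reconstruction is essentially the standard argument behind that cited result, and it is correct: simulate each query to accuracy $\xi/Q$ and combine them with a hybrid argument; use density matrix exponentiation for controlled $e^{\pm i\omega t}$, an LCU block encoding of $\sin(\omega t)$, and a degree-$O(\log(1/\xi'))$ QSVT arcsine correction to reach $\omega/2$; then pass to a nearby exact dilation for the existence clause. The one step you only assert, an exact dilation of $B=\omega/2$ within $O(\xi')$ of the ideal unitary $V_\omega$ with block $A$, holds with no extra ancilla: since $\|\omega/2\|\le 1/2$, both $I-A^\dagger A$ and $I-B^\dagger B$ are bounded below by a constant, so multiplying the off-block part of $V_\omega$'s first block column on the right by $(I-A^\dagger A)^{-1/2}(I-B^\dagger B)^{1/2}$ changes it by $O(\xi')$ and yields an isometry with block exactly $B$, which extends to a unitary differing from $V_\omega$ by a unitary within $O(\xi')$ of the identity (even an $O(\sqrt{\xi'})$ bound would suffice, since the polynomial error enters the copy count only logarithmically).
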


Note that the first theorem works for every block encoding, so it applies to the particular one denoted by the second. 

\begin{lemma}[Sample-based threshold spectral measurement implementation]
\label{lem:learner-guarded-filter}
Let $\omega$ be a density operator on $m$ qudits of local dimension
$d\ge2$, and let known computable $\theta\in(0,1)$ and
$\Delta_{\mathrm{sp}}\in(0,1]$
satisfy the gap condition
\begin{equation}
  \operatorname{spec}(\omega)\cap
  (\theta-\Delta_{\mathrm{sp}},\theta+\Delta_{\mathrm{sp}})
  =\varnothing.
  \label{eq:learner-guarded-gap}
\end{equation}
Set $\Pihi=\1_{[\theta,1]}(\omega)$, and define the measurement channel
\begin{equation}
  \mathcal M_\theta(X)
  =\Pihi X\Pihi\otimes\ket{1}\!\bra{1}
   +(I-\Pihi)X(I-\Pihi)\otimes\ket{0}\!\bra{0}.
\end{equation}
For every $0<\xi_{\mathrm{proj}}\le1/2$, we can implement a channel
$\widetilde{\mathcal M}_\theta$ satisfying
\begin{equation}
  \bigl\|\widetilde{\mathcal M}_\theta
    -\mathcal M_\theta\bigr\|_\diamond\le\xi_{\mathrm{proj}}
\end{equation}
using $\widetilde O(\Delta_{\mathrm{sp}}^{-2}\xi_{\mathrm{proj}}^{-1})$
independent copies of $\omega$ and
$\widetilde O((m\lceil\log_2d\rceil+1)\Delta_{\mathrm{sp}}^{-2}\xi_{\mathrm{proj}}^{-1})$
one- and two-qubit gates under a binary encoding of the qudits.
\end{lemma}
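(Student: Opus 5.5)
Our plan is to combine Theorem~\ref{thm:learner-qsvt-threshold} with Theorem~\ref{thm:learner-samplizer}, feeding the filter a block encoding of $\omega/2$ rather than $\omega$. Encode the $m$ qudits into $m\lceil\log_2 d\rceil$ qubits in binary. On the unused basis states we extend $\omega$ by zero; those eigenvalues are $0<\theta-\Delta_{\mathrm{sp}}$ and are filtered out automatically, assuming $\theta>\Delta_{\mathrm{sp}}$ (otherwise the lower gap condition is vacuous and we treat that trivial case separately). Since $U$ encodes $A=\omega/2$, whose spectrum avoids $(\theta/2-\Delta_{\mathrm{sp}}/2,\theta/2+\Delta_{\mathrm{sp}}/2)$, we apply Theorem~\ref{thm:learner-qsvt-threshold} with threshold $\theta/2$, gap $\Delta_{\mathrm{sp}}/2$, and error $\varepsilon$ to be chosen. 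This yields $W[U]$ whose top-left block is $F=\sum_j f_j\ket{v_j}\!\bra{v_j}$. Here $f_j\in[1-\varepsilon,1]$ on $\ran\Pihi$ and $f_j\in[0,\varepsilon]$ on its complement, with no eigenvalue lying in the ambiguous window.

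\textbf{From the filter to a projective measurement.} The filter only produces a two-outcome instrument with Kraus operators $F$ and $\sqrt{I-F^2}$ (or similar), not an exact projection. To fix this, we apply $W[U]$ to $\ket{0^{a+1}}\otimes\ket\psi$ and measure the ancilla. Outcome $0^{a+1}$ is mapped to $1$, and the residual state on the system register is $F\ket\psi$ together with junk. The standard trick is to uncompute: apply $W[U]$, copy the flag ``ancilla $=0^{a+1}$'' into a fresh output qubit with a multi-controlled NOT, apply $W[U]^\dagger$, and trace out the ancilla. In the exact case $F=\Pihi$, this circuit implements $\mathcal M_\theta$ exactly. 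Indeed, $W[U]$ maps $\ket{0}\ket{v}$ to $\ket 0\ket v$ for $v\in\ran\Pihi$ and to a state orthogonal to $\ket{0}\otimes\cdot$ for $v\in\ran(I-\Pihi)$, so the flag copy is a coherent projective measurement and uncomputation returns the ancilla to $\ket0$. With $f_j\in[1-\varepsilon,1]\cup[0,\varepsilon]$, a direct eigenvector-by-eigenvector comparison bounds the deviation of the resulting isometry from the ideal one by $O(\sqrt\varepsilon)$ in operator norm. It therefore bounds the channel deviation by $O(\sqrt{\varepsilon})$ in diamond norm. Controlling this error uniformly over entangled inputs is the main obstacle, because the eigenbasis is shared across all inputs. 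The cleanest route is to write both isometries as block-diagonal in $\{\ket{v_j}\}$ with $2\times2$ blocks and bound each block's difference. We choose $\varepsilon=\Theta(\xi_{\mathrm{proj}}^2)$, so this contributes at most $\xi_{\mathrm{proj}}/2$.

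\textbf{Sample-based simulation and cost.} The circuit uses $Q=O(\Delta_{\mathrm{sp}}^{-1}\log(1/\xi_{\mathrm{proj}}))$ queries to $U,U^\dagger$ and $T=O(a\,Q+m\lceil\log_2d\rceil)$ additional gates, where the last term covers the multi-controlled NOT. Since the construction works for every exact block encoding, it works in particular for the $U_\omega$ supplied by Theorem~\ref{thm:learner-samplizer}. Applying that theorem with accuracy $\xi=\xi_{\mathrm{proj}}/2$ replaces the coherent circuit by a channel using $O(Q^2\xi^{-1}\log^2(Q/\xi))=\widetilde O(\Delta_{\mathrm{sp}}^{-2}\xi_{\mathrm{proj}}^{-1})$ copies of $\omega$ and $\widetilde O((m\lceil\log_2d\rceil+1)\Delta_{\mathrm{sp}}^{-2}\xi_{\mathrm{proj}}^{-1})$ gates. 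Tracing out the ancillas does not increase diamond distance, so the triangle inequality gives $\|\widetilde{\mathcal M}_\theta-\mathcal M_\theta\|_\diamond\le\xi_{\mathrm{proj}}$.
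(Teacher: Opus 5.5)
Your architecture matches the paper's. You apply the threshold filter of Theorem~\ref{thm:learner-qsvt-threshold} to $\omega/2$ at threshold $\theta/2$ with half-width $\Delta_{\mathrm{sp}}/2$, use the circuit $W[U]$, then a flag-controlled NOT, then $W[U]^\dagger$, simulate it with the Samplizer at accuracy $\xi_{\mathrm{proj}}/2$, and binary-encode the qudits. You are in fact more explicit than the paper about why the zero-padded basis states are harmless.

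The one genuine difference is how you turn the spectral guarantee on $F$ into a diamond-norm bound. The paper invokes Rall's block-measurement theorem, which gives an error linear in $\varepsilon$, namely $4\sqrt2\,\|F^2-\Pihi\|_\infty\le8\sqrt2\,\varepsilon$, so it can take $\varepsilon=\Theta(\xi_{\mathrm{proj}})$. You compare the implemented isometry with the ideal one directly, which loses a square root, and you compensate with $\varepsilon=\Theta(\xi_{\mathrm{proj}}^2)$. Since $\varepsilon$ enters the query count only through $\log(1/\varepsilon)$, this costs a constant factor and nothing asymptotically, and it avoids the external citation.

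Your comparison needs no literal $2\times2$ block structure. That structure need not hold for the GSLW circuit, whose extra ancilla can carry arbitrary junk. Let $V$ be the implemented isometry and $V_0\ket{\psi}=\ket{0^{a+1}}\Pihi\ket{\psi}\ket{1}+\ket{0^{a+1}}(I-\Pihi)\ket{\psi}\ket{0}$ the ideal one. Then $V_0^\dagger V=\Pihi F^2+(I-\Pihi)(I-F^2)$, so $(V-V_0)^\dagger(V-V_0)=2\Pihi(I-F^2)+2(I-\Pihi)F^2\le4\varepsilon I$. Hence the diamond error is at most $2\|V-V_0\|_\infty\le4\sqrt{\varepsilon}$, and $\varepsilon=\xi_{\mathrm{proj}}^2/64$ gives $\xi_{\mathrm{proj}}/2$. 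Entangled inputs need no separate treatment, contrary to what you flag as the main obstacle.

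One correction is needed. Your coherent circuit does not implement $\mathcal M_\theta$ exactly even when $F=\Pihi$. After tracing out the ancilla, $V_0XV_0^\dagger$ still contains the coherences $\Pihi X(I-\Pihi)\otimes\ket{1}\!\bra{0}$ and their adjoint, which $\mathcal M_\theta$ does not have. You must also measure (dephase) the outcome qubit in the computational basis, as the paper does. This is a channel applied to both the implemented and ideal maps, so it does not increase the error.
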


\begin{proof}
If $\Delta_{\mathrm{sp}}>\theta$, then $\Pihi=I$ and the claim is
immediate, so assume $\Delta_{\mathrm{sp}}\le\theta$.
First consider the qubit case $d=2$.
Apply \Cref{thm:learner-qsvt-threshold} to $A=\omega/2$ with
threshold $\theta/2$, transition half-width $\Delta_{\mathrm{sp}}/2$,
$a=4$, and $\varepsilon=\xi_{\mathrm{proj}}/(16\sqrt{2})$.
Append five zero ancillas and a zero outcome qubit. Apply $W[U]$,
flip the outcome qubit conditioned on the five ancillas being all
zero, and apply $W[U]^\dagger$.

The spectral bounds of $F$ give $\norm{F^2-\Pihi}_\infty\le2\varepsilon$,
now we are going to derive a diamond norm bound based on this infinity norm bound by applying the block-measurement theorem of
Rall~\cite[Theorem~19]{Rall2021BlockMeasurement} to the block encoding
$W[U]$ of $F$, which shows that the diamond norm could be bounded by the infinity norm bound with a multiplicative factor $4\sqrt{2}$.
After discarding the ancilla register and measuring the
outcome qubit, the resulting channel $\widetilde{\mathcal M}_\theta$
satisfies
\begin{equation}
  \|\widetilde{\mathcal M}_\theta-\mathcal M_\theta\|_\diamond
  \le4\sqrt{2}\|F^2-\Pihi\|_\infty
  \le8\sqrt{2}\varepsilon=\xi_{\mathrm{proj}}/2,
\end{equation}
where measuring the outcome qubit cannot increase the diamond-norm
error guaranteed by the theorem.

The entire unitary circuit uses
$O(\Delta_{\mathrm{sp}}^{-1}\log(1/\xi_{\mathrm{proj}}))$ queries and $O(\Delta_{\mathrm{sp}}^{-1}\log(1/\xi_{\mathrm{proj}}))$
additional gates. Apply Theorem~\ref{thm:learner-samplizer} once to
this circuit with simulation error $\xi_{\mathrm{proj}}/2$, obtaining the claimed
sample and gate bounds for $d=2$; contractivity under initialization,
discarding ancillas, and measurement, followed by the triangle
inequality, gives total diamond-norm error at most $\xi_{\mathrm{proj}}$.

For general $d$, encode each qudit in $\lceil\log_2d\rceil$ qubits
and apply the qubit case with $m'=m\lceil\log_2d\rceil$, then decode
the output.
\end{proof}

The following elementary counting argument supplies a common eigenvalue gap $\Delta_{\mathrm{sp}}$ for all colors, using the threshold scale $\theta_0$ from
\eqref{eq:learner-spectral-parameters}.

\begin{lemma}
\label{lem:learner-random-guard}
Let $0<\theta_0<1/2$, $q\ge1$, and
$0<\delta_{\mathrm{gap}}<1$. Choose each $\theta_c$ independently
and uniformly from $[\theta_0,2\theta_0]$, and set
\begin{equation}
  \Delta_{\mathrm{sp}}
  :=\frac{\delta_{\mathrm{gap}}\theta_0^2}{4q}.
  \label{eq:learner-spectral-guard-width}
\end{equation}
With probability at least $1-\delta_{\mathrm{gap}}$, every marginal
$\rho_c$ satisfies \eqref{eq:learner-guarded-gap} at its threshold
$\theta_c$.
The same conclusion holds if $N$ is an integer with
$h:=\theta_0/N\le\Delta_{\mathrm{sp}}$ and each $\theta_c$ is chosen
independently and uniformly from
$\{\theta_0+jh:j=0,\ldots,N-1\}$.
\end{lemma}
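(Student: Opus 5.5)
Since the thresholds are chosen independently for different colors, I will bound the failure probability for one fixed color $c$ and then take a union bound over the $q$ colors. It therefore suffices to show that, for each $c$,
\begin{equation}
  \Pr_{\theta_c}\bigl[\operatorname{spec}(\rho_c)\cap(\theta_c-\Delta_{\mathrm{sp}},\theta_c+\Delta_{\mathrm{sp}})\ne\varnothing\bigr]\le\frac{\delta_{\mathrm{gap}}}{q}.
\end{equation}

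\textbf{Counting the relevant eigenvalues.} The key observation is that only eigenvalues of $\rho_c$ near $[\theta_0,2\theta_0]$ can cause a violation. A violation requires an eigenvalue $\lambda$ of $\rho_c$ with $|\lambda-\theta_c|<\Delta_{\mathrm{sp}}$. Since $\Delta_{\mathrm{sp}}\le\theta_0^2/4\le\theta_0/2$, any such eigenvalue satisfies $\lambda>\theta_0/2$. Because $\Tr\rho_c=1$, counted with multiplicity there are fewer than $2/\theta_0$ eigenvalues exceeding $\theta_0/2$, so at most $2/\theta_0$ distinct ones.

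\textbf{Continuous case.} Each relevant eigenvalue $\lambda$ excludes an open interval $(\lambda-\Delta_{\mathrm{sp}},\lambda+\Delta_{\mathrm{sp}})$ of length $2\Delta_{\mathrm{sp}}$ for $\theta_c$. The uniform distribution on $[\theta_0,2\theta_0]$ has density $1/\theta_0$, so each eigenvalue is hit with probability at most $2\Delta_{\mathrm{sp}}/\theta_0$. A union bound over the relevant eigenvalues gives
\begin{equation}
  \Pr[\text{fail for }c]\le\frac{2}{\theta_0}\cdot\frac{2\Delta_{\mathrm{sp}}}{\theta_0}=\frac{4\Delta_{\mathrm{sp}}}{\theta_0^2}=\frac{\delta_{\mathrm{gap}}}{q},
\end{equation}
which is exactly the required per-color bound.

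\textbf{Discrete case.} This is the only step needing care. The excluded open interval has length $2\Delta_{\mathrm{sp}}$ and the grid spacing is $h\le\Delta_{\mathrm{sp}}$, so the interval contains at most $\lceil 2\Delta_{\mathrm{sp}}/h\rceil\le 2\Delta_{\mathrm{sp}}/h+1$ grid points. Since $h\le\Delta_{\mathrm{sp}}$ gives $1\le\Delta_{\mathrm{sp}}/h$, this is at most $3\Delta_{\mathrm{sp}}/h$ points. With $N=\theta_0/h$, the probability that a given eigenvalue is hit is therefore at most
\begin{equation}
  \frac{3\Delta_{\mathrm{sp}}/h}{N}=\frac{3\Delta_{\mathrm{sp}}}{\theta_0}.
\end{equation}
This constant $3$ is worse than the $2$ from the continuous case, and the loss has to be absorbed by a tighter eigenvalue count. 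Any eigenvalue that can be hit satisfies $\lambda>\theta_0-\Delta_{\mathrm{sp}}\ge\frac{3}{4}\theta_0$, using $\Delta_{\mathrm{sp}}\le\theta_0/4$. Hence there are at most $\frac{4}{3\theta_0}$ relevant distinct eigenvalues, and the per-color failure probability is at most
\begin{equation}
  \frac{4}{3\theta_0}\cdot\frac{3\Delta_{\mathrm{sp}}}{\theta_0}=\frac{4\Delta_{\mathrm{sp}}}{\theta_0^2}=\frac{\delta_{\mathrm{gap}}}{q}.
\end{equation}
The same tighter eigenvalue count also works in the continuous case. The union bound over the $q$ colors then completes the proof for both the continuous and the discrete threshold choices.
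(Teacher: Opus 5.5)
Your proof is correct and follows the paper's argument. In both, you count the at most $2/\theta_0$ eigenvalues that can lie within $\Delta_{\mathrm{sp}}$ of a threshold, bound the probability that each is hit by $2\Delta_{\mathrm{sp}}/\theta_0$ (or by $3\Delta_{\mathrm{sp}}/\theta_0$ on the grid, via $\lceil 2\Delta_{\mathrm{sp}}/h\rceil\le 2\Delta_{\mathrm{sp}}/h+1$), and take a union bound over eigenvalues and colors. The only difference is in the grid case: you use the cutoff $\theta_0-\Delta_{\mathrm{sp}}\ge 3\theta_0/4$ and get exactly $\delta_{\mathrm{gap}}$, whereas the paper uses $7\theta_0/8$ and gets the slightly stronger bound $6\delta_{\mathrm{gap}}/7$.
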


\begin{proof}
Since $\Delta_{\mathrm{sp}}<\theta_0/2$, only eigenvalues in
$[\theta_0/2,5\theta_0/2]$ can lie within $\Delta_{\mathrm{sp}}$ of
a chosen threshold. A density operator has at most $2/\theta_0$
such eigenvalues, counting multiplicity. Each excludes an interval
of length $2\Delta_{\mathrm{sp}}$. Consequently, for one color,
\begin{equation}
  \Pr\!\left[
    \operatorname{spec}(\rho_c)\cap
    (\theta_c-\Delta_{\mathrm{sp}},\theta_c+\Delta_{\mathrm{sp}})
    \ne\varnothing\right]
  \le \frac{2}{\theta_0}\cdot 2 \Delta_{\mathrm{sp}} \cdot \frac{1}{\theta_0}
  =\frac{\delta_{\mathrm{gap}}}{q}.
\end{equation}
A union bound over the $q$ colors proves the claim.
For the finite grid, only eigenvalues above
$\theta_0-\Delta_{\mathrm{sp}}>7\theta_0/8$ matter, so there are at
most $8/(7\theta_0)$ of them. Each excludes at most
$2\Delta_{\mathrm{sp}}/h+1$ grid points, a fraction at most
$(2\Delta_{\mathrm{sp}}+h)/\theta_0\le3\Delta_{\mathrm{sp}}/\theta_0$.
A union bound over all colors gives joint failure probability at most
$6\delta_{\mathrm{gap}}/7$.
\end{proof}

\begin{parameterbox}
Now we are going to explicitly state our choice of parameters which appear in this section.
These parameters will lead to the desired learning result, \Cref{thm:ProductLearner}.
Recall the parameters from the random-coloring step:
\begin{equation}
    \vep_0:=\min\{\vep,1/10\},
    \qquad
    f_{\min}:=\frac{\vep_0}{8},
    \qquad
    \vepcolor:=\frac{\vep_0}{10^4},
\end{equation}
and
\begin{equation}
    q:=\max\left\{13,
        \left\lceil\frac{12}{f_{\min}^2\vepcolor}\right\rceil
    \right\}
    =\Theta(\vep^{-3}).
\end{equation}
For the blockwise spectral projection step, set
\begin{equation}
    \vepproj:=\frac{\vep_0}{10^4},
    \qquad
    \theta_0:=\frac{f_{\min}\vepproj}{2q},
    \qquad
    \delta_{\mathrm{gap}}:=\frac18.
\end{equation}
We have $\theta_0=\Theta(\vep^5)$. Thus,
Lemma~\ref{lem:learner-spectral-projection} gives
\begin{equation}
    \dim E_c\le\theta_0^{-1}=O(\vep^{-5}),
    \qquad
    \pproj=f_{\min}(1-\vepproj)^2=\Theta(\vep),
\end{equation}
while the total squared projection loss of the optimizer
is at most $\vepproj$.
Choose
\begin{equation}
  \begin{aligned}
    \Delta_{\mathrm{sp}}
    &:=\frac{\delta_{\mathrm{gap}}\theta_0^2}{4q}
      =\Theta(\vep^{13}),\\
    N&:=2^{\left\lceil
        \log_2(\theta_0/\Delta_{\mathrm{sp}})
    \right\rceil},
    \qquad h:=\frac{\theta_0}{N}.
  \end{aligned}
\end{equation}
Independently for every color, choose $J_c$ uniformly from
$\{0,\ldots,N-1\}$ and set
$\theta_c:=\theta_0+J_ch$.
By the finite-grid conclusion of
Lemma~\ref{lem:learner-random-guard}, all thresholds satisfy the
required gap conditions simultaneously with probability at least $7/8$.

Conditioned on this event, for a target implementation
accuracy $\eta\in(0,1/2]$, implement each local measurement
using Lemma~\ref{lem:learner-guarded-filter} with
\begin{equation}
    \xi:=\frac{\eta\pproj}{4q}.
\end{equation}
The diamond-norm error of implementing the joint measurement $\Pihi_C = \bigotimes_c \Pihi_c$ is bounded by $q\xi=\eta\pproj/4$. 
Consequently, the implemented projection succeeds with 
probability at least $\pproj/2$, and its normalized 
accepted output differs from $\rhobar_C$ by at most 
$\eta$ in trace norm.
We choose $\eta$ in the final analysis relative to the
success probability of the subsequent cloning step.
\end{parameterbox}

\subsection{Werner's Optimal Cloning Channel}
\label{subsec:learner-werner-cloning}
After projecting $\rho$ with $\Pihi_C$, we restrict the input state to the bounded-dimensional
space $\bigotimes_c E_c$, while preserving overlap with the projected
optimizer $\ket{\xbar_C}$. This does not yet allow us to apply the
high-fidelity learner, because the overlap guarantee
$\bra{\xbar_C}\rhobar_C\ket{\xbar_C}\ge\pproj$
does not imply that the product fidelity of $\rhobar_C$ for each color block exceeds $5/6 + \vep_{\mathrm{hfl}}$.

In this section, we use Werner's optimal cloning channel to solve this problem. 
Write $d_c:=\dim E_c$, and let $\Pi_{L,E_c}$ be the projector onto
$\Sym^L(E_c)$. For an integer $L\ge1$, Werner's optimal $1\to L$ cloning channel on
$E_c$ is the channel~\cite{Werner1998}
\begin{equation}
  \mathcal W_{c,L}(X)
  :=\frac{d_c}{\binom{L+d_c-1}{L}}
    \Pi_{L,E_c}
    \left(X\otimes I_{E_c}^{\otimes(L-1)}\right)
    \Pi_{L,E_c}.
  \label{eq:learner-werner-channel}
\end{equation}
Its input is supported on $E_c$. Applying these channels to the
projected input gives the cloned state
\begin{equation}
  \Omega_{C,L}
  :=\left(\bigotimes_c\mathcal W_{c,L}\right)(\rhobar_C).
  \label{eq:learner-joint-clone-output}
\end{equation}

\subsubsection{Properties of the Cloned State $\Omega_{C,L}$}

We will need two properties for learning.
First, for sufficiently many output registers, each local instance of Werner's optimal cloning channel approximates a measure-and-prepare channel: it draws one unit vector $\ket{v_c}\in E_c$ for each color and prepares multiple copies of that same vector.
We can further conclude that the output state $\Omega_{C,L}$ the simultaneous action of all cloning channels  can be analyzed as repeated copies of a pure product state across colors, where the state is sampled from some joint distribution.
Second, with probability at least $\exp\p{-\poly\p{1/\vep}}$, the sampled product state is close enough to the projected optimizer $\ket{\bar{x}_C}$ on each color block, to make sure the product fidelity of the sampled state is over $5/6 + \vep_{\mathrm{hfl}}$.
This allows us to feed $\Omega_{C,L}$ to the high-fidelity closest product state learner.
We will prove these two properties in the following lemma.

\begin{lemma}[Properties of the cloned state $\Omega_{C,L}$]
\label{lem:learner-werner-cloning}
Let $\rhobar_C$, $E_c$, and $\ket{\xbar_C}$ be as in
Lemma~\ref{lem:learner-spectral-projection}, and write $d_c=\dim E_c$.
There is a probability distribution $\nu_C$ on the unit vectors in $\bigotimes_cE_c$, independent of $L$, with the following properties.
\begin{enumerate}[label=(\roman*)]
  \item For every integer $L\ge1$,
  \begin{equation}
    \left\|
      \Omega_{C,L}
      -\mathbb E_{\nu_C}\left[
        \bigotimes_c\left(\ket{v_c}\!\bra{v_c}\right)^{\otimes L}
      \right]
    \right\|_1
    \le\sum_c\frac{2d_c}{L+d_c}
    \le\frac{2q}{\theta_0L}.
    \label{eq:learner-cloning-comparison}
  \end{equation}
  \item For every $\vepclone\in(0,1)$,
  \begin{equation}
    \Pr_{\nu_C}\!\left[
      \sum_c\left(1-\abs{\braket{\xbar_c}{v_c}}^2\right)
      \le\vepclone
    \right]
    \ge
    \pproj\left(\frac{\vepclone}{q}\right)^{
      \sum_c(d_c-1)}
    \ge \pclone,
    \label{eq:learner-cloning-good-event}
  \end{equation}
  where $\pclone :=  \pproj\left(\frac{\vepclone}{q}\right)^{
    q(\lceil\theta_0^{-1}\rceil-1)}$.
\end{enumerate}
\end{lemma}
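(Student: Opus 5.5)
The plan is to write Werner's cloning map via its measure-and-prepare (POVM) form and then define $\nu_C$ explicitly. For a single block, the symmetric projector has the Haar resolution $\Pi_{L,E_c}=\binom{L+d_c-1}{L}\int(\ket{u}\!\bra{u})^{\otimes L}\,d\mu(u)$. Set
\[
\mathcal M_c(X):=d_c\int\bra{u}X\ket{u}\,(\ket{u}\!\bra{u})^{\otimes L}\,d\mu(u).
\]
This is the measure-and-prepare channel that measures the POVM $\{d_c\ket{u}\!\bra{u}\,d\mu(u)\}$ and prepares $L$ copies of the outcome. Define $\nu_C$ as the outcome distribution of the product POVM $\bigotimes_c\{d_c\ket{u_c}\!\bra{u_c}d\mu_c\}$ applied to $\rhobar_C$. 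Its density with respect to product Haar measure is $\bigl(\prod_c d_c\bigr)\bra{u}\rhobar_C\ket{u}$, which does not depend on $L$. By construction, $\bigl(\bigotimes_c\mathcal M_c\bigr)(\rhobar_C)=\E_{\nu_C}\bigl[\bigotimes_c(\ket{v_c}\!\bra{v_c})^{\otimes L}\bigr]$.

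For (i), I would first bound $\|\mathcal W_{c,L}-\mathcal M_c\|_\diamond\le 2d_c/(L+d_c)$ for each block, and then telescope across the $q$ blocks using the triangle inequality and the fact that channels have diamond norm $1$. This gives $\sum_c 2d_c/(L+d_c)\le 2\sum_c d_c/L\le 2q/(\theta_0L)$ by \eqref{eq:learner-spectral-rank}. The single-block estimate is the main obstacle. I would try two routes. The first is to expand $\Pi_L(X\otimes I^{\otimes L-1})\Pi_L$ through the Haar representation and compare it with $\mathcal M_c$, using covariance of both maps to reduce to pure inputs. The second, which I expect to be cleaner, is to write $\mathcal M_c=\mathcal T\circ\mathcal W_{c,L}$ exactly, where $\mathcal T$ is the $L$-copy symmetric measure-and-prepare map. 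Then $\|\mathcal W-\mathcal M\|_\diamond\le\sup\|\sigma-\mathcal T(\sigma)\|_1$ over outputs $\sigma$ of $\mathcal W$. Since the output of $\mathcal W$ on a pure input has large overlap with pure $L$-copy states, this supremum can be controlled by the known fidelity of Werner's cloner, whose single-copy fidelity is $(L+d)/(L(d+1))\cdot\ldots$. The expected bound is $O(d/L)$, and the exact constant $2d_c/(L+d_c)$ should come out of the eigenvalue computation on the irreducible $\Sym^L$ components. If this becomes messy, I would settle for proving any bound of the form $O(d_c/L)$ and adjust the constant accordingly.

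For (ii), consider the set $G_c:=\{v_c\in E_c:\,1-|\braket{\xbar_c}{v_c}|^2\le\vepclone/q\}$ and put $G:=\prod_c G_c$. On $G$ the event in question holds, so it suffices to lower-bound $\nu_C(G)$. On $G$ the density satisfies
\[
\bra{v}\rhobar_C\ket{v}\ \ge\ \bra{\xbar_C}\rhobar_C\ket{\xbar_C}\cdot\text{(correction)}.
\]
Getting this directly is delicate, so I would instead integrate: $\int_G\bra{v}\rhobar_C\ket{v}\,d\mu\ge\ldots$. The cleanest route is to note that $\E_\mu[\ket{v}\!\bra{v}\1_G]$ restricted to the product structure dominates $\mu(G)\cdot\ket{\xbar_C}\!\bra{\xbar_C}$ up to the right factors. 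By symmetry of Haar measure about $\xbar_c$, the operator $\int_{G_c}\ket{v}\!\bra{v}d\mu$ is diagonal in a basis containing $\xbar_c$, and its $\xbar_c$-entry is at least $(1-\vepclone/q)\mu(G_c)$. Hence
\[
\nu_C(G)\ \ge\ \prod_c d_c\,\mu(G_c)\Bigl(1-\tfrac{\vepclone}{q}\Bigr)\cdot\bra{\xbar_C}\rhobar_C\ket{\xbar_C}\ -\ (\text{off-}\xbar\text{ parts, which are }\ge0).
\]
The off-$\xbar$ parts are nonnegative because both $\rhobar_C$ and the diagonal operators are positive. Next I use the standard Haar cap volume $\mu(G_c)=(\vepclone/q)^{d_c-1}$. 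Then $d_c\,\mu(G_c)(1-\vepclone/q)\ge(\vepclone/q)^{d_c-1}$ because $d_c(1-t)\ge1$ for $t\le1/2$ when $d_c\ge2$; when $d_c=1$ the statement is trivial. Combining this with $\bra{\xbar_C}\rhobar_C\ket{\xbar_C}\ge\pproj$ from Lemma~\ref{lem:learner-spectral-projection} gives the first inequality. The second inequality follows from $d_c\le\lceil\theta_0^{-1}\rceil$ and $\vepclone/q<1$.
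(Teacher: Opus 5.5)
Your construction of $\nu_C$, the comparison channel $\mathcal M_{c,L}$, the subadditivity of the diamond norm over the $q$ blocks, and your argument for (ii) all match the paper. Part (ii) is sound. $\ket{\xbar_c}$ is an eigenvector of $M_c=d_c\int_{G_c}\ket{v}\!\bra{v}\,d\mu_c$, so $M_c\succeq\kappa_c\ket{\xbar_c}\!\bra{\xbar_c}$. Tensorizing then gives $\nu_C(G)\ge\prod_c\kappa_c\,\bra{\xbar_C}\rhobar_C\ket{\xbar_C}$. In your display the remaining terms should be \emph{added}, since they are expectations of positive operators, not subtracted. Your bound $\kappa_c\ge d_c(1-s)s^{d_c-1}$ with $s=\vepclone/q$ needs $s\le 1/2$; this holds because $q\ge13$. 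The paper instead computes $\kappa_c=s^{d_c-1}(d_c-(d_c-1)s)$ exactly, which needs no such restriction.

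The gap is in (i). The single-block inequality $\|\mathcal W_{c,L}-\mathcal M_{c,L}\|_\diamond\le 2d_c/(L+d_c)$ is the whole content of (i), and you do not prove it. Falling back to an unspecified $O(d_c/L)$ would not yield the stated bound. The paper does not derive it either; it cites Theorem~1 of \cite{Chiribella2011}. Neither of your routes closes it as written:
\begin{itemize}
\item A bound on single-system pure inputs, or a clone fidelity, does not by itself control a diamond norm, which must account for inputs entangled with a reference.
\item The fidelity you quote is not Werner's; his single-clone fidelity for $1\to L$ is $\frac{2L+d-1}{L(d+1)}$.
\item Your identity $\mathcal M_{c,L}=\mathcal T\circ\mathcal W_{c,L}$ is correct, but it leaves the quantitative estimate open.
\end{itemize}

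The missing idea is an exact convex decomposition:
\[
  \mathcal M_{c,L}=\frac{L}{L+d_c}\,\mathcal W_{c,L}+\frac{d_c}{L+d_c}\,\mathcal R_{c,L},
  \qquad
  \mathcal R_{c,L}(X):=\frac{\Tr(X)}{\binom{L+d_c-1}{L}}\,\Pi_{L,E_c}.
\]
To derive it, proceed as follows.
\begin{itemize}
\item Write $\mathcal M_{c,L}(X)=d_c\binom{L+d_c}{L+1}^{-1}\Tr_{L+1}\bigl[\Pi_{L+1,E_c}(I^{\otimes L}\otimes X)\bigr]$.
\item Insert $\Pi_{L+1,E_c}=\frac{1}{L+1}(\Pi_{L,E_c}\otimes I)\bigl(I+\sum_{j=1}^{L}\swap_{j,L+1}\bigr)(\Pi_{L,E_c}\otimes I)$.
\item Use $\Tr_{L+1}[\swap_{j,L+1}(I^{\otimes L}\otimes X)]=X_j$, where $X_j$ is $X$ on copy $j$.
\item Use the permutation symmetry of $\Pi_{L,E_c}$ and the identity $(L+1)\binom{L+d_c}{L+1}=(L+d_c)\binom{L+d_c-1}{L}$.
\end{itemize}
It follows that $\mathcal W_{c,L}-\mathcal M_{c,L}=\frac{d_c}{L+d_c}(\mathcal W_{c,L}-\mathcal R_{c,L})$, a scaled difference of two channels. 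Its diamond norm is therefore at most $2d_c/(L+d_c)$, exactly the claimed constant. With this step, or the citation, your proof is complete.
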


\begin{proof}
Let $\mu_c$ be the Haar measure on the unit vectors in $E_c$, we define the joint distribution $\nu_C$ on the unit vectors in $\bigotimes_cE_c$ as follows,
\begin{equation}
  \mathrm d\nu_C(v_1,\ldots,v_q)
  :=\left(\prod_c d_c\right)
    \bra{v_1\otimes\cdots\otimes v_q}
    \rhobar_C
    \ket{v_1\otimes\cdots\otimes v_q}
    \prod_c\mathrm d\mu_c(v_c).
  \label{eq:learner-cloning-joint-distribution}
\end{equation}
We note that $\nu_C$ is independent of $L$.

For each color, consider the measure-and-prepare channel
\begin{equation}
  \mathcal M_{c,L}(X)
  :=d_c\int\bra{v_c}X\ket{v_c}
    \left(\ket{v_c}\!\bra{v_c}\right)^{\otimes L}
    \,\mathrm d\mu_c(v_c).
\end{equation}
Theorem~1 of~\cite{Chiribella2011}, applied with one input copy,
$L$ output copies, and local dimension $d_c$, gives
\begin{equation}
  \left\|\mathcal W_{c,L}-\mathcal M_{c,L}\right\|_\diamond
  \le\frac{2d_c}{L+d_c}.
\end{equation}

By the definition of $\nu_C$,
\begin{equation}
  \left(\bigotimes_c\mathcal M_{c,L}\right)(\rhobar_C)
  =\mathbb E_{\nu_C}\left[
    \bigotimes_c\left(\ket{v_c}\!\bra{v_c}\right)^{\otimes L}
  \right].
\end{equation}
Applying both channels to $\bar{\rho}_C$, we obtain
\begin{align*}
  \left\|
    \Omega_{C,L}
    -\mathbb E_{\nu_C}\left[
      \bigotimes_c\left(\ket{v_c}\!\bra{v_c}\right)^{\otimes L}
    \right]
  \right\|_1
  &\le\left\|
    \bigotimes_c\mathcal W_{c,L}
    -\bigotimes_c\mathcal M_{c,L}
  \right\|_\diamond\\
  &\le\sum_c\frac{2d_c}{L+d_c}
   \le\frac{2q}{\theta_0L}.
\end{align*}
The last inequality uses $d_c\le\theta_0^{-1}$ from
Lemma~\ref{lem:learner-spectral-projection}. This proves~(i).

To prove~(ii), for each color $c$, we define the good event POVM operator as 
\begin{equation}
    M_c:=d_c\int\ketbra{v}{v}\cdot \1\lrb{\abs{\braket{\xbar_c}{v}}^2\ge1- \f{\vepclone}{q}}\,\mathrm d\mu_c(v).
\end{equation}
We claim that
\begin{equation}
  M_c\succeq\p{\f{\vepclone}{q}}^{d_c-1}\ket{\xbar_c}\!\bra{\xbar_c}.
  \label{eq:learner-cloning-cap-dominance}
\end{equation}
For $d_c=1$, the claim holds trivially. 
For $d_c\ge2$, we will first show that $\ket{\xbar_c}$ is an eigenvector of $M_c$.
Let $U_c = 2\ketbra{\xbar_c}{\xbar_c} - I_{E_c}$ be a unitary that fixes $\ket{\xbar_c}$ and negates its orthogonal complement.
It is easy to see $ U_cM_cU_c^\dagger = M_c$ and therefore $U_cM_c\ket{\xbar_c} = M_cU_c\ket{\xbar_c}
 = M_c\ket{\xbar_c}$.
Notice that $M_c\ket{\xbar_c}$ is an eigenvector of $U_c$ with eigenvalue $1$, it follows that $M_c\ket{\xbar_c} = \kappa_c\ket{\xbar_c}$ for some scalar $\kappa_c$.

Under $\mu_c$, the variable $t=\abs{\braket{\xbar_c}{v}}^2$
has density $(d_c-1)(1-t)^{d_c-2}$ on $[0,1]$;
see~\cite[Eq.~(4.3)]{ZyczkowskiSommers2005}.
Therefore we have that
\begin{align}
  \bra{\xbar_c}M_c\ket{\xbar_c}
  &=d_c(d_c-1)\int_{1-\vepclone/q}^1 t(1-t)^{d_c-2}\,\mathrm dt\\
  &=\p{\f{\vepclone}{q}}^{d_c-1}\p{d_c-(d_c-1)\f{\vepclone}{q}}
  \ge\p{\f{\vepclone}{q}}^{d_c-1}.
\end{align}
Since $M_c$ is positive semidefinite, this proves
\eqref{eq:learner-cloning-cap-dominance}.

Taking tensor products of these operator inequalities and using
\eqref{eq:learner-cloning-joint-distribution}, we obtain
\begin{align}
  \Pr_{\nu_C}\lrb{\abs{\braket{\xbar_c}{v_c}}^2\ge1- \vepclone/q\text{ for every }c}
  &=\Tr\!\left[\rhobar_C\bigotimes_c M_c\right]\\
  &\ge\p{\f{\vepclone}{q}}^{\sum_c(d_c-1)}
    \bra{\xbar_C}\rhobar_C\ket{\xbar_C}\\
  &\ge\pproj\left(\frac{\vepclone}{q}\right)^{\sum_c(d_c-1)}
   \ge\pclone,
\end{align}
where we use the overlap guarantee from
Lemma~\ref{lem:learner-spectral-projection}.
This proves~(ii).
\end{proof}

\subsubsection{Preparation of the Cloned State $\Omega_{C,L}$}
\label{subsubsec:learner-clone-preparation}
In this section we show in detail how to implement the channels $\mathcal W_{c,L}$ using copies of
$\rho_c$ and the spectral filters from \Cref{sec:blockwise-spectral-projection}.
Readers who are mainly interested in the correctness of the learning algorithm can skip this subsection for forward to the next one.
The difficulty is that we do not know a basis of $E_c$.
We overcome this by preparing the totally antisymmetric state
associated with $E_c$~\cite[Section~2]{JexEtAl2003}: the unique
antisymmetric pure state on $d_c$ registers supported on
$E_c^{\otimes d_c}$.
One register of this state is maximally mixed on $E_c$, so it supplies
the additional register needed by Werner's optimal cloning channel. 

By the definition of Werner's optimal cloning channel, we can append $L-1$ copies of the maximally mixed states for each color block $c$ and $\rhobar_c$, and then project the whole state onto the symmetric subspace. 
The problem of this idea is the success probability will be exponentially small depending on $n$. 
To overcome it and maintain a high probability for the whole cloning channel, 
we add the output registers and implement the projection operations one by one.
And for each step, in order to amplify the success probability for the projection operation, we need the following theorem from \cite{GSLW19}.

\begin{theorem}[Singular vector transformation, rephrased from
{\cite[Theorem~1]{GSLW19}}]
\label{thm:learner-singular-vector-transformation}
Let $U$ be a unitary on a system $S$, and let
$\Pi,\widetilde\Pi$ be projectors.
Write
$\widetilde\Pi U\Pi=\sum_i\sigma_i\ket{u_i}\bra{v_i}$
for a singular value decomposition.
For any $0<\delta\le1$ and $0<\epsilon\le1/2$, define
\begin{equation}
  V:=\sum_{\sigma_i\ge\delta}\ket{u_i}\bra{v_i}.
\end{equation}
There is a unitary circuit $U_{\mathrm{svt}}$ whose induced operation
\begin{equation}
  \mathcal C(X)
  :=
  \operatorname{Tr}_{\mathrm{anc}}\!\left[
    U_{\mathrm{svt}}
    \bigl(X\otimes\ket{0}\!\bra{0}_{\mathrm{anc}}\bigr)
    U_{\mathrm{svt}}^\dagger
  \right]
\end{equation}
satisfies
\begin{equation}
  \left\|
    (\mathcal C\otimes\operatorname{id}_R)(\rho_{SR})
    -
    (V\otimes I_R)\rho_{SR}(V^\dagger\otimes I_R)
  \right\|_1
  \le\epsilon
\end{equation}
for every reference system $R$ and every density operator
$\rho_{SR}$ whose marginal on $S$ is supported on
$\operatorname{span}\{\ket{v_i}:\sigma_i\ge\delta\}$.
The circuit uses
$O(\delta^{-1}\log(1/\epsilon))$ queries to
$U$, $U^\dagger$, and the controlled reflections
$2\Pi-I$ and $2\widetilde\Pi-I$.
\end{theorem}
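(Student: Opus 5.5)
The plan is to obtain the claimed channel from quantum singular value transformation (QSVT) applied to the projected unitary $\widetilde\Pi U\Pi$, using an odd polynomial that approximates the sign function away from a window around zero. Let $A:=\widetilde\Pi U\Pi=\sum_i\sigma_i\ket{u_i}\bra{v_i}$. I would use the standard polynomial approximation of the sign function, as in \cite{GSLW19}. For every $\delta\in(0,1]$ and $\epsilon'\in(0,1/2]$, there is an odd real polynomial $P$ of degree $D=O(\delta^{-1}\log(1/\epsilon'))$ with $|P(x)|\le1$ on $[-1,1]$ and $|P(x)-1|\le\epsilon'$ for $x\in[\delta,1]$. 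By the QSVT theorem for odd polynomials, there is a phase sequence $\Phi\in\mathbb R^D$ such that the alternating circuit $U_\Phi$ satisfies $\widetilde\Pi U_\Phi\Pi=P^{(SV)}(A)=\sum_iP(\sigma_i)\ket{u_i}\bra{v_i}$. This circuit interleaves $U,U^\dagger$ with $\Pi$- and $\widetilde\Pi$-controlled phase rotations, each built from the controlled reflections $2\Pi-I$ and $2\widetilde\Pi-I$ and one ancilla qubit. It uses $D$ queries, which matches the claimed query count once $\epsilon'$ is polynomially related to $\epsilon$. I take $U_{\mathrm{svt}}$ to be this circuit and trace out its single ancilla at the end.

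The first key step is a pure-state error bound. Take $\ket\psi$ in $\operatorname{span}\{\ket{v_i}:\sigma_i\ge\delta\}\otimes R$. Write $U_{\mathrm{svt}}(\ket\psi\ket0)=(\widetilde\Pi\otimes\ket0\bra0)\,U_{\mathrm{svt}}\ket{\psi}\ket0+\ket{\mathrm{junk}}$. By the QSVT identity the first term equals $(P^{(SV)}(A)\otimes I_R)\ket\psi\ket0$, which is within $\epsilon'$ of $(V\otimes I_R)\ket\psi\ket0$ because $|P(\sigma_i)-1|\le\epsilon'$ on the support. Unitarity gives $\|\ket{\mathrm{junk}}\|^2=1-\|P^{(SV)}(A)\ket\psi\|^2\le1-(1-\epsilon')^2\le2\epsilon'$. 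Therefore $\|U_{\mathrm{svt}}\ket\psi\ket0-(V\otimes I)\ket\psi\ket0\|_2\le\epsilon'+\sqrt{2\epsilon'}$.

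The second key step converts this into the trace-norm bound. For pure states, $\|\ket a\bra a-\ket b\bra b\|_1\le2\|\ket a-\ket b\|_2$. Partial trace over the ancilla does not increase trace norm, and $V\ket\psi$ carries the ancilla in $\ket0$, so tracing it out yields exactly $(V\otimes I)\ket\psi\bra\psi(V^\dagger\otimes I)$. For a general mixed $\rho_{SR}$ with the stated support, I would decompose it into pure states supported on the same subspace and use convexity of the trace norm, or equivalently purify into a larger reference system and absorb it into $R$. This gives error at most $2(\epsilon'+\sqrt{2\epsilon'})$. Choosing $\epsilon'=\epsilon^2/50$ makes this at most $\epsilon$, and the degree remains $O(\delta^{-1}\log(1/\epsilon))$.

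I expect the main obstacle to be the bookkeeping in the QSVT identity rather than any estimate. One has to check that the odd-degree sequence really produces $\widetilde\Pi U_\Phi\Pi$ with the left and right singular vectors as stated, and that the ancilla returns to $\ket0$ on the good branch. The ancilla point matters: it is what lets the partial trace reproduce $V\rho V^\dagger$ exactly, rather than an ancilla-entangled version of it. I would take the phase-modulation construction and the sign-approximation polynomial directly from \cite{GSLW19} and verify only these two points.
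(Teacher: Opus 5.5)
Your argument is correct and follows essentially the proof behind the cited result. The paper gives no proof of its own; it imports \cite[Theorem~1]{GSLW19}, which GSLW prove exactly as you propose, by QSVT with an odd polynomial approximating the sign function. What you add is the step that ``rephrased'' hides. GSLW state their result as an operator-norm bound of $\epsilon$ between $V$ and a projected block of the circuit: ancilla prepared and postselected in $\ket{+}$, restricted to the singular subspaces with $\sigma_i\ge\delta$. You convert this into the channel and trace-norm form used here. Your unitarity bound $\sqrt{2\epsilon'}$ on the leakage, the rescaling $\epsilon'=\epsilon^2/50$, and the reduction of mixed $\rho_{SR}$ to pure states in $\operatorname{span}\{\ket{v_i}:\sigma_i\ge\delta\}\otimes R$ are all correct.

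One correction concerns the QSVT identity you invoke. A single alternating sequence $U_\Phi$ cannot realize an arbitrary bounded real odd $P$. On a singular pair with $\sigma=1$ it acts as the phase $e^{i\sum_j\phi_j}$, which forces $|P(1)|=1$; more generally it must satisfy GSLW's condition $|P|^2+(1-x^2)|Q|^2=1$. The sign approximant generally violates this. GSLW obtain real $P$ as $(\bra{+}\otimes\widetilde\Pi)\bigl(\ket{0}\bra{0}\otimes U_\Phi+\ket{1}\bra{1}\otimes U_{-\Phi}\bigr)(\ket{+}\otimes\Pi)$. This is your same one-ancilla circuit with the phase ancilla conjugated by Hadamards, so the good branch is still ``ancilla in $\ket{0}$'' and the rest of your argument goes through verbatim. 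Alternatively, use directly the complex $P_{\mathbb C}$ with $\operatorname{Re}P_{\mathbb C}=P$, realized by a single sequence with the ancilla left in $\ket{0}$. Then $|P_{\mathbb C}|\le1$ and $\operatorname{Re}P_{\mathbb C}\ge1-\epsilon'$ on $[\delta,1]$ give $|P_{\mathbb C}-1|\le\sqrt{2\epsilon'}$ there. With your choice $\epsilon'=\epsilon^2/50$ the total error becomes $4\sqrt{2\epsilon'}=4\epsilon/5\le\epsilon$.
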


The following lemma shows how to prepare the cloned state $\Omega_{C,L}$.

\begin{lemma}[Preparation of the cloned state]
\label{lem:learner-clone-preparation}
Assume that all spectral thresholds $\theta_c$ satisfy the conditions of
Lemma~\ref{lem:learner-guarded-filter} with width
$\Delta_{\mathrm{sp}}$, and that $d_c\ge1$ for every color.
For every integer $L\ge1$ and $\xi_{\mathrm{clone}}\in(0,1/2]$, 
we can implement a channel $\widetilde{\mathcal W}_{C,L}$ using copies of $\rho$,
such that, for every state $\sigma$ whose input marginal is supported on $\bigotimes_c E_c$,
\begin{equation}
  \left\|
    (\widetilde{\mathcal W}_{C,L}\otimes\operatorname{id})(\sigma)
    -\left(\bigotimes_c\mathcal W_{c,L}
      \otimes\operatorname{id}\right)(\sigma)
  \right\|_1\le\xi_{\mathrm{clone}}.
  \label{eq:learner-cloner-implementation-error}
\end{equation}
The sample complexity is
\begin{equation}
  \p{\f{1}{\theta_0}}^{O(1/\theta_0)}
  \poly(q,\Delta_{\mathrm{sp}}^{-1})\,
  \widetilde O\!\left(\frac{L^2}{\xi_{\mathrm{clone}}}\right).
  \label{eq:learner-cloner-implementation-cost}
\end{equation}
and the computational complexity is at most an additional factor
$n\lceil\log_2d\rceil+q$ times this bound.
\end{lemma}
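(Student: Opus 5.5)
The construction works color by color, since the target channel is a tensor product $\bigotimes_c\mathcal W_{c,L}$ and the errors will add by the triangle inequality. For a fixed color $c$, Werner's channel \eqref{eq:learner-werner-channel} is realized in three steps. First, append $L-1$ registers, each maximally mixed on $E_c$. Second, project onto $\Sym^L(E_c)$. Third, renormalize, which is automatic because the trace of $\Pi_{L,E_c}(X\otimes I^{\otimes(L-1)})\Pi_{L,E_c}$ equals $\binom{L+d_c-1}{L}/d_c$ times $\Tr X$. Two issues have to be handled: we do not have a basis for $E_c$, and a single symmetric projection succeeds only with probability $d_c/\binom{L+d_c-1}{L}$.

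\emph{Maximally mixed states on $E_c$.} For each needed ancilla I will prepare the totally antisymmetric state on $E_c^{\otimes d_c}$ and keep one register, which is then maximally mixed on $E_c$. To prepare it, start from $d_c$ fresh registers on $\mathcal H_c$. Apply the implemented spectral measurement $\widetilde{\mathcal M}_{\theta_c}$ from Lemma~\ref{lem:learner-guarded-filter} to each register, which is valid under the gap assumption. Then project onto the antisymmetric subspace, and amplify both filtering steps with Theorem~\ref{thm:learner-singular-vector-transformation}.

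Concretely, I will use copies of $\rho_c$ itself (or the $d_c$-fold product $\rho_c^{\otimes d_c}$) as the initial resource. On the high subspace this has eigenvalues at least $\theta_0$, so the antisymmetric projection of $(\Pihi_c\rho_c\Pihi_c)^{\otimes d_c}$ has weight equal to the product of its eigenvalues, which is at least $\theta_0^{d_c}$. Since $d_c\le1/\theta_0$, repeat-until-success costs $(1/\theta_0)^{O(1/\theta_0)}$ trials. That is the origin of the first factor in \eqref{eq:learner-cloner-implementation-cost}.

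\emph{Incremental symmetrization.} Instead of one projection onto $\Sym^L(E_c)$, I will build $\Sym^{\ell}$ from $\Sym^{\ell-1}$ for $\ell=2,\dots,L$. Adjoin one maximally mixed register on $E_c$ and project onto $\Sym^\ell(E_c)$. On $\Sym^{\ell-1}(E_c)\otimes E_c$, the operator $\Pi_{\ell}$ restricted there has singular values bounded below by roughly $\sqrt{1/\ell}$; I expect the exact value to come from the Pieri/Clebsch--Gordan decomposition, i.e.\ the overlap of $\Sym^{\ell-1}\otimes E_c$ with $\Sym^\ell$. Because the register is maximally mixed, the unnormalized step composes exactly into Werner's map up to the global normalization. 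So I apply Theorem~\ref{thm:learner-singular-vector-transformation} with $\Pi=$ the projector onto $\Sym^{\ell-1}(E_c)\otimes E_c$, $\widetilde\Pi=\Pi_{\ell,E_c}$, $U=I$, and $\delta\approx\ell^{-1/2}$. This realizes the isometry $V$ on the supported input, with each step costing $O(\sqrt{\ell}\log(1/\epsilon))$ reflections.

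The reflections about $\Sym^\ell(E_c)$ decompose into a permutation-symmetrization reflection, implementable with a Schur transform or a swap-based circuit on $\ell\lceil\log_2 d\rceil|B_c|$ qubits, composed with the projections $\Pihi_c^{\otimes\ell}$. The latter are again simulated through Lemma~\ref{lem:learner-guarded-filter}. One subtlety remains: $V$ equals $\Pi_\ell$ only up to rescaling of the nonuniform singular values. I will therefore need to check that the true Werner map is proportional to $\Pi_\ell$ applied to the maximally mixed extension, and that the resulting weights match $\mathcal W_{c,L}$. If the singular values on the relevant sector are uniform, this holds, since $\Sym^{\ell-1}\otimes E_c=\Sym^\ell\oplus(\text{other irrep})$ and the restriction of $\Pi_\ell$ to $\Sym^{\ell-1}\otimes E_c$ is a projection. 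The nonzero singular values are then all $1$, which makes $\delta=1$ and the amplification trivial. The real cost is then only success probability from the mixed ancilla, which amplitude-amplification-style SVT handles.

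\emph{Error and cost accounting.} Each use of the filter is set to accuracy $\xi_{\mathrm{clone}}/\poly(q,L,1/\theta_0)$, and each SVT step to the same order. Summing over the $q$ colors and $L$ steps gives a total trace distance at most $\xi_{\mathrm{clone}}$, uniformly over reference systems since all bounds are diamond-norm bounds. Multiplying the copies per filter call, which is $\widetilde O(\Delta_{\mathrm{sp}}^{-2}/\xi)$ with $\xi\sim\xi_{\mathrm{clone}}/L$, by the $O(L)$ calls gives the $\widetilde O(L^2/\xi_{\mathrm{clone}})$ scaling. The gate count follows by multiplying by the register width $n\lceil\log_2 d\rceil+q$.

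I expect the main obstacle to be the antisymmetric-state preparation: obtaining a lower bound on its success probability that depends only on $\theta_0$, and propagating filter errors through the resulting non-unitary postselection.
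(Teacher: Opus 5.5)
Your construction of a maximally mixed register on $E_c$ matches the paper's: you antisymmetrize $(\Pihi_c\rho_c\Pihi_c)^{\otimes d_c}$, use the weight $\prod_j\lambda_{c,j}\ge\theta_0^{d_c}$, and repeat until success. The gap is in the incremental symmetrization, which is what keeps the cost polynomial in $L$.

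\textbf{Why your amplification step fails.} Take $\Pi$ to be the projector onto $\Sym^{\ell-1}(E_c)\otimes E_c$ and $\widetilde\Pi=\Pi_{\ell,E_c}$. Since $\Sym^{\ell}(E_c)\subseteq\Sym^{\ell-1}(E_c)\otimes E_c$, you get $\widetilde\Pi\Pi=\Pi_{\ell,E_c}$. So the map $V$ from singular vector transformation is just $\Pi_{\ell,E_c}$, and its guarantee covers only inputs already supported on $\Sym^\ell(E_c)$. Your input $X_c\otimes I_{E_c}/d_c$ puts weight $1-p_{c,\ell-1}$ on the orthogonal complement of $\Sym^\ell(E_c)$ inside $\Sym^{\ell-1}(E_c)\otimes E_c$. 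There the singular value is $0$, so nothing is amplified.

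``Amplitude-amplification-style SVT'' cannot fix this while the ancilla is a single kept register. Amplification needs a pure initial state lying in the range of a projector you can reflect about. Splitting the mixed ancilla into pure components $\ket{\psi}\otimes\ket{w}$ does not help either, because their overlaps with $\Sym^\ell(E_c)$ depend on the component. For orthonormal $\ket{a},\ket{b}\in E_c$ and $\ket{\psi}=\ket{a}^{\otimes(\ell-1)}$, the overlap is $1$ for $\ket{w}=\ket{a}$ but $1/\ell$ for $\ket{w}=\ket{b}$. Amplifying component by component would therefore distort the weights of $\mathcal W_{c,L}$, and it would also require a basis of $E_c$.

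Without amplification, every step is a postselection on the single copy of the input. The per-color success probability is then $\prod_{\ell=1}^{L-1}p_{c,\ell}=\binom{L+d_c-1}{L}/d_c^{L}$, which is exponentially small in $L$ for $d_c\ge2$. That is incompatible with the $\widetilde O(L^2/\xi_{\mathrm{clone}})$ bound, and the learner needs $L$ to grow with $nd$.

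\textbf{The missing idea.} Keep the other $d_c-1$ registers $\mathsf F_c$ of the antisymmetric state as a purification of the ancilla. The initial vector $\ket{\psi}\otimes\ket{\operatorname{Alt}_{E_c}}$, with $\ket{\psi}\in\Sym^\ell(E_c)$, then lies in the range of $P_{c,\ell}=I_{\Sym^\ell(E_c)}\otimes\Phi_{E_c}$. Because $\Tr_{\mathsf F_c}\Phi_{E_c}=I_{E_c}/d_c$, every unit vector in that range is accepted by $Q_{c,\ell}=\Pi_{\ell+1,E_c}\otimes I_{\mathsf F_c}$ with the same probability $p_{c,\ell}=(\ell+d_c)/(d_c(\ell+1))\ge1/d_c$. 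This is bounded below uniformly in $\ell$, not of order $1/\ell$.

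Hence $P_{c,\ell}Q_{c,\ell}P_{c,\ell}=p_{c,\ell}P_{c,\ell}$, and all nonzero singular values of $Q_{c,\ell}P_{c,\ell}$ equal $\sqrt{p_{c,\ell}}$. SVT with $\delta=\sqrt{p_{c,\ell}}/2$ then implements the isometry $Q_{c,\ell}P_{c,\ell}/\sqrt{p_{c,\ell}}$ using $O(\sqrt{d_c}\log(1/\epsilon))$ reflections. Discarding $\mathsf F_c$ gives exactly one normalized Werner step, for every input and reference system.

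Both reflections are known operators on $\Sym^\ell(E_c)\otimes E_c^{\otimes d_c}$. Since $\dim\bigwedge^{d_c}E_c=1$, the antisymmetrizer on $\mathcal H_c^{\otimes d_c}$ restricts to $\Phi_{E_c}$. The average of the identity and the transpositions $(j,\ell+1)$ restricts to $\Pi_{\ell+1,E_c}$. So no basis of $E_c$, no spectral filters and no copies of $\rho$ enter the amplification. Composing the reflections with simulated $(\Pihi_c)^{\otimes\ell}$, as you propose, is unnecessary.

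\textbf{Two smaller omissions.}
\begin{enumerate}
\item The algorithm does not know $d_c$ and must learn it before preparing $\Phi_{E_c}$. One way is to estimate the antisymmetric-test acceptance probabilities, which are the elementary symmetric polynomials of degree $k\le1/\theta_0$ in the retained eigenvalues. These vanish for $k>d_c$ and are at least $\theta_0^{k}$ otherwise.
\item Conditioning on acceptance inflates filter errors by $O(1/p_*)$, where $p_*=\theta_0^{1/\theta_0}$. The filter accuracy must therefore be of order $p_*\theta_0\xi_{\mathrm{clone}}/(qL)$, not $\xi_{\mathrm{clone}}/\poly(q,L,1/\theta_0)$. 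The SVT amplification you mention for this preparation is not available, since there is no reflection about the unknown input $\rho_c^{\otimes d_c}$. Repeat-until-success on fresh copies suffices.
\end{enumerate}
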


\begin{proof}
Fix a nonempty color block $c$. Write $d_c=\dim E_c$.
Recall that $d_c \le 1/\theta_0$ according to \Cref{lem:learner-spectral-projection}.
Let $\Pi_{c,k}^{\mathrm{alt}}$ be the projector onto the antisymmetric
subspace of $\mathcal H_c^{\otimes k}$, 
\begin{equation}
  \Pi_{c,k}^{\mathrm{alt}}
  =\frac1{k!}\sum_{\pi\in\symgroup_k}
    \operatorname{sgn}(\pi)W_{c,\pi},
\end{equation}
where $W_{c,\pi}$ permutes the $k$ copies of $\mathcal H_c$ according to $\pi$.

\paragraph{Finding the dimension and preparing the antisymmetric state.}
Project $k$ independent copies of $\rho_c$ into $E_c$, and then apply
$\Pi_{c,k}^{\mathrm{alt}}$. If $\lambda_{c,1},\ldots,\lambda_{c,d_c}$ are the
eigenvalues selected by $\Pihi_c$, the joint acceptance probability is
\begin{equation}
  \Tr\!\left(\Pi_{c,k}^{\mathrm{alt}}(\Pihi_c\rho_c\Pihi_c)^{\otimes k}\right)
  =\sum_{1\le j_1<\cdots<j_k\le d_c}
    \lambda_{c,j_1}\cdots\lambda_{c,j_k}.
  \label{eq:learner-alternating-rank-test}
\end{equation}
This is zero for $k>d_c$, and at least $\theta_0^k\ge \theta_0^{1/\theta_0} =: p_*$ for
$1\le k\le d_c$. We can therefore find $d_c$ by estimating the probabilities of the antisymmetric subspace projection test for $1\le k\le 1/\theta_0$.
Lemma~\ref{lem:learner-guarded-filter} and a union bound give the
correct dimensions for all colors with failure probability at most $\delta_{\mathrm{dim}}$, and sample complexity
$\theta_0^{-O(1/\theta_0)}\poly(q,\Delta_{\mathrm{sp}}^{-1})\log\p{1/\delta_{\mathrm{dim}}}$.

If any nonempty color block has estimated dimension zero, we return a fixed state on the  output registers and abort. Otherwise, we proceed using the estimated dimensions. When the dimensions of $E_c$'s are positive, this fallback occurs only on the dimension-estimation failure event and will introduce a $2\delta_{\mathrm{dim}}$ $1$-norm error term, which we will discuss later.
On the other hand, if the true dimension of $E_c$ is zero, all eigenvalues of $\rho_c$ lie below $\theta_c$, we will have $\Fprod(\rho) < f_{\min}$ and the learning problem becomes trivial, as follows:
\begin{equation}
  \Fprod(\rho)
  \le \lambda_{\max}(\rho_c)
  < \theta_c
  \le 2\theta_0
  = \frac{f_{\min}\vepproj}{q}
  < f_{\min}.
\end{equation}
At $k=d_c$, the accepted subspace $\wedge^{d_c}E_c$ is one-dimensional.
For any orthonormal basis $\{\ket{e_{c,j}}\}_{j=1}^{d_c}$ of $E_c$, write
\begin{equation}
  \ket{\operatorname{Alt}_{E_c}}
  :=\frac1{\sqrt{d_c!}}\sum_{\pi\in\symgroup_{d_c}}
    \operatorname{sgn}(\pi)
    \ket{e_{c,\pi(1)}\cdots e_{c,\pi(d_c)}},
  \qquad
  \Phi_{E_c}:=\ketbra{\operatorname{Alt}_{E_c}}{\operatorname{Alt}_{E_c}}.
\end{equation}
A change of basis changes this vector only by a global phase.
Moreover,
\begin{equation}
  \label{eq:learner-slater-preparation}
    \Pi_{c,d_c}^{\mathrm{alt}}
    (\Pihi_c\rho_c\Pihi_c)^{\otimes d_c}
    \Pi_{c,d_c}^{\mathrm{alt}}
    =\left(\prod_{j=1}^{d_c}\lambda_{c,j}\right)\Phi_{E_c},
\end{equation}
where $\prod_{j=1}^{d_c}\lambda_{c,j}
    \ge\theta_0^{d_c}\ge p_*$.
Thus the joint projection and antisymmetric test prepares $\Phi_{E_c}$
with probability at least $p_*$. If $\mathsf{P}_c$ denotes one of its
registers and $\mathsf{F}_c$ denotes the remaining $d_c-1$, then
\begin{equation}
  \Tr_{\mathsf{F}_c}(\Phi_{E_c})=\frac{\Pihi_c}{d_c} = \f{I_{E_c}}{d_c}.
  \label{eq:learner-slater-marginal}
\end{equation}
Now we are going to discuss how the error from the high-eigenvalue projector we implement would affect the $\Phi_{E_c}$ we prepare here.
Approximate spectral measurements may leave a small component outside
$E_c$, so we control the error before conditioning on acceptance.
Assume that $d_c$ has been correctly identified and that each spectral
measurement has diamond-norm error at most
$\xi_{\mathrm{proj}}$. Let $\tilde{\Phi}_{E_c}$ be the accepted,
subnormalized output, and set
$p_c:=\prod_{j=1}^{d_c}\lambda_{c,j}\ge p_*$ and
$\widetilde p_c:=\Tr(\tilde{\Phi}_{E_c})$. 
As we invoke the spectral filter $d_c$ times, we have
\begin{equation}
  \|\tilde{\Phi}_{E_c}-p_c\Phi_{E_c}\|_1
  \le d_c\xi_{\mathrm{proj}},
  \qquad
  |\widetilde p_c-p_c|
  \le d_c \xi_{\mathrm{proj}}.
\end{equation}
For $0<\vep_{\mathrm{alt}}\le1$, choosing
$d_c\xi_{\mathrm{proj}}\le p_*\vep_{\mathrm{alt}}/4$
therefore ensures $\widetilde p_c\ge p_*/2$ and
\begin{equation}
  \left\|\frac{\tilde{\Phi}_{E_c}}{\widetilde p_c}
              -\Phi_{E_c}\right\|_1
  \le
  \frac{\|\tilde{\Phi}_{E_c}-p_c\Phi_{E_c}\|_1
        +|\widetilde p_c-p_c|}{\widetilde p_c}
  \le\vep_{\mathrm{alt}}.
\end{equation}
Thus conditioning on acceptance amplifies the accumulated error by at
most a factor $O(p_*^{-1})$. Since $d_c\le\theta_0^{-1}$, it suffices
to implement each spectral filter to error $\xi_{\mathrm{proj}} = O(p_*\vep_{\mathrm{alt}}\theta_0)$. 
We conclude that the normalized state
$\tilde{\Phi}_{E_c}/\widetilde p_c$ differs from $\Phi_{E_c}$ by at most
$\vep_{\mathrm{alt}}$ in trace norm. By \Cref{lem:learner-guarded-filter}, one
attempt uses
$\widetilde O(1/(p_*\Delta_{\mathrm{sp}}^2\theta_0^2\vep_{\mathrm{alt}}))$
copies of $\rho_c$. Since each attempt succeeds with probability at
least $p_*/2$, repeating the preparation uses
\begin{equation}
  \widetilde O\!\left(
    \frac{1}{p_*^2\Delta_{\mathrm{sp}}^2\theta_0^2\vep_{\mathrm{alt}}}
  \right)
  \label{eq:learner-slater-preparation-cost}
\end{equation}
copies in expectation.

\paragraph{Implementing Werner's optimal cloning channel.}
We now explain how to use the antisymmetric state $\Phi_{E_c}$ to implement Werner's optimal cloning channel.
The case $L=1$ is the identity channel, so assume $L\ge2$.
For the moment, consider an exact preparation of $\Phi_{E_c}$.
Suppose the current $\ell$ registers have a state $X_c$ supported on
$\Sym^\ell(E_c)$. Append $\Phi_{E_c}$ and project the current registers
together with $\mathsf P_c$ onto $\Sym^{\ell+1}(E_c)$. By
\eqref{eq:learner-slater-marginal}, tracing out $\mathsf F_c$ gives
the accepted, subnormalized output
\begin{equation}
  \Pi_{\ell+1,E_c}
  \left(X_c\otimes\frac{I_{E_c}}{d_c}\right)
  \Pi_{\ell+1,E_c}.
\end{equation}
Note that $\dim\Sym^\ell(E_c)=\binom{\ell+d_c-1}{\ell}$.
The acceptance probability is
\begin{equation}
  p_{c,\ell}
  =\frac{\dim\Sym^{\ell+1}(E_c)}{d_c\dim\Sym^\ell(E_c)}
  =\frac{\ell+d_c}{d_c(\ell+1)}
  \ge\frac1{d_c}.
  \label{eq:learner-one-clone-success}
\end{equation}
In particular, it is independent of $X_c$. Conditioning on acceptance
therefore implements the channel
\begin{equation}
  \mathcal W_{\ell\to\ell+1}^{E_c}(X_c)
  :=
  \frac{\dim\Sym^{\ell}(E_c)}{\dim\Sym^{\ell+1}(E_c)}
  \Pi_{\ell+1,E_c}(X_c\otimes I_{E_c})\Pi_{\ell+1,E_c}.
  \label{eq:learner-one-clone-channel}
\end{equation}
To avoid multiplying the acceptance probabilities over the $L-1$
additions, we need to keep the $\mathsf F_c$ register and amplify the acceptance probability for each step.
Consider the subspace
$\Sym^\ell(E_c)\otimes E_c^{\otimes d_c}$, and define the projectors
\begin{equation}
  P_{c,\ell}:=I_{\Sym^\ell(E_c)}\otimes\Phi_{E_c},
  \qquad
  Q_{c,\ell}:=\Pi_{\ell+1,E_c}\otimes I_{\mathsf F_c}.
  \label{eq:learner-amplification-projectors}
\end{equation}
By \eqref{eq:learner-one-clone-success}, every unit vector in the
range of $P_{c,\ell}$ is accepted by $Q_{c,\ell}$ with probability
$p_{c,\ell}$. Consequently,
\begin{equation}
  P_{c,\ell}Q_{c,\ell}P_{c,\ell}
  =p_{c,\ell}P_{c,\ell}.
  \label{eq:learner-amplification-scalar}
\end{equation}
Note that $P_{c,\ell}Q_{c,\ell}Q_{c,\ell}P_{c,\ell} = P_{c,\ell}Q_{c,\ell}P_{c,\ell} = p_{c,\ell}P_{c,\ell}$,
it follows that all nonzero singular values of
$Q_{c,\ell}P_{c,\ell}$ equal $\sqrt{p_{c,\ell}}$, and
$Q_{c,\ell}P_{c,\ell}/\sqrt{p_{c,\ell}}$ is an isometry on the
range of $P_{c,\ell}$.

We apply Theorem~\ref{thm:learner-singular-vector-transformation}
with $ U=I, \Pi=P_{c,\ell}, \widetilde\Pi=Q_{c,\ell}$, and therefore
$\widetilde{\Pi} U \Pi = Q_{c,\ell} P_{c,\ell}$;
and $ \delta=\sqrt{p_{c,\ell}}/2,
  \epsilon=\vep_{\mathrm{svt}}. $ Since every nonzero singular value of
$Q_{c,\ell}P_{c,\ell}$ equals $\sqrt{p_{c,\ell}}$,
the theorem applies to the entire range of $P_{c,\ell}$.
Including the appended antisymmetric state, the ideal isometry is
\begin{equation}
  \frac{Q_{c,\ell}P_{c,\ell}}{\sqrt{p_{c,\ell}}}
  \bigl(\ket{\psi_c}\otimes\ket{\operatorname{Alt}_{E_c}}\bigr)
  :=
  \frac{
    (\Pi_{\ell+1,E_c}\otimes I_{\mathsf F_c})
    \bigl(\ket{\psi_c}\otimes
          \ket{\operatorname{Alt}_{E_c}}\bigr)
  }{\sqrt{p_{c,\ell}}},
  \qquad
  \ket{\psi_c}\in\Sym^\ell(E_c).
  \label{eq:learner-one-clone-isometry}
\end{equation}
Discarding $\mathsf F_c$ from its output gives
$\mathcal W_{\ell\to\ell+1}^{E_c}$ by
\eqref{eq:learner-one-clone-channel}.

By \Cref{thm:learner-singular-vector-transformation},
we have a circuit which produces an output state within
$\vep_{\mathrm{svt}}$ in trace distance of the
output of $\mathcal W_{\ell\to\ell+1}^{E_c}$.
This bound holds for the whole state $\bar{\rho}_C$, 
including the other color blocks as the reference system.
Since $U=I$, the queries required by the theorem are
controlled applications of $2P_{c,\ell}-I$ and
$2Q_{c,\ell}-I$, where we will show how to implement them in \Cref{lemma:implementPQ}.
The total number of these operations is
\begin{equation}
  O\!\left(
    \frac{1}{\sqrt{p_{c,\ell}}}
    \log\frac1{\vep_{\mathrm{svt}}}
  \right)  =
  O\!\left(
    \sqrt{d_c}\log\frac1{\vep_{\mathrm{svt}}}
  \right),
  \label{eq:learner-amplified-addition-cost}
\end{equation}
where we used $p_{c,\ell}\ge1/d_c$.

We can further compose them to the desired channel:
\begin{equation}
  \mathcal W_{L-1\to L}^{E_c}\circ\cdots\circ
  \mathcal W_{1\to2}^{E_c}
  =
  \mathcal W_{c,L}.
  \label{eq:learner-sequential-cloner}
\end{equation}
The trace-norm errors therefore sum to at most
$\sum_{\ell=1}^{L-1}\vep_{\mathrm{svt}} = (L-1)\vep_{\mathrm{svt}}$,
by a standard hybrid argument.

\begin{lemma}\label{lemma:implementPQ}
Let $r=d_c\ge1$, $\ell\ge1$, and $m_c=|B_c|\lceil\log_2d\rceil$.
On $\Sym^\ell(E_c)\otimes E_c^{\otimes r}$, the
controlled reflections $2P_{c,\ell}-I$ and $2Q_{c,\ell}-I$ can be
implemented using $r^{O(r)}(m_c+1)$ and
$\widetilde O(\ell(m_c+1))$ elementary gates, respectively,
allowing arbitrary one-qubit gates. Both circuits return their
ancillas to zero and require no copies of $\rho$ or knowledge of a
basis of $E_c$.
\end{lemma}

\begin{proof}
Consider the known Hermitian operators
\begin{equation}
  A_P=I\otimes\Pi_{c,r}^{\mathrm{alt}},
  \qquad
  A_Q=\frac{I+\sum_{j=1}^{\ell}\swap(j,\ell+1)}{\ell+1}
       \otimes I_{\mathsf F_c},
\end{equation}
where $\swap(j,\ell+1)$ swaps the $j$-th and $(\ell+1)$-st block-copy registers, and register $\ell+1$ is $\mathsf P_c$.
Their restrictions to $\Sym^\ell(E_c)\otimes E_c^{\otimes r}$ are $P_{c,\ell}$ and $Q_{c,\ell}$:
the first follows from $\dim\bigwedge^r E_c=1$, and the second from
the coset decomposition of $\symgroup_{\ell+1}$ by $\symgroup_\ell$,
since the first $\ell$ registers are symmetric.

For either operator $A$, prepare a uniform label over its permutation
summands, apply the labelled signed permutation, and unprepare the
label. The resulting unitary $V$ satisfies $J^\dagger VJ=A$, where
$J\ket{\psi}=\ket{0}_{\mathrm{anc}}\ket{\psi}$.
Set $R=V^\dagger \p{2JJ^{\dagger} - I}V$. Then
\begin{equation}
  J^\dagger RJ=2A^2-I.
\end{equation}
On $\Sym^\ell(E_c)\otimes E_c^{\otimes r}$, $A$ is a projector, so
$J^\dagger R J=2A-I$. For every $\ket{\psi}\in\Sym^\ell(E_c)\otimes E_c^{\otimes r}$,
the vector $J^\dagger R J\ket{\psi}$ has norm
$\|\ket{\psi}\|$, because $2A-I$ is unitary on $\Sym^\ell(E_c)\otimes E_c^{\otimes r}$.
Since $R$ and $J$ are isometries, $R J\ket{\psi}$ has the
same norm. It therefore has no component orthogonal to
$\operatorname{ran}J$, and
\begin{equation}
  R J\ket{\psi}=J(2A-I)\ket{\psi}.
\end{equation}
Thus $R$ implements the reflection and returns the
ancillas to zero.
Controlling only the middle phase $2JJ^\dagger - I$ gives its controlled version.
For $A_P$, enumerating $r!$ permutations, each using at most $r$
register swaps, gives the first gate bound. For $A_Q$, selecting
among $\ell+1$ terms costs $O(\ell(m_c+\log(\ell+1)))$ gates,
giving the second bound.
\end{proof}

\paragraph{Overall error and complexity.}
Let $N:=q(L-1)$ and $\tau:=\xi_{\mathrm{clone}}/(8N)$.
Estimate all dimensions with failure probability at most $\delta_{\mathrm{dim}} = \xi_{\mathrm{clone}}/4$,
and choose
$\vep_{\mathrm{alt}}=\vep_{\mathrm{svt}}=\tau$.
For each addition, prepare an independent antisymmetric state $\Phi_{E_c}$
using fresh copies of $\rho$. 
We always try to prepare it with
$O(p_*^{-1}\log(1/\tau))$ attempts and output a fixed state
if all attempts fail. The abort probability is at most $\tau$,
so the state which we prepare differs from $\Phi_{E_c}$
by at most $3\tau$ in trace norm.

Replacing all $\tilde{\Phi}_{E_c}/\tilde{p}_c$ we prepared by the ideal $\Phi_{E_c}$
therefore changes the final output by at most $3N\tau$. 
The amplification errors contribute at most
$N\vep_{\mathrm{svt}} = N\tau$.
The total trace-norm error is at most
\begin{equation}
  2\cdot\frac{\xi_{\mathrm{clone}}}{4}+3N\tau+N\tau
  =\xi_{\mathrm{clone}},
\end{equation}
proving \eqref{eq:learner-cloner-implementation-error}.

By \eqref{eq:learner-slater-preparation-cost}, the total number
of copies, including dimension estimation, is
\begin{equation}
  \widetilde O\!\left(
    \frac{q(L-1)}
    {p_*^2\Delta_{\mathrm{sp}}^2\theta_0^2\vep_{\mathrm{alt}}}
  \right)
  +
  \theta_0^{-O(1/\theta_0)}
  \poly(q,\Delta_{\mathrm{sp}}^{-1})\log(1/\delta_{\mathrm{dim}})
  =
  \left(\frac1{\theta_0}\right)^{O(1/\theta_0)}
  \poly(q,\Delta_{\mathrm{sp}}^{-1})
  \widetilde O\!\left(\frac{L^2}{\xi_{\mathrm{clone}}}\right).
\end{equation}
Singular vector transformation uses only known permutations and phase operations,
and hence requires no additional copies.
We also need to implement the controlled permutation operators with elementary gates, combining with the spectral filter gate bounds from \Cref{lem:learner-guarded-filter}, we have another multiplicative factor $n\lceil\log_2d\rceil+q$ for the computational complexity.
\end{proof}

\begin{parameterbox}
Now we are going to explicitly state our choice of parameters which appear in this section.
These parameters will lead to the desired learning result, \Cref{thm:ProductLearner}.
Recall $\vep_0$, $q$, $\theta_0$, and $\pproj$ from the preceding
subsections. Set
\begin{equation}
  \begin{aligned}
    \vepclone&:=\frac{\vep_0}{10^4},\\
    \pclone&=\pproj\left(\frac{\vepclone}{q}\right)^{
      q(\lceil\theta_0^{-1}\rceil-1)},\\
    \xi_{\mathrm{clone}}&:=\frac{\pclone}{100}.
  \end{aligned}
\end{equation}
By Lemma~\ref{lem:learner-werner-cloning}, the good-vector event has
probability at least $\pclone$. Choose
\begin{equation}
  L\ge\max\left\{2,
    \left\lceil\frac{100q}{\theta_0\pclone}\right\rceil\right\}.
\end{equation}
Then the measure-and-prepare error in
\eqref{eq:learner-cloning-comparison} is at most $\pclone/50$, while
Lemma~\ref{lem:learner-clone-preparation} implements the cloning channel up to
trace-norm error at most $\pclone/100$. The accepted projection from the
preceding subsection has error at most
$\xi_{\mathrm{clone}}=\pclone/100$, so all three errors sum to at most
$\pclone/25$. We may increase
$L$ in the final step to supply enough copies to the high-fidelity
learner. Finally, the previous choices
$q=\Theta(\vep_0^{-3})$, $\theta_0=\Theta(\vep_0^5)$, and
$\pproj=\Theta(\vep_0)$ give
\begin{equation}
  \log\frac1{\pclone}
  =O\!\left(\vep_0^{-8}\log\frac1{\vep_0}\right).
\end{equation}
\end{parameterbox}
\subsection{The FPT Learning Algorithm}
\label{subsec:learner-complete}

With the subroutines above, we are now ready to state the complete learning algorithm.

\begin{protocol}[FPT closest product state learner]
\label{prot:fpt-product-learner}
Given $\vep,\delta$ and copies of $\rho$, proceed as follows.
\begin{enumerate}
  \item Use the parameters of the preceding subsections. Choose the
  clone count $L$ to satisfy
  $L\ge\max\{2,\lceil100q/(\theta_0\pclone)\rceil\}$ and to supply
  enough copies for the high-fidelity learner on every nonempty color block
  with accuracy $\vep_{\mathrm{hfl}}=\vepclone/q$ and failure probability $1/(100q)$.
  Set the projection-attempt repetition count $T$ and overall trial count $R$ to
  \begin{equation}
    T:=\left\lceil\frac{2}{\pproj}\log 10\right\rceil,
    \quad
    R:=\left\lceil\frac{2}{\pclone}\log\frac4\delta\right\rceil.
  \end{equation}
  \item In each of $R$ independent trials, draw a coloring and spectral
  thresholds as in Lemma~\ref{lem:learner-random-guard}.
  Try the blockwise spectral projection on fresh copies of
  $\rho$ at most $T$ times, and discard the trial if none accepts.
  On acceptance, apply the implemented $1\to L$ cloning channel, feed its $L$ physical
  registers for each nonempty color to the high-fidelity learner and
  record the tensor product $\ket z$ of the resulting product states.
  \item Add any fixed pure product state to the candidate list.
  Using fresh copies of $\rho$, estimate each candidate's fidelity
  to accuracy $\vep_0/16$ with failure probability at most
  $\delta/[2(R+1)]$. Return the candidate with the largest estimate.
\end{enumerate}
\end{protocol}

The next lemma relates the local learners' outputs for each color to the original state $\rho$. 

\begin{lemma}[The local learner outputs the closest product state]
\label{lem:learner-output-quality}
Suppose $F_C(\rho)\ge f_{\min}$, 
fix $\ket{x_C} = \bigotimes_{c\in[q]}\ket{x_c}$ achieving
$F_C(\rho)$ and recall that $f_c= \Fprod(x_c)$.
Assume that the coloring satisfies
$\sum_c(1-f_c^2)\le\vepcolor$ and that the cloned-state sample
$\bigotimes_{c\in[q]}\ket{v_c}$ satisfies
$\sum_c(1-|\langle\xbar_c|v_c\rangle|^2)\le\vepclone$.
Apply the high-fidelity learner to $L$ copies of each $\ket{v_c}$
with accuracy $\vepclone/q$ and failure probability
$1/(100q)$. With probability at least $99/100$, its product output
$\ket z=\bigotimes_c\ket{z_c}$ satisfies
\begin{equation}
  \bra z\rho\ket z
  \ge \Fprod(\rho)-\f{\vep_0}{50}
  \label{eq:learner-output-quality}
\end{equation}
\end{lemma}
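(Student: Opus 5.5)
Write $e_c := 1-|\langle z_c|x_c\rangle|^2$. The plan is to bound $e_{\mathrm{out}}:=\sum_c e_c$ by $O(\vepcolor+\vepproj+\vepclone)$ and then turn this into a fidelity statement using optimality of $\ket{x_C}$.

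\textbf{Step 1: the high-fidelity learner applies to every block.} For each color $c$, bound $\Fprod(v_c)$ from below. Chain the three vectors $x_c \to \xbar_c \to v_c$. We have $1-|\langle x_c|\xbar_c\rangle|^2\le\vepproj$ by Lemma~\ref{lem:learner-spectral-projection}, and $1-|\langle\xbar_c|v_c\rangle|^2\le\vepclone$ by hypothesis. The coloring hypothesis gives $1-f_c\le 1-f_c^2\le\vepcolor$. Infidelities satisfy an approximate triangle inequality (via the sine distance $\sqrt{1-|\langle a|b\rangle|^2}$). Let $u_c$ be the closest product vector to $x_c$. Then $\sqrt{1-|\langle u_c|v_c\rangle|^2}\le\sqrt{1-f_c}+\sqrt{a_c}+\sqrt{b_c}$, where $a_c,b_c$ are the per-block projection and cloning infidelities. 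Summing the squares with Cauchy--Schwarz gives $\sum_c(1-\Fprod(v_c))\le 3(\vepcolor+\vepproj+\vepclone)\le 10^{-3}\vep_0$. In particular every block has $\Fprod(v_c)\ge 1-10^{-3}\ge 5/6+\vepclone/q$. So Theorem~\ref{thm:hfl-qudit} applies, with accuracy $\vepclone/q$, to the $L$ copies of $\ket{v_c}$. A union bound over the $q$ blocks gives joint success probability $\ge 99/100$. On this event $|\langle z_c|v_c\rangle|^2\ge\Fprod(v_c)-\vepclone/q$.

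\textbf{Step 2: bound $e_{\mathrm{out}}$.} From Step 1, $\sum_c(1-|\langle z_c|v_c\rangle|^2)\le\sum_c(1-\Fprod(v_c))+\vepclone$. Apply the sine-distance triangle inequality once more along $z_c\to v_c\to\xbar_c\to x_c$, summing with Cauchy--Schwarz. This gives $e_{\mathrm{out}}\le 9(\vepcolor+\vepproj+\vepclone)+O(\vepclone)$, which is a small constant multiple of $10^{-4}\vep_0$.

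\textbf{Step 3: from $e_{\mathrm{out}}$ to fidelity with $\rho$.} This is the main obstacle: closeness of $\ket z$ to $\ket{x_C}$ in infidelity must be converted into $\bra z\rho\ket z\ge F_C(\rho)-O(\cdot)$ without losing a factor depending on $q$.

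\emph{Naive route.} Use $1-|\langle z|x_C\rangle|^2\le e_{\mathrm{out}}$, since a product of overlaps satisfies $\prod(1-e_c)\ge1-\sum e_c$. Then use the standard bound $|\bra z\rho\ket z-\bra{x_C}\rho\ket{x_C}|\le 2\sqrt{1-|\langle z|x_C\rangle|^2}$. This loses a square root, giving $O(\sqrt{\vep_0\cdot10^{-4}})$. That is fine only if the constants make $2\sqrt{e_{\mathrm{out}}}\le\vep_0/50$, which fails for small $\vep_0$.

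\emph{Planned route.} Use optimality of $\ket{x_C}$ instead. Write $\ket z=\alpha\ket{x_C}+\ket{w}$ with $\ket w\perp\ket{x_C}$ and $\|w\|^2\le e_{\mathrm{out}}$. Then
\[
\bra z\rho\ket z\ge|\alpha|^2F_C(\rho)+2\mathrm{Re}\,\bar\alpha\bra{x_C}\rho\ket w+\bra w\rho\ket w .
\]
The key is to kill the cross term using first-order stationarity of $\ket{x_C}$ among block-product vectors. Do this blockwise and telescope: replace $x_c$ by $z_c$ one color at a time. At each step the remaining factors form a block-product vector, and Lemma~\ref{lem:learner-marginal-dominance}-type optimality makes $\ket{x_c}$ a top eigenvector of the conditional operator $T_c$. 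A replacement within one block then costs only $F_C(\rho)\,e_c$-type amounts (linear, no square root), because for a top eigenvector $\bra{z_c}T_c\ket{z_c}\ge|\langle z_c|x_c\rangle|^2\lambda_{\max}$. The conditional operators change as earlier blocks are replaced, and I would control this drift by the already-accumulated infidelity. Summing the steps should give $\bra z\rho\ket z\ge F_C(\rho)(1-e_{\mathrm{out}})-O(e_{\mathrm{out}})$.

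\textbf{Conclusion.} Combine with Fact~\ref{fact1}, $F_C(\rho)\ge\Fprod(\rho)$. The bound $e_{\mathrm{out}}\le10^{-3}\vep_0$ then yields \eqref{eq:learner-output-quality}. If the telescoping drift term proves delicate, I would fall back on the square-root bound and tighten the $10^{-4}$ constants to $\vep_0^2$-scale, accepting a worse exponent.
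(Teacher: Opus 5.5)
Your Steps~1 and~2 essentially match the paper's argument: the sine-distance triangle inequality along $u_c,x_c,\xbar_c,v_c$ and then along $z_c,v_c,\xbar_c,x_c$, with $(a+b+c)^2\le 3(a^2+b^2+c^2)$. (Your Step~2 actually gives about $36\vep_0/10^4$, not the $10^{-3}\vep_0$ quoted in your conclusion; this is harmless.)

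The gap is in Step~3. In your telescoping, $\ket{x_c}$ is a top eigenvector of $T_c=\bra{x_{-c}}\rho\ket{x_{-c}}$. It is not a top eigenvector of the conditional operator $T^{(c)}$ built from the already replaced blocks $z_{<c}$ and the remaining $x_{>c}$. Write $\ket{z_c}=\alpha_c\ket{x_c}+\ket{y_c}$ with $y_c\perp x_c$, and set $\alpha_{<c}:=\prod_{j<c}\alpha_j$ and $E_{<c}:=\sum_{j<c}e_j$. The step-$c$ cross term is $2\operatorname{Re}\,\bar\alpha_c\bra{x_c}T^{(c)}\ket{y_c}$. Controlling the drift by the accumulated infidelity means using $\|T^{(c)}-|\alpha_{<c}|^2T_c\|_\infty=O(\sqrt{E_{<c}})$, and this bounds the term only by $O(\sqrt{e_cE_{<c}})$. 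Summed over colors, the bound is $\sum_c\sqrt{e_cE_{<c}}$, which for equal $e_c$ is $\Theta(\sqrt q\,e_{\mathrm{out}})$. Since $q=\Theta(\vep^{-3})$ while $e_{\mathrm{out}}$ can be of order $\vep_0$, this is far above $\vep_0/50$. That is exactly the $q$-dependent loss you set out to avoid, so the asserted bound $F_C(\rho)(1-e_{\mathrm{out}})-O(e_{\mathrm{out}})$ does not follow from the plan. The fallback does not prove the lemma as stated either. Here $\vepcolor,\vepproj,\vepclone$ are fixed at $\vep_0/10^4$, and pushing them to order $\vep_0^2$ changes $q$, $\theta_0$ and the exponent of $\pclone$.

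The missing idea is to expand once, globally, by the number of deviating blocks: $\ket z=\alpha\ket{x_C}+\ket{w_1}+\ket{w_{\ge2}}$. Here $\ket{w_1}$ collects the terms in which exactly one block lies in the orthogonal complement of $\ket{x_c}$, and $\ket{w_{\ge2}}$ the terms with at least two such blocks.

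Stationarity kills $w_1$ exactly. Each such term is proportional to $\ket{x_{-c}}\otimes\ket{y_c}$, and $\bra{x_C}\rho\,(\ket{x_{-c}}\otimes\ket{y_c})=\bra{x_c}T_c\ket{y_c}=F_C(\rho)\langle x_c|y_c\rangle=0$.

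The remainder is small in norm, not merely in squared norm. The per-block deviation events are independent with probabilities $e_c$, so $\|w_{\ge2}\|_2^2=\Pr[\text{at least two deviations}]\le\sum_{c<c'}e_ce_{c'}\le e_{\mathrm{out}}^2/2$. (The paper invokes Corollary~4.3 of Bakshi et al.\ in the form $\|w_{\ge2}\|_2^2\le2(1-|\alpha|)^2$.)

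Dropping $\bra{w_1+w_{\ge2}}\rho\ket{w_1+w_{\ge2}}\ge0$ and using $\|\rho\|_\infty\le1$ gives $\bra z\rho\ket z\ge|\alpha|^2F_C(\rho)-2|\alpha|\|w_{\ge2}\|_2\ge F_C(\rho)-(1+\sqrt2)e_{\mathrm{out}}$, with no dependence on $q$. This uses positivity of the whole quadratic form $\bra{w}\rho\ket{w}$ at once. A block-by-block telescoping with per-step error bounds does not exploit that positivity.
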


\begin{proof}
For each color, choose a product state $\ket{u_c}$ such that 
$|\langle u_c|x_c\rangle|^2=f_c$. The pure-state distance
$D(y,y'):=\sqrt{1-|\langle y|y'\rangle|^2}$ satisfies the triangle
inequality. Applying it along
$u_c,x_c,\xbar_c,v_c$ gives
\begin{align}
1-\Fprod(v_c)
&\le
\Bigl(
\sqrt{1-f_c}
+\sqrt{1-|\langle x_c|\xbar_c\rangle|^2}
+\sqrt{1-|\langle\xbar_c|v_c\rangle|^2}
\Bigr)^2 \\
&\le 3\Bigl[
(1-f_c)
+(1-|\langle x_c|\xbar_c\rangle|^2)
+(1-|\langle\xbar_c|v_c\rangle|^2)
\Bigr].\label{eq:learner:overall-ineq}
\end{align}
The coloring assumption gives
$\sum_c(1-f_c)\le\vepcolor$, and
Lemma~\ref{lem:learner-spectral-projection} gives
$\sum_c(1-|\langle x_c|\xbar_c\rangle|^2)\le\vepproj$.
Thus, for every $c$, the high-fidelity learner's promise holds at accuracy $\vep_{\mathrm{hfl}}=\vepclone/q$,
\begin{equation}
  1-\Fprod(v_c)
  \le 3(\vepcolor+\vepproj+\vepclone)
  <\frac16-\frac{\vepclone}{q}.
\end{equation}
Therefore, with probability at least
$1-q/(100q)=99/100$, all the outputs from the high-fidelity closest product state learner satisfy
$|\langle z_c|v_c\rangle|^2
 \ge\Fprod(v_c)-\vepclone/q$.
Applying the same triangle inequality along
$z_c,v_c,\xbar_c,x_c$ gives
\begin{align}
1-|\langle z_c|x_c\rangle|^2
&\le 3\Bigl[
    (1-|\langle z_c|v_c\rangle|^2)
   +(1-|\langle v_c|\xbar_c\rangle|^2)
   +(1-|\langle\xbar_c|x_c\rangle|^2)
  \Bigr] \\
&\le 3\Bigl[
    (1-\Fprod(v_c))+\frac{\vepclone}{q}
   +(1-|\langle v_c|\xbar_c\rangle|^2)
   +(1-|\langle\xbar_c|x_c\rangle|^2)
  \Bigr] \\
&\le 12\Bigl[
    (1-|\langle v_c|\xbar_c\rangle|^2)
   +(1-|\langle\xbar_c|x_c\rangle|^2)
   +(1-f_c)
  \Bigr]
  +\frac{3\vepclone}{q},
\end{align}
where for the third inequality we apply \Cref{eq:learner:overall-ineq}.
Summing over colors yields
\begin{equation}
  e_{\mathrm{out}}:=\sum_c(1-|\langle z_c|x_c\rangle|^2)
  \le12(\vepcolor+\vepproj+\vepclone)+3\vepclone
  =\frac{39\vep_0}{10^4}.
  \label{eq:learner-output-color-defect}
\end{equation}
We now use optimality of $\ket{x_C}$ to convert the sum of infidelity $e_{\mathrm{out}}=O(\vep)$ into an $O(\vep)$ loss in fidelity with $\rho$.
Write $\ket z=\alpha\ket{x_C}+\ket{w_1}+\ket{w_{\ge2}}$,
where $\ket{w_1}$ and $\ket{w_{\ge2}}$ contain terms in which,
respectively, one or at least two color blocks are orthogonal to
the corresponding $\ket{x_c}$. Put
$e_{\mathrm{out}, c}:=1-|\langle z_c|x_c\rangle|^2$. Measuring each color with
$\{\ket{x_c}\!\bra{x_c},I-\ket{x_c}\!\bra{x_c}\}$ gives independent
binary outcomes, with deviation probability $e_{\mathrm{out},c}$. Thus
$|\alpha|^2=\prod_c(1-e_{\mathrm{out},c})$ and 
\begin{equation}
1-|\alpha|^2\le e_{\mathrm{out}}.
\end{equation}
The probability of at least two deviations is
$\|w_{\ge2}\|_2^2$. Corollary~4.3 of
\cite{bakshi2024learningclosestproductstate} gives
\begin{equation}
  \|w_{\ge2}\|_2^2\le 2(1-|\alpha|)^2.
\end{equation}
Since $2|\alpha|\le1+|\alpha|$, it follows that
\begin{equation}
  2|\alpha|\|w_{\ge2}\|_2
  \le 2\sqrt2|\alpha|(1-|\alpha|)
  \le\sqrt2(1-|\alpha|^2)
  \le\sqrt2 e_{\mathrm{out}}.
\end{equation}
Because $\ket{x_C}$ maximizes $F_C(\rho)$, each $\ket{x_c}$ is an
eigenvector of its conditional operator, so
$\bra{x_C}\rho\ket{w_1}=0$. Positivity of $\rho$ gives
$\bra{w_1+w_{\ge2}}\rho\ket{w_1+w_{\ge2}}\ge0$; together with
$\|\rho\|_\infty\le1$ and $F_C(\rho)\le1$, this yields
\begin{align}
  \bra z\rho\ket z
  &\ge|\alpha|^2\bra{x_C}\rho\ket{x_C}
    +2\operatorname{Re}\!\left(\overline{\alpha}\,
      \bra{x_C}\rho\ket{w_{\ge2}}\right)\\
  &\ge |\alpha|^2\bra{x_C}\rho\ket{x_C}
    -2|\alpha|\|w_{\ge2}\|_2\\
  &\ge F_C(\rho)-(1+\sqrt2)e_{\mathrm{out}}.
\end{align}
Finally, $F_C(\rho)\ge\Fprod(\rho)$ and
$(1+\sqrt2)e_{\mathrm{out}}<\vep_0/50$, proving
\eqref{eq:learner-output-quality}.
\end{proof}

Now we are ready to prove our main theorem for learning the closest product state.

\ProductLearner*
\begin{proof}
Run Protocol~\ref{prot:fpt-product-learner}. If
$\Fprod(\rho)<f_{\min}=\vep_0/8$, the fixed product-state candidate
already meets the claimed bound. Otherwise, every coloring has
$F_C(\rho)\ge f_{\min}$. A coloring is good with probability at least
$3/4$, and all thresholds $\theta_c$'s simultaneously satisfy the gap condition with probability at least $7/8$;
both hold with probability at least $5/8$.

On this event, each projection attempt succeeds with probability at
least $\pproj/2$, so the chance that all $T$ attempts fail is at most
$1/10$. For an accepted ideal sample, the good-vector event of
Lemma~\ref{lem:learner-werner-cloning} has probability at least
$\pclone$, and Lemma~\ref{lem:learner-output-quality} then succeeds
with probability at least $99/100$. The total trace-norm error between
the implemented and ideal accepted outputs is at most $\pclone/25$.
Consequently, a trial records a candidate of fidelity at least
$\Fprod(\rho)-\vep_0/50$ with probability at least
\begin{equation}
  \frac58\cdot\frac9{10}
  \left(\frac{99\pclone}{100}-\frac{\pclone}{25}\right)
  \ge\frac{\pclone}{2}.
\end{equation}
By the choice of $R$, the probability that no trial records such a
candidate is at most $\delta/4$.

With probability at least $1-\delta/2$, all fidelity estimates are
within $\vep_0/16$ of their true values. On these two events, choosing
the largest estimate gives
\begin{equation}
  \bra{\wu}\rho\ket{\wu}
  \ge\Fprod(\rho)-\frac{\vep_0}{50}-\frac{\vep_0}{8}
  \ge\Fprod(\rho)-\vep.
\end{equation}
The total failure probability is less than $\delta$. We can take
$L=\widetilde O(q/(\theta_0\pclone)+ndq/\vepclone)$.
Finally,
$q=\Theta(\vep_0^{-3})$, $\theta_0=\Theta(\vep_0^5)$, and
$\log(1/\pclone)=O(\vep_0^{-8}\log(1/\vep_0))$.
The cloning channel's quadratic dependence on $L$ gives sample complexity
$2^{\widetilde O(\vep^{-8})}(nd)^2
\poly(\log(nd/\delta))$.
Its additional gate factor $n\lceil\log_2 d\rceil+q$,
together with the projection, high-fidelity learner, and
fidelity-estimation costs, gives computational complexity
$2^{\widetilde O(\vep^{-8})}n^3d^2
\poly(\log(nd/\delta))$.
\end{proof}

\bibliography{reference}

@misc{grewal2024improvedstabilizerestimationbell,
      title={Improved Stabilizer Estimation via Bell Difference Sampling}, 
      author={Sabee Grewal and Vishnu Iyer and William Kretschmer and Daniel Liang},
      year={2024},
      eprint={2304.13915},
      archivePrefix={arXiv},
      primaryClass={quant-ph},
      url={https://arxiv.org/abs/2304.13915}, 
}

@inproceedings{brandao2013product,
  title={Product-state approximations to quantum ground states},
  author={Brandao, Fernando GSL and Harrow, Aram W},
  booktitle={Proceedings of the forty-fifth annual ACM symposium on Theory of computing},
  pages={871--880},
  year={2013}
}

@inproceedings{fischer2005tolerant,
  title={Tolerant versus intolerant testing for boolean properties},
  author={Fischer, Eldar and Fortnow, Lance},
  booktitle={20th Annual IEEE Conference on Computational Complexity (CCC'05)},
  pages={135--140},
  year={2005},
  organization={IEEE}
}

@article{cramer2010efficient,
  title={Efficient quantum state tomography},
  author={Cramer, Marcus and Plenio, Martin B and Flammia, Steven T and Somma, Rolando and Gross, David and Bartlett, Stephen D and Landon-Cardinal, Olivier and Poulin, David and Liu, Yi-Kai},
  journal={Nature communications},
  volume={1},
  number={1},
  pages={149},
  year={2010},
  publisher={Nature Publishing Group UK London}
}

@misc{bao2025toleranttestingstabilizerstates,
      title={Tolerant testing of stabilizer states with a polynomial gap via a generalized uncertainty relation}, 
      author={Zongbo Bao and Philippe van Dordrecht and Jonas Helsen},
      year={2025},
      eprint={2410.21811},
      archivePrefix={arXiv},
      primaryClass={quant-ph},
      url={https://arxiv.org/abs/2410.21811}, 
}

@book{gs009,
 author = {Canonne, Cl{\'{e}}ment L.},
 title = {A Survey on Distribution Testing: Your Data is Big. But is it Blue?},
 year = {2020},
 pages = {1--100},
 doi = {10.4086/toc.gs.2020.009},
 publisher = {Theory of Computing Library},
 number = {9},
 series = {Graduate Surveys},
 URL = {http://www.theoryofcomputing.org/library.html},
}

@article{Gross_2021,
  title   = {Schur--Weyl Duality for the Clifford Group with Applications:
             Property Testing, a Robust Hudson Theorem, and de Finetti Representations},
  author  = {Gross, David and Nezami, Sepehr and Walter, Michael},
  journal = {Communications in Mathematical Physics},
  volume  = {385},
  pages   = {1325--1393},
  year    = {2021},
  doi     = {10.1007/s00220-021-04118-7}
}

@misc{arunachalam2026optimalstabilizertestinglearning,
  title         = {Optimal Stabilizer Testing and Learning with Limited Quantum Memory},
  author        = {Arunachalam, Srinivasan and Schatzki, Louis},
  year          = {2026},
  eprint        = {2607.02444},
  archivePrefix = {arXiv},
  primaryClass  = {quant-ph}
}

@article{montanaro2013survey,
  title={A survey of quantum property testing},
  author={Montanaro, Ashley and De Wolf, Ronald},
  journal={arXiv preprint arXiv:1310.2035},
  year={2013}
}

@article{harrow2013testing,
  title={Testing product states, quantum Merlin-Arthur games and tensor optimization},
  author={Harrow, Aram W and Montanaro, Ashley},
  journal={Journal of the ACM (JACM)},
  volume={60},
  number={1},
  pages={1--43},
  year={2013},
  publisher={ACM New York, NY, USA}
}

@misc{beckey2026optimalanalysisproducttest,
      title={An Optimal Analysis of the Product Test}, 
      author={Jacob Beckey and Fernando Granha Jeronimo and Pei Wu},
      year={2026},
      eprint={2607.21477},
      archivePrefix={arXiv},
      primaryClass={quant-ph},
      url={https://arxiv.org/abs/2607.21477}, 
}

@misc{she2022unitarypropertytestinglower,
      title={Unitary property testing lower bounds by polynomials}, 
      author={Adrian She and Henry Yuen},
      year={2022},
      eprint={2210.05885},
      archivePrefix={arXiv},
      primaryClass={quant-ph},
      url={https://arxiv.org/abs/2210.05885}, 
}

@inproceedings{soleimanifar2022testing,
  title={Testing matrix product states},
  author={Soleimanifar, Mehdi and Wright, John},
  booktitle={Proceedings of the 2022 Annual ACM-SIAM Symposium on Discrete Algorithms (SODA)},
  pages={1679--1701},
  year={2022},
  organization={SIAM}
}

@misc{bakshi2024learningclosestproductstate,
      title={Learning the closest product state}, 
      author={Ainesh Bakshi and John Bostanci and William Kretschmer and Zeph Landau and Jerry Li and Allen Liu and Ryan O'Donnell and Ewin Tang},
      year={2024},
      eprint={2411.04283},
      archivePrefix={arXiv},
      primaryClass={quant-ph},
      url={https://arxiv.org/abs/2411.04283}, 
}

@misc{arunachalam2024polynomialtimetoleranttestingstabilizer,
      title={Polynomial-time tolerant testing stabilizer states}, 
      author={Srinivasan Arunachalam and Arkopal Dutt},
      year={2024},
      eprint={2408.06289},
      archivePrefix={arXiv},
      primaryClass={quant-ph},
      url={https://arxiv.org/abs/2408.06289}, 
}

@misc{canonne2021pricetolerancedistributiontesting,
      title={The Price of Tolerance in Distribution Testing}, 
      author={Clément L. Canonne and Ayush Jain and Gautam Kamath and Jerry Li},
      year={2021},
      eprint={2106.13414},
      archivePrefix={arXiv},
      primaryClass={cs.DS},
      url={https://arxiv.org/abs/2106.13414}, 
}

@article{slade2023hypercube,
  author  = {Slade, Gordon},
  title   = {Self-avoiding walk on the hypercube},
  journal = {Random Structures \& Algorithms},
  volume  = {62},
  number  = {3},
  pages   = {689--736},
  year    = {2023},
  doi     = {10.1002/rsa.21117}
}

@inproceedings{GSLW19,
  author = {Gily{\'e}n, Andr{\'a}s and Su, Yuan and Low, Guang Hao
            and Wiebe, Nathan},
  title = {Quantum singular value transformation and beyond:
           exponential improvements for quantum matrix arithmetics},
  booktitle = {Proceedings of the 51st Annual ACM SIGACT Symposium
               on Theory of Computing},
  series = {STOC '19},
  year = {2019},
  pages = {193--204},
  publisher = {Association for Computing Machinery},
  doi = {10.1145/3313276.3316366},
  url = {https://doi.org/10.1145/3313276.3316366}
}

@article{WangZhang2025Samplizer,
  author = {Wang, Qisheng and Zhang, Zhicheng},
  title = {Time-Efficient Quantum Entropy Estimator via Samplizer},
  journal = {IEEE Transactions on Information Theory},
  volume = {71},
  number = {12},
  pages = {9569--9599},
  year = {2025},
  doi = {10.1109/TIT.2025.3576137},
  url = {https://doi.org/10.1109/TIT.2025.3576137}
}

@article{Rall2021BlockMeasurement,
  author = {Rall, Patrick},
  title = {Faster Coherent Quantum Algorithms for Phase, Energy,
           and Amplitude Estimation},
  journal = {Quantum},
  volume = {5},
  pages = {566},
  year = {2021},
  doi = {10.22331/q-2021-10-19-566},
  url = {https://doi.org/10.22331/q-2021-10-19-566}
}

@article{Werner1998,
  author = {Werner, R. F.},
  title = {Optimal cloning of pure states},
  journal = {Physical Review A},
  volume = {58},
  number = {3},
  pages = {1827--1832},
  year = {1998},
  doi = {10.1103/PhysRevA.58.1827},
  eprint = {quant-ph/9804001},
  archivePrefix = {arXiv},
  url = {https://arxiv.org/abs/quant-ph/9804001}
}

@inproceedings{Chiribella2011,
  author = {Chiribella, Giulio},
  title = {On quantum estimation, quantum cloning and finite quantum
           de {Finetti} theorems},
  booktitle = {Theory of Quantum Computation, Communication, and Cryptography},
  series = {Lecture Notes in Computer Science},
  volume = {6519},
  pages = {9--25},
  publisher = {Springer},
  year = {2011},
  doi = {10.1007/978-3-642-18073-6_2},
  eprint = {1010.1875},
  archivePrefix = {arXiv},
  primaryClass = {quant-ph},
  url = {https://arxiv.org/abs/1010.1875}
}

@article{ZyczkowskiSommers2005,
  author = {{\.{Z}}yczkowski, Karol and Sommers, Hans-J{\"u}rgen},
  title = {Average fidelity between random quantum states},
  journal = {Physical Review A},
  volume = {71},
  number = {3},
  pages = {032313},
  year = {2005},
  doi = {10.1103/PhysRevA.71.032313},
  eprint = {quant-ph/0311117},
  archivePrefix = {arXiv},
  url = {https://arxiv.org/abs/quant-ph/0311117}
}

@article{JexEtAl2003,
  author = {Jex, I. and Alber, G. and Barnett, S. M. and Delgado, A.},
  title = {Antisymmetric multi-partite quantum states and their applications},
  journal = {Fortschritte der Physik},
  volume = {51},
  number = {2--3},
  pages = {172--178},
  year = {2003},
  doi = {10.1002/prop.200310021},
  eprint = {quant-ph/0209062},
  archivePrefix = {arXiv},
  url = {https://arxiv.org/abs/quant-ph/0209062}
}

@misc{weinbrenner2026complete,
  author = {Weinbrenner, Lisa T. and Rico, Albert and Goodenough, Kenneth and Yu, Xiao-Dong and G{\"u}hne, Otfried},
  title = {Complete Hierarchies for the Geometric Measure of Entanglement},
  year = {2026},
  eprint = {2601.23243},
  archivePrefix = {arXiv},
  primaryClass = {quant-ph},
  doi = {10.48550/arXiv.2601.23243},
  url = {https://arxiv.org/abs/2601.23243}
}

@misc{issa2021applications,
  author = {Issa, Hassan and Mourad, Bassam and Abbas, Hassan},
  title = {New Applications to Combinatorics and Invariant Matrix Norms of an Integral Representation of Natural Powers of the Numerical Values},
  year = {2021},
  eprint = {2106.03810},
  archivePrefix = {arXiv},
  primaryClass = {math.FA},
  doi = {10.48550/arXiv.2106.03810},
  url = {https://arxiv.org/abs/2106.03810}
}

@misc{burchardt2025highdimensional,
  author = {Burchardt, Adam and Fei, Jiani and Grinko, Dmitry and Larocca, Martin and Ozols, Maris and Timmerman, Sydney and Visnevskyi, Vladyslav},
  title = {High-dimensional quantum Schur transforms},
  year = {2025},
  eprint = {2509.22640},
  archivePrefix = {arXiv},
  primaryClass = {quant-ph},
  doi = {10.48550/arXiv.2509.22640},
  url = {https://arxiv.org/abs/2509.22640}
}

@article{adiprasito2020theorems,
  title={Theorems of Carath{\'e}odory, Helly, and Tverberg without dimension},
  author={Adiprasito, Karim and B{\'a}r{\'a}ny, Imre and Mustafa, Nabil H and Terpai, Tam{\'a}s},
  journal={Discrete \& Computational Geometry},
  volume={64},
  number={2},
  pages={233--258},
  year={2020},
  publisher={Springer}
}

@misc{mehraban2024improvedboundstestinglow,
      title={Improved bounds for testing low stabilizer complexity states}, 
      author={Saeed Mehraban and Mehrdad Tahmasbi},
      year={2024},
      eprint={2410.24202},
      archivePrefix={arXiv},
      primaryClass={quant-ph},
      url={https://arxiv.org/abs/2410.24202}, 
}
\bibliographystyle{halpha-abbrv}

\clearpage
\appendix

\section{Proof of \Cref{thm:hfl-qudit}}\label{app:proof-of-[thm:hfl-qudit]}
In this section we give a proof of the qudit extension of the high fidelity product learner of \cite{bakshi2024learningclosestproductstate}. This appendix is primarily written by GPT6.0 Astra, with the authors checking correctness and making some relatively minor edits for reability. The proof itself is a straightforward extension of the argument in \cite[Section $4$]{bakshi2024learningclosestproductstate}. However some care must be taken to make the dependence on $d$ explicit, as the complexity of the main text learning algorithm directly inherits this complexity.

\HFL*

\begin{proof}
We adapt the high-fidelity learner of
\cite[Section~4]{bakshi2024learningclosestproductstate}, using the
qudit parametrization from its Remark~3.5. We explain the dependence
on $d$; the convergence and recursion arguments are unchanged.

\paragraph{Local optimization on qudits.}
For a block of $m$ qudits, work in local bases where the current
product-state candidate is $\ket{0^m}$. Put $D=m(d-1)$ and define
\begin{equation}
  z_{i,j}:=\bra{e_{i,j}}\rho\ket{0^m},
  \qquad
  \ket{e_{i,j}}:=\ket{0}^{\otimes(i-1)}\ket{j}
                  \ket{0}^{\otimes(m-i)},
  \quad i\in[m],\ j\in[d-1].
\end{equation}
The product update associated with a vector $a\in\C^D$ is
\begin{equation}
  \ket{\pi_{a/10}}
  :=\bigotimes_{i=1}^m
  \frac{\ket{0}+\sum_{j=1}^{d-1}(a_{i,j}/10)\ket{j}}
       {\sqrt{1+\|a_i\|_2^2/100}}.
\end{equation}
In the proofs of Corollary~4.7 and Theorem~4.9 of
\cite{bakshi2024learningclosestproductstate}, count a site as excited
when it is orthogonal to $\ket{0}$. These excitation indicators still
have a product Bernoulli distribution for a product vector, so the
weight-zero, weight-one, and higher-weight estimates are unchanged.
Replacing scalar local coordinates by vectors and scalar products by
inner products therefore gives the same optimality-gap and update
inequalities, with the same constants.

To estimate $z$, adapt Lemma~4.10 of
\cite{bakshi2024learningclosestproductstate}. Coherently transfer
$\ket{0^m}$ and the $D$ vectors $\ket{e_{i,j}}$ to distinct labels
in a register of dimension $D+2$, resetting the original registers
on this subspace; send its orthogonal complement to a separate
flag label with the remaining information in the original registers.
After discarding the original registers, all matrix elements
$z_{i,j}$ are preserved. Binary padding and Clifford shadows on the
label register estimate the $2D$ real and imaginary observables,
each with variance at most $3/2$. The vector median-of-means argument
of that lemma thus gives $\|a-z\|_2\le\eta$ with probability
$1-\beta$, using $O(D\eta^{-2}\log(1/\beta))$ copies.
Local basis changes use $\widetilde O(d)$ gates per site by
state preparation and its inverse; label transfer and shadow
processing cost $\widetilde O(md)$ per sample. The total work is
$\widetilde O((md)^2\eta^{-2})$.

Algorithm~4.11 and Corollary~4.17 of
\cite{bakshi2024learningclosestproductstate} now apply with their
gap parameter $C=1/6$: a seed of fidelity at least $2/3$ and
$\Fprod(\rho)\ge5/6$ yield loss at most $\tau$, with
$\widetilde O(md/\tau)$ copies and
$\widetilde O((md)^2/\tau)$ work.
Cap the estimator calls at the bounds in their Lemma~4.15 and return
a fixed product state if a cap is exceeded. On the event that all
estimates are accurate, these caps are never exceeded; hence this
also gives bounded resource use on unsuccessful runs.

\paragraph{The one-qudit seed.}
The one-qubit base case of
\cite[Algorithm~4.19]{bakshi2024learningclosestproductstate}
uses constant-size tomography. For one qudit, embed $\C^d$ in the
next power-of-two dimension, which is less than $2d$, and average
global-Clifford shadows $\widehat\rho$ of the embedded state, identifying
$\rho$ with its zero-padded embedding.
The shadow identity used in Bakshi et al.'s Lemma~4.10 gives
\begin{equation}
  \mathbb E\widehat\rho=\rho,
  \qquad
  \left\|\mathbb E(\widehat\rho-\rho)^2\right\|_\infty<4d.
\end{equation}
Matrix concentration gives operator-norm error at most $1/48$
from $O(d\log(d/\delta))$ copies. Project the estimate back to the
physical $\C^d$ and find a leading eigenvector to Rayleigh error
$1/48$. Since $\lambda_{\max}(\rho)>5/6$, its true value is at least
$\lambda_{\max}(\rho)-3/48>2/3$. Keeping the shadows as rank-one
terms and using the constant top-eigenvalue gap makes the eigensolve
cost $\widetilde O(d^2\log(1/\delta))$. The local optimizer then
achieves loss $\tau$, so the one-qudit call has the same
$\widetilde O(d/\tau)$ copy and $\widetilde O(d^2/\tau)$ work bounds.

\paragraph{Divide and conquer.}
Run the construction with internal accuracy $\tau=\vep_{\mathrm{hfl}}/2$; this
handles the non-strict promise in \Cref{thm:hfl-qudit}. Split the
current sites into balanced left and right halves, learn the two
marginals recursively, tensor their outputs, and apply the local
optimizer. Restricting a global product optimizer to a marginal
shows that every recursive call retains product fidelity at least
$5/6+\vep_{\mathrm{hfl}}>5/6+\tau$. Each successful child therefore has value
above $5/6$. The projector union bound of
\cite[Lemma~4.18]{bakshi2024learningclosestproductstate}
makes their tensor product a seed of value at least $2/3$.
At each node give half its failure budget to local optimization and
one quarter to each child; at a leaf, split the budget between the
seed routine and optimization. Induction gives success probability
at least $1-\delta$ and final loss at most $\tau\le\vep_{\mathrm{hfl}}$.
This is the recursion of
\cite[Algorithm~4.19 and Theorem~4.20]{bakshi2024learningclosestproductstate}.

At recursion depth $\ell$, the largest block has at most
$\lceil n/2^\ell\rceil$ sites. One fresh global copy supplies marginal
samples to all disjoint blocks at that depth, so the copy demands
take their maximum. Summing over depths gives
$\widetilde O(nd/\vep_{\mathrm{hfl}})$ copies. Gates and classical work add across
blocks; the sum of squared block sizes over all depths is $O(n^2)$,
giving $\widetilde O((nd)^2/\vep_{\mathrm{hfl}})$ work. Finite-precision operations
add only logarithmic factors after reserving constant fractions of
the accuracy and failure budgets for rounding and gate synthesis.
This proves \Cref{thm:hfl-qudit}.
\end{proof}

\end{document}